\documentclass[a4paper,10pt]{article}
\usepackage{comment}
\usepackage[utf8]{inputenc}
\usepackage{enumerate}
\usepackage[OT4]{fontenc}
\usepackage{graphicx}
\usepackage{amssymb}
\usepackage{amsmath}
\usepackage{amsfonts}
\usepackage{amsthm}
\usepackage{algorithm}
\usepackage{authblk}
\usepackage{fullpage}
\usepackage[noend]{algpseudocode}
\usepackage{algorithmicx}
\usepackage[textsize=scriptsize]{todonotes}
\usepackage[hidelinks]{hyperref}
\usepackage[capitalise]{cleveref}
\usepackage{thmtools}
\usepackage{thm-restate}
\usepackage{microtype}
\usepackage[shortlabels]{enumitem}
\usepackage{tikz}
\usepackage[all,cmtip]{xy}
\usepackage{caption}
\usepackage{tikzscale}

\usetikzlibrary{backgrounds}
\usetikzlibrary{patterns}
\usetikzlibrary{decorations.pathmorphing}
\usetikzlibrary{decorations.pathreplacing}

\newlength{\xywd}
\newcommand{\xyrightarrow}[2][]{%
  \sbox{0}{$\scriptstyle#1$}%
  \xywd=\wd0
  \sbox{0}{$\scriptstyle#2$}%
  \ifdim\wd0>\xywd \xywd=\wd0 \fi
  \xymatrix@C\dimexpr\xywd+1em\relax{{}\ar[r]^{#2}_{#1}&{}}%
}

\let\originalleft\left
\let\originalright\right
\renewcommand{\left}{\mathopen{}\mathclose\bgroup\originalleft}
\renewcommand{\right}{\aftergroup\egroup\originalright}


	

\raggedbottom


\newtheorem{theorem}{Theorem}[section]
\newtheorem{lemma}[theorem]{Lemma}

\newtheorem{definition}[theorem]{Definition}
\newtheorem{corollary}[theorem]{Corollary}

\def\poly{\operatorname{poly}}

\newcommand{\eps}{\epsilon}
\newcommand{\dist}{\delta}

\newcommand{\wei}{w}
\newcommand{\disth}[1]{\ensuremath{\dist^{#1}}}
\newcommand{\len}{\ell}
\newcommand{\rev}[1]{\ensuremath{#1}^{\mathrm{R}}}

\newcommand{\treecol}{\mathcal{T}}
\newcommand{\hops}[1]{\ensuremath{|#1|}}

\newcommand{\tfrom}{\ensuremath{\mathrm{from}}}
\newcommand{\tto}{\ensuremath{\mathrm{to}}}
\newcommand{\treepath}[2]{\ensuremath{{#1}[{#2}]}}
\newcommand{\treedep}[2]{\ensuremath{\mathrm{dep}_{#1}({#2})}}
\newcommand{\prob}{\ensuremath{\mathrm{Pr}}}

\newcommand{\tlink}{\ensuremath{\mathtt{link}}}
\newcommand{\tcut}{\ensuremath{\mathtt{cut}}}
\newcommand{\tdepth}{\ensuremath{\mathtt{depth}}}
\newcommand{\tparop}{\ensuremath{\mathtt{parent}}}

\newcommand{\troot}{\ensuremath{\mathrm{root}}}
\newcommand{\tpar}{\ensuremath{\mathrm{par}}}
\newcommand{\nil}{\ensuremath{\mathbf{nil}}}
\newcommand{\dstr}{\ensuremath{\mathcal{D}}}

\newcommand{\expround}{\mathrm{expround}}




\author[1]{Adam Karczmarz\thanks{Supported by ERC Consolidator
Grant 772346 TUgbOAT and the Polish National Science Centre 2018/29/N/ST6/00757 grant.}}
\author[2]{Jakub Łącki}

\affil[1]{Institute of Informatics, University of Warsaw, Poland}
\affil[ ]{\texttt{a.karczmarz@mimuw.edu.pl} \medskip}

\affil[2]{Google Research, New York, USA}
\affil[ ]{\texttt{jlacki@google.com}}




  \newcommand{\Ot}{\ensuremath{\widetilde{O}}}
\begin{document}

\date{}
	\title{Reliable Hubs for Partially-Dynamic\\All-Pairs Shortest Paths in Directed Graphs}
\maketitle
	
\begin{abstract}
We give new partially-dynamic algorithms for the all-pairs shortest paths problem in weighted directed graphs.
  Most importantly, we give a new \emph{deterministic} incremental algorithm for the problem that handles updates in $\Ot(mn^{4/3}\log{W}/\epsilon)$ total time (where the edge weights are from $[1,W]$) and explicitly maintains a $(1+\epsilon)$-approximate distance matrix. For a fixed $\epsilon>0$, this is the first deterministic partially dynamic algorithm for all-pairs shortest paths in directed graphs, whose update time is $o(n^2)$ regardless of the number of edges.
Furthermore, we also show how to improve the state-of-the-art partially dynamic \emph{randomized} algorithms for all-pairs shortest paths [Baswana et al. STOC’02, Bernstein STOC’13] from Monte Carlo randomized to Las Vegas randomized without increasing the running time bounds (with respect to the $\Ot(\cdot)$ notation).

Our results are obtained by giving new algorithms for the problem of dynamically maintaining \emph{hubs}, that is a set of $\Ot(n/d)$ vertices which~hit a shortest path between each pair of vertices, provided it has hop-length $\Omega(d)$.
We give new~subquadratic deterministic and Las Vegas algorithms for maintenance of hubs under either edge insertions or deletions.
\end{abstract}
\section{Introduction}\label{sec:introduction}
The sampling scheme of Ullman and Yannakakis~\cite{UY91} is a fundamental tool in designing dynamic algorithms for maintaining shortest path distances.
  Roughly speaking, the main idea is that if each vertex of the graph is sampled independently with probability $\Omega(\frac{d \ln n}{n})$, then with high probability\footnote{We say that a probabilistic statement holds \emph{with high (low) probability}. abbreviated w.h.p., if it holds with probability at least $1 - n^{-\beta}$ (at most $n^{-\beta}$, resp.), where $\beta$ is a constant that can be fixed arbitrarily.} the set of the sampled vertices has the following property.
If the shortest path between some vertices $u$ and $v$ contains more than $d$ edges, then this shortest path contains a sampled vertex\footnote{For simplicity, in the introduction we assume that the shortest paths are unique.}.
We call each set having this property a set of \emph{hubs}\footnote{Zwick \cite{Zwick02} uses the name \emph{bridging set} for an analogous concept.
Some works also use the term \emph{hitting set}, but hitting set is a more general notion, which in our paper is used in multiple different contexts.} of that graph.

The fact that one can easily obtain a set of hubs by random sampling is particularly useful for dynamic graph algorithms, since, by tuning constants in the sampling probability, one can assure that the set of hubs remains valid at each step (with high probability), while the graph is undergoing edge insertions and deletions, assuming the total number of updates is polynomial.
This property has been successfully exploited to give a number of dynamic graph algorithms, e.g.~\cite{DBLP:conf/soda/AbrahamCK17, BaswanaHS07, Bernstein16, DBLP:journals/jcss/DemetrescuI06, HenzingerKN14, HenzingerKN14a, Henzinger15, Roditty08, DBLP:journals/algorithmica/RodittyZ11, DBLP:journals/siamcomp/RodittyZ12}.
At the same time, the sampling approach also suffers from two drawbacks.
First, it yields Monte Carlo algorithms, which with some nonzero probability can return incorrect answers.
Second, it relies on the \emph{oblivious adversary} assumption, that is, it requires that the updates to the graph are independent of the randomness used for sampling hubs.
This becomes a substantial issue for problems where the answer to a query is not unique, e.g., for maintaining $(1+\epsilon)$-approximate distances or maintaining the shortest paths themselves (i.e. not just their lengths).
In a typical case, the choice of the specific answer to a query depends on the randomness used for vertex sampling, which in turn means that in each answer to a query the data structure is revealing its randomness.
Hence, if the following updates to the data structure depend on the specific values returned by the previous queries, the oblivious adversary assumption is not met.

In this paper we attempt to address both these issues.
We study the dynamic maintenance of \emph{reliable} hubs, that is we show how to maintain hubs using an algorithm that does not err, even with small probability.
In addition, in the incremental setting we give an algorithm that maintains hubs deterministically.
While the algorithms are relatively straightforward for unweighted graphs, making them also work in the weighted setting is a major challenge, which we manage to overcome.
We then show how to take advantage of our results on reliable hubs to obtain improved algorithms for the problem of maintaining all-pairs shortest paths in directed graphs.
In particular, we give a faster deterministic incremental algorithm and show how to improve the state-of-the-art decremental algorithms from Monte Carlo to Las Vegas.

\subsection{Our Contribution}

We study the problem of maintaining \emph{reliable} hub sets in the partially dynamic setting.
For the description, let us first assume the case when the graph is unweighted.
Our first observation is that
one can deterministically maintain the set of hubs $H_d$ under edge insertions in $\Ot(nmd)$ total time.
To that end, we observe that after an edge $uw$ is inserted, we may ensure the set of hubs $H_d$ is valid by extending it with both $u$ and $w$. This increases the size of $H_d$, and hence we have to periodically discard all the hubs and recompute them from scratch.
 
The deterministic computation of hubs has been studied before.
For unweighted digraphs, King~\cite{King99} showed how to compute a hub set $H_d$ of size $\Ot\left(\frac{n}{d}\right)$ in $\tilde{O}(n^2)$ time.
The algorithm, given shortest path trees up to depth $d$ from all vertices $v\in V$, 
computes a \emph{blocker-set}~\cite{King99} of these trees. (A~blocker-set $S$ of a rooted tree is a set such that, for each path from the root to a leaf of length $d$, that path contains a vertex of $S$ distinct from the root.)
Hence, if we work on unweighted graphs, in order to keep the set $H_d$ valid and relatively small, we can maintain shortest path
trees up to depth $d$ from all vertices using the Even-Shiloach algorithm \cite{EvenS81} in $O(nmd)$
total time, and recompute $H_d$ using King's algorithm every $\Ot(\frac{n}{d})$
insertions.
The total time needed for maintaining the hubs is therefore $\Ot(nmd)$.

Furthermore, we also show how to maintain reliable hubs in a decremental setting.
Suppose our goal is to compute a set of hubs that is guaranteed to be valid,
which clearly is not the case for the sampled hubs of~\cite{UY91}.
We show that if shortest path trees up to depth $d$ are maintained
using \emph{dynamic tree} data structures \cite{SleatorT83, Tarjan97}, one can recompute
a certainly-valid set $H_d$ in $\Ot\left(\frac{n^2}{d}\right)$ time using a Las Vegas algorithm.
To this end observe that one can deterministically \emph{verify} if a set $B\subseteq V$
is a blocker-set of $n$ shortest path trees up to depth $d$
in $\Ot(n\cdot|B|)$ time.
Therefore, a hub set $H_d$ can be found by combining the approaches
of \cite{UY91} and \cite{King99}: we may sample candidate hub sets
of size $\Ot(\frac{n}{d})$ until a blocker-set of the trees is found.
The number of trials is clearly constant with high probability.

We further extend this idea and show that the information whether $B$
is a blocker-set of a collection of $n$ shortest path trees up to depth $d$
can be maintained subject to the changes to these trees with only polylogarithmic overhead.
Consequently, we can detect 
when the sampled hub set $H_d$ (for any $d$) ceases
to be a valid hub set in $\Ot(nmd)$ total time.
The algorithm may make one-sided error (i.e., say that $H_d$ is no longer a valid hub set when it is actually still good),
but the probability of an error is low  if we assume that the update sequence does not depend on our random bits.
Subsequently we show how to extend this idea to improve the total update time to $\Ot(nm)$.
Assume we are given a valid $d$-hub set $H_d$.
We prove that in order to verify whether $H_{6d}$ is a valid $6d$-hub set, it suffices to check whether it hits sufficiently long paths between the elements of $H_d$.
We use this observation to maintain a family of reliable hub sets $H_{1},H_{6},\ldots,H_{6^i},\ldots,H_{6^k}$ (where $6^k\leq n$)
under edge deletions (or under edge insertions) in $\Ot(nm)$ total time.
Using that, we immediately improve the state-of-the-art decremental APSP
algorithms of Baswana et al. \cite{BaswanaHS07} (for the exact unweighted case) and
Bernstein~\cite{Bernstein16} (for the $(1+\eps)$-approximate case)
from Monte Carlo to Las Vegas (but still assuming an oblivious adversary) by only adding a polylogarithmic factor
to the total update time bound.

\paragraph{Generalization to weighted digraphs.}

Adapting the reliable hub sets maintenance (for both described approaches: the incremental
one and sample/verify)
to \emph{weighted} digraphs
turns out to be far from trivial.
This is much different from the sampling approach of Ullman and Yannakakis \cite{UY91}, which works regardless of whether
the input graph is weighted or not.
The primary difficulty is maintaining all shortest paths consisting of up to $d$ edges.
While in the unweighted case the length of a path is equal to the number of edges on this path, this is no longer true in the weighted case.

To bypass this problem 
we first relax our definition of hubs. For 
each $u,v\in V$ we require that some $(1+\eps)$-approximate shortest $u\to v$ path contains a hub on each subpath consisting of at least $d+1$ edges.
Next, we show that running King's blocker-set algorithm on a set of $(1+\eps)$-approximate shortest path trees up to depth\footnote{In such a tree (see Definition~\ref{def:appr-tree}), which is a subgraph of $G$,
for all $v\in V$, the path from the source $s$ to $v$ has length not exceeding $(1+\eps)$ times
the length of a shortest out of $s\to v$ paths in $G$ that use no more than $d$ edges; however the tree path itself can have arbitrary number of edges.} $d$
from all vertices of the graph
yields a hub set that hits paths
approximating the true shortest paths within a factor of $(1+\eps)^{\Theta(\log{n})}$.
Note that a collection of such trees can be maintained in $\Ot(nmd\log{W}/\eps)$ total
time subject to edge insertions, using Bernstein's $h$-SSSP algorithm~\cite{Bernstein16}
with $h=d$.

The $\Theta(\log n)$ exponent in the approximation ratio comes from the following difference between the weighted and unweighted case.
In a $(1+\eps)$-approximate shortest path tree up to depth $d$, the length of a $u\to v$ path is no more than $(1+\eps)$-times the length of the shortest $u\to v$ path
in~$G$ that uses at most $d$ edges. However, the $u\to v$ path in the tree might consist of any number of edges, in particular very few.
Pessimistically, all these trees have depth $o(d)$ and their blocker-set is empty, as there is no path of hop-length $\Omega(d)$ that we need to hit.
Note that this is an inherent problem, as the fact that we can find a small blocker-set in the unweighted case relies on the property that we want it to hit paths of $\Omega(d)$ edges.

Luckily, a deeper analysis shows that our algorithm can still approximate the length of a $s \to v$ path.
Roughly speaking, we split the $s \to v$ path $P$ into two subpaths of $d/2$ edges.
If each of these two subpaths are approximated in the $h$-SSSP data structures by paths of less than $d/4$ edges, we replace the $P$ by the concatenation of the two approximate paths from the $h$-SSSP data structures.
This way, we get a path that can be longer by a factor~of~$(1+\eps)$, but whose hop-length is twice smaller.
By repeating this process $O(\log n)$ times we obtain a path of constant hop-length whose length is at most $(1+\eps)^{\Theta(\log n)}$ larger than the length of~$P$.
The overall approximation ratio is reduced to $(1+\eps)$ by scaling $\eps$ by a factor of $\Theta(\log n)$.

\paragraph{Deterministic incremental all-pairs shortest paths.}
We now show how to apply our results on reliable hubs to obtain an improved algorithm for incremental all-pairs shortest paths problem in weighted digraphs.
 We give a deterministic incremental algorithm maintaining all-pairs $(1+\eps)$-approximate distance estimates in $\Ot(mn^{4/3}\log{W}/\eps)$ total time.


Let us now give a brief overview of our algorithm in the unweighted case.
First, we maintain the set of hubs $H_d$ under edge insertions as described above in $\Ot(nmd)$
total time.
Second, since the set $H_d$ changes and each vertex of the graph may eventually
end up in $H_d$, we cannot afford maintaining shortest path
trees from all the hubs (which is done in most algorithms that use hubs).
Instead, we use the folklore $\Ot(n^3/\eps)$ total time incremental $(1+\eps)$-approximate APSP algorithm \cite{Bernstein16, King99}
to compute distances between the hubs.
Specifically, we run it on a graph whose vertex set is $H_d$ and whose edges represent shortest paths between hubs of hop-lengths at most~$d$.
These shortest paths are taken from the shortest path trees up to depth $d$ from all $v\in V$ that are required
for the hub set maintenance.
We reinitialize the algorithm  each time the set $H_d$ is recomputed.
This allows us to maintain approximate pairwise distances
between the hubs at all times in $\Ot\left(m\left(n/d\right)^2/\eps\right)$ total time.

Finally, we show how to run a dynamic algorithm on top of a \emph{changing} set of hubs
by adapting the shortcut edges technique of Bernstein~\cite{Bernstein16}.
Roughly speaking, the final estimates are maintained using $(1+\eps)$-approximate
shortest path trees \cite{Bernstein16} up to depth $O(d)$ from all vertices $v$ on graph $G$ augmented with shortcuts from $v$ to $H_d$ and from $H_d$ to $v$.
This poses some technical difficulties as the set of shortcuts is undergoing both insertions (when a hub is added) and deletions (when the entire set of hubs is recomputed from scratch).
However, one can note that in the incremental setting the shortcuts
that no longer approximate the distances between their endpoints do not
break the approximation guarantee of our algorithm.
Eventually, we use shortcuts between all pairs of vertices of $G$
but only some of them are guaranteed (and sufficient)
to be up to date at any time.
The total time cost of maintaining this component is $\Ot(nmd/\eps)$.
Setting $d=\Ot(n^{1/3})$ gives the best update time.

It is natural to wonder if this approach could be made to work in the decremental setting. There are two major obstacles. First, it is unclear whether one can deterministically maintain a valid set of hubs under deletions so that only $O(1)$ vertices (in amortized sense) are added to the hub set after each edge deletion. Note that in extreme cases, after a single edge deletion the set of hubs may have to be extended with polynomially many new vertices. Second, all algorithms using the above approach of introducing shortcuts from and to hubs
also maintain a decremental shortest path data structure on a graph consisting of the edges of the original graph and shortcut edges representing distances between the hubs. If hubs were to be added, the graph maintained by the data structure would undergo both insertions (of shortcuts) and deletions (of edges of the original graph) which would make this a much harder, fully dynamic problem. Some earlier works dealt with a similar issue by ignoring some ``inconvenient" edge insertions~\cite{Henzinger16} or showing that the insertions are well-behaved~\cite{BR11}.
However, these approaches crucially depended on the graph being undirected.

\subsection{Related Work}
The dynamic graph problems on digraphs are considerably harder than their counterparts on undirected graphs.
An extreme example is the dynamic reachability problem, that is, transitive closure on directed graphs, and connectivity on undirected graphs.
While there exist algorithms for undirected graphs with polylogarithmic query and update times~\cite{Holm01,WN13,Thorup00,HK95,Kapron13}, in the case of directed graphs the best known algorithm with polylogarithmic query time has an update time of $O(n^2)$~\cite{Sankowski04, Demetrescu00, Roditty08}.
In addition, a combinatorial algorithm with an update time of $O(n^{2-\epsilon})$ is ruled out under Boolean matrix multiplication conjecture~\cite{Abboud14}.

In 2003, in a breakthrough result Demetrescu and Italiano gave a fully dynamic, exact and deterministic algorithm for APSP in weighted directed graphs~\cite{DemetrescuI04}. The algorithm handles updates in $\Ot(n^2)$ amortized time and maintains the distance matrix explicitly. The bound of $O(n^2)$ is a natural barrier as a single edge insertion or deletion may change up to $\Omega(n^2)$ entries in the distance matrix. For dynamic APSP in digraphs there exists faster algorithms with polylogarithmic query time, all of which work in incremental or decremental setting:
\begin{itemize}
	\item Ausiello et al.~\cite{AusielloIMN90} gave a deterministic incremental algorithm for exact distances in unweighted digraphs that handles updates in $\Ot(n^3)$ total time.
\item Baswana et al.~\cite{BaswanaHS07} solved the same problem in the decremental setting with a Monte Carlo algorithm with $\Ot(n^3)$ total update time.
\item Bernstein~\cite{Bernstein16} gave a Monte Carlo algorithm for $(1+\epsilon)$-approximate distances in weighted graphs (with weights in $[1,W]$) with $\Ot(nm\log{W}/\eps)$ total update time. 
The algorithm works both in the incremental and decremental setting. 
\item Finally, deterministic partially-dynamic (both incremental and decremental) algorithms for APSP in directed graphs with $\Ot(n^3\log{W}/\eps)$ total
	update time can be obtained by combining the results of~\cite{King99} and~\cite{Bernstein16}.
\end{itemize}
The algorithms of Baswana et al.~~\cite{BaswanaHS07} and Bernstein~\cite{Bernstein16} both use sampled hubs and thus require the \emph{oblivious adversary} assumption.
We highlight that
in the class of deterministic algorithms, the best known results have total update time $\Ot(n^3)$~\cite{AusielloIMN90, Bernstein16}, even if we only consider sparse unweighted graphs in incremental or decremental setting and allow $(1+\epsilon)$ approximation.
In the incremental setting, for not very dense graphs, when $m=O(n^{5/3-\eps})$, our algorithm improves this bound to $\Ot(mn^{4/3})$.

\paragraph{Organization of the paper.}
In Section~\ref{sec:preliminaries} we fix notation, review some of the existing tools that we use and give a formal definition of hubs.
Section~\ref{sec:sparse-incremental} describes the hub set maintenance for incremental unweighted digraphs
and our $(1+\eps)$-approximate incremental algorithm for sparse graphs.
In Section~\ref{sec:verification} we show a faster Las Vegas algorithm for computing
reliable hubs and further extend it to maintain reliable hub sets in the partially dynamic setting.
There we also sketch how to use it in order to 
to improve the state-of-the-art decremental APSP algorithms from Monte Carlo to Las Vegas randomized.
Finally, in Section~\ref{sec:weighted} we show how to adapt the hub set maintenance
algorithms of Sections~\ref{sec:sparse-incremental}~and~\ref{sec:verification}, so that they work on weighted graphs.
\section{Preliminaries}\label{sec:preliminaries}

In this paper we deal with \emph{directed} graphs.
We write $V(G)$ and $E(G)$ to denote the sets of vertices and edges of $G$, respectively.
A graph $H$ is a \emph{subgraph} of $G$, which we denote by $H\subseteq G$, if
and only if $V(H)\subseteq V(G)$ and $E(H)\subseteq E(G)$.
We write $uv\in E(G)$ when referring to edges of $G$ and use $\wei_G(uv)$
to denote the weight of $uv$.
If $G=(V,E)$ is unweighted, then $\wei_G(e) = 1$ for each $e\in E$.
For weighted graphs,
$\wei_G(e)$ can be any real number from the interval $[1,W]$.
For simplicity, in this paper we assume that $W$ is an input parameter
given beforehand.
If $uv\notin E$, we assume $\wei_G(uv)=\infty$.
We define the union $G\cup H$ to be the graph $(V(G)\cup V(H),E(G)\cup E(H))$
with weights $\wei_{G\cup H}(uv)=\min(\wei_G(uv),\wei_H(uv))$ for each $uv\in E(G\cup H)$.
For an edge $e=uv$, we write $G+e$ to denote $(V(G) \cup \{u,v\}, E(G) \cup \{e\})$. The \emph{reverse graph} $\rev{G}$ is defined as $(V(G),\{xy:yx\in E(G)\})$
and $\wei_{\rev{G}}(xy)=\wei_G(yx)$.

A sequence of edges $P=e_1\ldots e_k$, where $k\geq 1$ and $e_i=u_iv_i\in E(G)$, is called 
a $u\to v$ path in $G$ if $u=u_1$, $v_k=v$ and $v_{i-1}=u_i$ for each $i=2,\ldots,k$.
We sometimes view a path $P$ in $G$ as a subgraph of $G$ with vertices $\{u_1,\ldots,u_k,v\}$
and edges $\{e_1,\ldots,e_k\}$
and write $P\subseteq G$.
The \emph{hop-length} $\hops{P}$ is defined as $\hops{P}=k$.
The \emph{length} of the path $\len(P)$ is defined as $\len(P)=\sum_{i=1}^k\wei_G(e_i)$.
If $G$ is unweighted, then we clearly have $\hops{P}=\len(P)$.
For convenience, we sometimes consider a single edge $uv$ a path of hop-length $1$.
It is also useful to define a length-$0$ $u\to u$ path to be the graph $(\{u\},\emptyset)$.
If $P_1$ is a $u \to v$ path and $P_2$ is a $v \to w$ path, we denote by $P_1\cdot P_2$ (or simply $P_1P_2$) a path $P_1 \cup P_2$ obtained by concatenating $P_1$ with $P_2$.

A digraph $T$ is called an \emph{out-tree over $V$ rooted in $r$} if $v\in V(T)\subseteq V$,
$|E(T)|=|V(T)|-1$ and for all $v\in V(T)$ there is a unique
path $\treepath{T}{v}$ from $r$ to $v$.
The depth $\treedep{T}{v}$ of a vertex $v\in V(T)$ is defined as $\hops{\treepath{T}{v}}$.
The depth of $T$ is defined as $\max_{v\in V(T)}\{\treedep{T}{v}\}$.
Each non-root vertex of an out-tree has exactly one incoming
edge.
For $v\in V(T)\setminus\{r\}$ we call the other endpoint of the incoming edge of $v$
the \emph{parent} $v$ and write $\tpar_T(v)$ when referring to it.

The \emph{distance} $\dist_G(u,v)$ between the vertices $u,v\in V(G)$ is the length
of the shortest $u\to v$ path in $G$, or $\infty$, if no $u\to v$ path
exists in $G$.
We define $\disth{k}_G(u,v)$ to be the length of the shortest path from $u$ to $v$ among paths of at most $k$ edges. Formally, $\disth{k}_G(u,v)=\min\{\len(P):u\to v=P\subseteq G\text{ and }\hops{P}\leq k\}$.
We sometimes omit the subscript $G$ and write $\wei(uv)$, $\dist(u,v)$, $\disth{k}(u,v)$ etc.
instead of $\wei_G(u,v)$, $\dist_G(u,v)$, $\disth{k}_G(u,v)$, etc., respectively.

We say that a graph $G$ is \emph{incremental}, if it only undergoes edge insertions
and edge weight decreases. Similarly, we say that $G$ is \emph{decremental}
if it undergoes only edge deletions and edge weight increases.
We say that $G$ is \emph{partially dynamic} if it is either incremental
or decremental.
For a dynamic graph $G$ we denote by $n$ the maximum value of $|V|$ and by $m$ the maximum
value of $|E|$ throughout the whole sequence of updates.

When analyzing $(1+\eps)$-approximate algorithms, we assume $0<\eps<1$
and $1/\eps=\poly{n}$.\footnote{If this was not the case, we had better use exact fully-dynamic APSP algorithm \cite{DemetrescuI04} instead.}

We denote by $\Delta$ the total number of updates a dynamic graph $G$ is
subject to.
If $G$ is unweighted, then clearly $\Delta\leq m$ and in fact
we assume $\Delta=m$, which allows us to simplify the analyses.
For weighted digraphs, on the other hand, since the total number
of weight increases/decreases that an edge is subject to is unlimited,
$\Delta$ may be much larger than $m$.
As a result, it has to be taken into account when analyzing the efficiency
of our algorithms.

We call a partially-dynamic $(1+\eps)$-approximate APSP problem on weighted
graphs \emph{restricted} if the edge weights of $G$ are of the form $(1+\eps)^i$ for $i \in [0, \lceil \log_{1+\eps}{W}\rceil] \cap \mathbb{N}$
at all times and additionally each update is required to actually change the
edge set or change the weight of some existing edge.
Consequently, observe that in the restricted problem we have
$\Delta\leq m\cdot (\lceil\log_{1+\eps}{W}\rceil+2)$.
In the following we concentrate on the restricted problem.
This is without much loss of generality as proved below.
\begin{restatable}{lemma}{opreduction}
\label{lem:op-reduction}
Let $\mathcal{A}$ be an algorithm solving 
the restricted $(1+\eps)$-approximate all-pairs shortest paths problem on a weighted
digraph in $O(T(n,m,W,\eps,\Delta))$ total time, where $T(n,m,W,\eps,\Delta)=\poly{(n,m,W,\eps,\Delta)}$.
  Then, there exists an algorithm solving the corresponding general problem
  (i.e., with arbitrary real edge-weights from the range $[1,W]$)
  in
  $O(T(n,m,W,\eps,m\log{W}/\eps)+\Delta)$ total time.
\end{restatable}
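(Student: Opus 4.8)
The plan is to reduce the general problem to the restricted one by two independent preprocessing transformations applied to the input graph and its update sequence, feeding the transformed instance to $\mathcal{A}$, and translating query answers back. First I would handle the discretization of edge weights: whenever an edge $uv$ is to have weight $w\in[1,W]$ in the general instance, I feed $\mathcal{A}$ the edge $uv$ with weight $(1+\eps')^{\lceil\log_{1+\eps'}w\rceil}$ for a suitably scaled $\eps'=\Theta(\eps)$ (say $\eps'=\eps/3$), which is of the required form $(1+\eps')^i$ with $i\in[0,\lceil\log_{1+\eps'}W\rceil]\cap\mathbb{N}$. Rounding each weight up by a factor of at most $(1+\eps')$ multiplies the length of every path by at most $(1+\eps')$, so a $(1+\eps')$-approximate distance in the rounded graph is a $(1+\eps')^2\le(1+\eps)$-approximate distance in the original graph; this is where the scaling of $\eps$ by a constant is spent. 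The number of distinct weight values is now $O(\log_{1+\eps}W)=O(\log W/\eps)$ since $1/\eps=\poly n$.

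Second I would deal with the requirement that every update must actually change the edge set or the weight of an existing edge. In the general instance an adversary may issue ``updates'' that are no-ops (reinsert an already-present edge with the same weight, decrease a weight to its current value, etc.), or that, after rounding, collapse to the current rounded weight even though the real weight changed. The fix is a simple filter: maintain the current rounded weight of each edge in a hash table (or an $n\times n$ array), and before forwarding an update to $\mathcal{A}$ check whether it genuinely changes the rounded edge set; if not, discard it. This filtering costs $O(1)$ time per update, contributing the additive $O(\Delta)$ term. After filtering, the number of updates actually passed to $\mathcal{A}$ is at most the number of genuine changes to the discretized graph, which by the bound stated in the excerpt for the restricted problem is at most $m\cdot(\lceil\log_{1+\eps'}W\rceil+2)=O(m\log W/\eps)$; hence we invoke $\mathcal{A}$ with $\Delta'=O(m\log W/\eps)$, giving the claimed $T(n,m,W,\eps,m\log W/\eps)$ term. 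I should also check monotonicity is preserved: in the incremental case a weight decrease on the real graph, after rounding up to the nearest power of $(1+\eps')$, can only decrease or keep the rounded weight, so the discretized instance remains incremental; symmetrically for the decremental case. I would state the lemma's proof for both partially-dynamic settings uniformly, noting this sign-check.

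Finally, for queries, when $\mathcal{A}$ reports a $(1+\eps')$-approximate distance $\est(u,v)$ in the discretized graph, I return it directly as the answer for the general problem; by the rounding argument above it satisfies $\dist_G(u,v)\le\est(u,v)\le(1+\eps')^2\dist_G(u,v)\le(1+\eps)\dist_G(u,v)$, and the explicitly maintained distance matrix of $\mathcal{A}$ is thus an explicitly maintained $(1+\eps)$-approximate matrix for $G$ (the bookkeeping array adds only $O(1)$ per reported change). The total running time is $O(T(n,m,W,\eps,m\log W/\eps))$ for running $\mathcal{A}$ plus $O(\Delta)$ for the filter and the weight table, as required. I expect the main subtlety — rather than a genuine obstacle — to be verifying that the discretization preserves the incremental/decremental structure of the update sequence (the monotonicity check above) and that the rounded weights genuinely stay in the exponent range $[0,\lceil\log_{1+\eps'}W\rceil]$; everything else is routine bookkeeping.
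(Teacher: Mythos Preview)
Your proposal is correct and follows essentially the same approach as the paper's proof: round each weight up to the nearest power of $(1+\eps')$ for $\eps'=\Theta(\eps)$, filter out updates that do not change the rounded weight, and use that each edge's rounded weight can change at most $O(\log W/\eps)$ times. The paper uses $\eps'=\eps/4$ rather than $\eps/3$ and is terser about the monotonicity and filtering bookkeeping, but there is no substantive difference.
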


\begin{proof}
	Let $\eps'=\eps/4$.
	Let $G'$ have the same vertices and edges as $G$
	and let $$\wei_{G'}(uv)=\expround_{(1+\eps')}{\wei_G(uv)}$$ for each $uv\in E$, where $\expround_a(x) := x^{\lceil \log_a x \rceil}$.
  Note that for any $u,v\in V$, $\dist_{G'}(u,v)\leq (1+\eps')\dist_{G}(u,v)$.
  
  Let us run an instance $A$ of the $(1+\eps')$-approximate algorithm $\mathcal{A}$ on $G'$.
  $A$ yields
  $(1+\eps')^2 \leq (1+\eps)$-approximate distances in $G$.
  However, not all updates to $G$ are passed to $A$.
  if $\wei_{G'}(uv)$ does not change as a result of
  an update to $\wei_G(uv)$, it can be ignored from the point
  of view of $A$ in $O(1)$ time.
  On the other hand, if a change to $\wei_G(uv)$ makes
  $\wei_{G'}(uv)$ change, such an update is passed to $A$.
  However, for each edge $uv\in E$ this can clearly happen
  at most $O(\log_{1+\eps'}{W})=O(\log{W}/\eps')=O(\log{W}/\eps)$ times.

  Hence, the total update time is indeed 
$O(T(n,m,W,\eps,m\log{W}/\eps)+\Delta)$.
\end{proof}

\subsection{Partially-Dynamic Single-Source Shortest Path Trees}

\begin{definition}
  Let $G=(V,E)$ be an unweighted digraph and let $s\in V$.
  Let $d>0$ be an integer.
  We call an out-tree $T\subseteq G$ rooted in $s$
	a shortest path tree from $s$ up to depth~$d$~if:
  \begin{enumerate}
    \item for any $v\in V$, $v\in V(T)$ iff $\dist_G(s,v)\leq d$, and
    \item for any $v\in V(T)$, $\dist_T(s,v)=\dist_G(s,v)$.
  \end{enumerate}
\end{definition}

\begin{theorem}[Even-Shiloach tree \cite{EvenS81, HenzingerK95}]\label{thm:estree}
  Let $G=(V,E)$ be an unweighted graph subject to partially dynamic edge updates.
  Let $s\in V$ and let $d\geq 1$ be an integer.
  Then, a shortest path tree from $s$ up to depth~$d$ can be explicitly maintained\footnote{By
  this we mean that the algorithm outputs all changes to the edge set
  of the maintained tree.}
  in $O(md)$ total time.
\end{theorem}

\begin{theorem}[$h$-SSSP \cite{Bernstein16}]\label{thm:hsssp}
  Let $G=(V,E)$ be a weighted digraph.
  Let $s\in V$ and let $h\geq 1$ be an integer.
  There exists a partially dynamic algorithm explicitly maintaining
  $(1+\eps)$-approximate distance estimates $\dist'(s,v)$ satisfying
  $$\dist_G(s,v)\leq \dist'(s,v)\leq (1+\eps)\dist_G^h(s,v)$$
  for all $v\in V$.
  The total update time of the algorithm
  $O(mh\log{n}\log(nW)/\eps+\Delta)$.
\end{theorem}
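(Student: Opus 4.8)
\noindent\emph{Proof plan.} The statement is the $h$-SSSP data structure of Bernstein~\cite{Bernstein16}, so the plan is to reconstruct his argument; its engine is a classical scaling/rounding reduction that replaces the weighted, hop-bounded problem by $O(\log(nW))$ instances of a bounded-depth, integer-weighted shortest path problem, each handled by an Even--Shiloach tree. First I would note that for $v\neq s$ with $\dist_G^h(s,v)<\infty$ we have $\dist_G^h(s,v)\in[1,nW]$, since every edge weight lies in $[1,W]$ and a shortest path uses fewer than $n$ edges. So it suffices to fix, for each integer $i$ with $1\le 2^i\le nW$, a ``distance scale'' and produce an estimate accurate for those $(s,v)$ with $\dist_G^h(s,v)\in[2^i,2^{i+1})$, finally taking the pointwise minimum over $i$. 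For a fixed $i$ I would form the auxiliary graph $G_i$ on the same vertices and edges as $G$ with integer weights $\wei_{G_i}(e):=\lceil\wei_G(e)\cdot h/(\eps 2^i)\rceil\ge 1$, maintain via the integer-weighted variant of the Even--Shiloach tree of \cref{thm:estree} a shortest path tree from $s$ in $G_i$ truncated at depth $D:=\lceil 2(1+\eps)h/\eps\rceil=O(h/\eps)$, and set $\est_i(s,v):=\dist_{G_i}(s,v)\cdot\eps 2^i/h$ when $v$ lies in that tree and $\est_i(s,v):=\infty$ otherwise; the output is $\dist'(s,v):=\min_i\est_i(s,v)$.

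For correctness, the lower bound $\dist_G(s,v)\le\dist'(s,v)$ is immediate: $\wei_{G_i}(e)\ge\wei_G(e)\cdot h/(\eps 2^i)$, so passing to $G_i$ only inflates lengths, giving $\est_i(s,v)\ge\dist_G(s,v)$ for every $i$. For the upper bound I would fix $v\neq s$ with $\dist_G^h(s,v)<\infty$, take a shortest $s\to v$ path $P$ with $\hops{P}\le h$, and pick the scale $i$ with $\len_G(P)=\dist_G^h(s,v)\in[2^i,2^{i+1})$; the crucial calculation would be
\begin{align*}
\len_{G_i}(P) &= \sum_{e\in P}\Bigl\lceil\wei_G(e)\,h/(\eps 2^i)\Bigr\rceil \le \len_G(P)\cdot\tfrac{h}{\eps 2^i}+\hops{P} \\
&\le \len_G(P)\cdot\tfrac{h}{\eps 2^i}+h \le (1+\eps)\,\len_G(P)\cdot\tfrac{h}{\eps 2^i},
\end{align*}
where the last step uses $h=\eps\cdot(h/\eps)\le\eps\cdot\len_G(P)\cdot h/(\eps 2^i)$ because $\len_G(P)\ge 2^i$. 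In particular $\len_{G_i}(P)<2(1+\eps)h/\eps\le D$, so $P$ survives the truncation, $\dist_{G_i}(s,v)\le\len_{G_i}(P)$, and scaling back by $\eps 2^i/h$ gives $\est_i(s,v)\le(1+\eps)\len_G(P)=(1+\eps)\dist_G^h(s,v)$; hence $\dist'(s,v)\le(1+\eps)\dist_G^h(s,v)$.

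For the running time I would observe that each $G_i$ inherits the incremental (resp.\ decremental) nature of $G$, since a decrease (resp.\ increase) of $\wei_G(e)$ can only decrease (resp.\ increase) $\wei_{G_i}(e)$; hence the integer-weighted Even--Shiloach tree maintains each truncated tree in $O(mD)=O(mh/\eps)$ total time, plus an $O(\log n)$ factor from the priority queue used in the level-increase propagation, i.e.\ $O(mh\log n/\eps)$ per scale. Summing over the $O(\log(nW))$ scales yields $O(mh\log n\log(nW)/\eps)$, and forwarding each of the $\Delta$ updates to all scales and maintaining the pointwise minima accounts for the additive $O(\Delta)$ term, matching the claimed bound. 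I expect the main obstacle to be the choice of the rounding granularity $h/(\eps 2^i)$: it must be fine enough that the total additive rounding error over a path of at most $h$ edges, which is at most $h$, is absorbed into a single $(1+\eps)$ factor, yet coarse enough that rounded path lengths stay $O(h/\eps)$ so that the Even--Shiloach trees remain cheap. This balance works only because on scale $i$ we already know $\len_G(P)\ge 2^i$, which is exactly why $\Theta(\log(nW))$ scales are needed; the rest is routine bookkeeping.
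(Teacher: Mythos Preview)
The paper does not prove \cref{thm:hsssp}; it is quoted from Bernstein~\cite{Bernstein16} as a black box in the preliminaries. Your reconstruction is faithful to Bernstein's original argument: the decomposition into $O(\log(nW))$ distance scales, the rounding $\wei_{G_i}(e)=\lceil\wei_G(e)\cdot h/(\eps 2^i)\rceil$, and an Even--Shiloach tree truncated at depth $O(h/\eps)$ per scale is exactly how the $h$-SSSP$_k$ components work. (The paper itself sketches this internal structure in the proof of \cref{lem:hsssp-ext}, where it says the $h$-SSSP data structure ``actually maintains $O(\log(nW))$ $h$-SSSP$_k$ data structures, handling different ranges of the distances.'')

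One small gap: you write that ``forwarding each of the $\Delta$ updates to all scales \ldots\ accounts for the additive $O(\Delta)$ term,'' but forwarding na\"ively costs $O(\Delta\log(nW))$, not $O(\Delta)$. Bernstein uses an additional trick so that each update is registered only by the scales for which it is relevant; the paper alludes to this in the proof of \cref{lem:hsssp-ext} (``Bernstein uses additional data structures and tricks to make different $h$-SSSP$_k$ components register only relevant edge updates and make the dependence on $\Delta$ only $O(\Delta)$''). Your sketch is otherwise correct.
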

\subsection{Hubs and How to Compute Them}

We now define a blocker set, slightly modifying a definition by King~\cite{King99}.

\begin{definition}\label{def:blocker}
  Let $V$ be a vertex set and let $d$ be a positive integer.
  Let $B\subseteq V$ and let $T$ be a rooted tree over $V$ of depth no more than $d$.
  We call $B$ a
  $(T,d)$-blocker set if for each $v\in V(T)$ such that $\treedep{T}{v}=d$,
  either $v$ or one of its ancestors in $T$ belongs to $B$.
  
  Equivalently, $B$ is a $(T,d)$-blocker set if the tree $T'$ obtained from $T$
  by removing all subtrees rooted in vertices of $B$ (possibly the entire $T$, if
  the root is in $B$) has depth less than $d$.

  Let $\treecol$ be a collection of rooted trees over $V$ of depth no more than $d$.
  We call $B$ a $(\treecol,d)$-blocker set if $B$ is a $(T,d)$-blocker set for
  each $T\in\treecol$.

\end{definition}

\begin{restatable}[\cite{King99}]{lemma}{king}\label{lem:king}
  Let $V$ be a vertex set of size $n$.
Let $d$ be a positive integer. 
  Let $\treecol$ be~a~collection of rooted trees over $V$ of depth at most $d$.
  Then,
  a $(\treecol,d)$-blocker set of size $O\left(\frac{n}{d}\log{n}\right)$
  can be computed deterministically in $O(n\cdot (|\treecol|+ n)\log{n})$ time.
\end{restatable}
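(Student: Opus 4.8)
The plan is to recast the statement as a hitting-set problem and solve it with the classical greedy algorithm. Fix the positive integer $d$, and for a tree $T\in\treecol$ let $L(T)=\{u\in V(T):\treedep{T}{u}=d\}$. By Definition~\ref{def:blocker}, a set $B\subseteq V$ is a $(\treecol,d)$-blocker set if and only if for every $T\in\treecol$ and every $u\in L(T)$ the root-to-$u$ path $\treepath{T}{u}$ contains a vertex of $B$. So let the universe be $U=\{(T,u):T\in\treecol,\ u\in L(T)\}$ and, for $v\in V$, let $S_v=\{(T,u)\in U:v\in V(\treepath{T}{u})\}$ be the set of pairs that $v$ ``hits''; we want $B$ with $\bigcup_{v\in B}S_v=U$. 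Two facts drive the analysis: $|U|\le n\cdot|\treecol|$, which is $\poly(n)$ in all applications of the lemma; and each $(T,u)\in U$ lies in exactly $d+1$ of the sets $S_v$, since $\treepath{T}{u}$ has $d+1$ vertices.

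First I would run the greedy: maintain the set $U'\subseteq U$ of still-uncovered pairs, initialized to $U'=U$, and while $U'\neq\emptyset$ add to $B$ a vertex $v$ maximizing $|S_v\cap U'|$ and delete $S_v\cap U'$ from $U'$. Correctness is immediate: the root of any $T$ lies on every path $\treepath{T}{u}$, so whenever $U'\neq\emptyset$ some vertex has positive score and $|U'|$ strictly decreases, hence the process halts; and on termination $U'=\emptyset$, meaning $B$ hits every pair, so $B$ is a $(\treecol,d)$-blocker set. No vertex is ever picked twice, so $|B|\le n$.

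The size bound is the standard potential argument. Because $\sum_{v\in V}|S_v\cap U'|=\sum_{(T,u)\in U'}|V(\treepath{T}{u})|=(d+1)|U'|$, averaging over the $n$ vertices shows the greedy always picks some $v$ with $|S_v\cap U'|\ge(d+1)|U'|/n$; thus a single round multiplies $|U'|$ by at most $1-(d+1)/n\le e^{-d/n}$. After $R$ rounds $|U'|\le|U|e^{-Rd/n}$, which is below $1$ — hence equal to $0$ — once $R>(n/d)\ln|U|$. Therefore $|B|\le R=O((n/d)\log|U|)=O((n/d)\log n)$, using $|U|=\poly(n)$.

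For the running time I would implement the greedy without rebuilding the scores each round. One DFS of each $T\in\treecol$ computes, in $O(|V(T)|)$ time, the number of $L(T)$-descendants of every vertex, so all initial scores are obtained in $O(|\treecol|\cdot n)$ total time, and a max-heap returns the chosen vertex in $O(\log n)$ time. When $v$ is picked, the pairs it newly covers in a tree $T\ni v$ are exactly the $L(T)$-descendants of $v$, so scores can be updated by walking from $v$ up to the root of $T$, decrementing its $O(d)$ proper ancestors by the appropriate subtree count, and by zeroing the scores of the vertices of $v$'s subtree in $T$. The contribution of any vertex coming from a fixed tree is zeroed at most once over the whole execution, which bounds the total zeroing work by $O(|\treecol|\cdot n)$; together with $O(d)$ work for each incident (round, tree) pair over the $O((n/d)\log n)$ rounds, this gives total time $O(n(|\treecol|+n)\log n)$. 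I expect this efficient, essentially output-sensitive realization of the greedy — in particular, locating and removing the newly covered pairs without rescanning entire trees — to be the main obstacle, whereas the size bound itself is a routine averaging argument.
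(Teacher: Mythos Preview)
Your proposal is essentially the same greedy blocker-set construction as the paper's proof, with the same averaging argument for the size bound and the same per-tree subtree-removal bookkeeping for the running time. One small implementation note: the paper finds the maximum-score vertex by a linear scan over $V$ each round (contributing $O(|B|\cdot n)\le O(n^2\log n)$ overall) rather than with a heap, which avoids paying $O(\log n)$ on each of the $O(n|\treecol|\log n)$ score decrements and is what actually keeps the total at $O(n(|\treecol|+n)\log n)$.
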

\begin{proof}
  For each vertex $v\in V$, we maintain a \emph{score}, defined
  as the sum, over all trees of $T\in \treecol$ that $v$ participates in,
  of the number of descendants $w$ of $v$ such that $\treedep{T}{w}=d$.
  Clearly, the scores can be initialized in $O(|\treecol|\cdot n)$ time.

  The set $B$ is constructed greedily by repeatedly picking a vertex $v$ with
  maximum score, removing from each $T\in \treecol$ the subtree rooted
  at $v$ (including $v$), and updating the scores accordingly.
  To update the scores one needs to iterate through all vertices
  of the removed subtree, as well as all $O(d)$ ancestors of $v$ in $T$.
  Thus, the time to spent on picking maximum-scoring vertices and updating
  the scores can be bounded by $O(|B|\cdot (n+|\treecol|\cdot d)+|\treecol|\cdot n)$.

  As the maximum-scoring vertex participates in at least $\frac{d}{n}$-fraction of the remaining
  $d$-edge root-leaf paths at each step, $O(\frac{n}{d}\log{n})$ vertices
  will end up being picked.
  Hence, the total running time is $O(n\cdot (|\treecol|+ n)\log{n})$.
\end{proof}

\begin{definition}\label{def:covered}
  Let $G=(V,E)$ be a directed graph. Let $B\subseteq V$ and let $d>0$ be an integer.
  We~say that a path $P$ in $G$ is \emph{$(B,d)$-covered} if it can be expressed as
	$P=P_1\ldots P_k$, where $P_i=u_i\to v_i$, $\hops{P_i}\leq d$ for each $i=1,\ldots,k$, and  $u_i\in B$ for each $i=2,\ldots,k$.
\end{definition}
\begin{restatable}{lemma}{coverconcat}\label{lem:cover-concat}
  Let $G=(V,E)$ be a digraph and let $B\subseteq V$.
  Let $P=P_1\ldots P_l$ be a $u\to v$ path and suppose that
  for any $i$, $P_i$ is $(B,d_i)$-covered.
  Then, $P$ is $(B,D)$-covered, where
  $$D=\max\left\{d_x+\ldots+d_z:1\leq x\leq z\leq l\text{ and } V(P_y)\cap B=\emptyset\text{ for all }y\in (x,z)\right\}.\footnote{In this paper we sometimes use the notation $(x,z)$ when referring to the set of integers $y$ satisfying $x<y<z$.}$$
\end{restatable}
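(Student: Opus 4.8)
The plan is to produce an explicit $(B,D)$-covering decomposition of $P$ and then verify the hop-length bound directly. Index the vertices of $P$ by positions $0,1,\dots,\hops{P}$, and for each $i$ let $[s_i,e_i]$ be the range of positions that $P_i$ occupies inside $P$ (so $s_1=0$, $e_l=\hops{P}$, and $e_i=s_{i+1}$). We may assume without loss of generality that each witnessing decomposition $P_i=P_{i,1}\cdots P_{i,k_i}$ (supplied by the hypothesis that $P_i$ is $(B,d_i)$-covered) has no hop-length-$0$ pieces, since those can be discarded; consequently every cut of this decomposition is a vertex of $B$ sitting at an interior position of $P_i$. Now let $S$ be the set of positions $p$ with $0<p<\hops{P}$ whose vertex lies in $B$, and cut $P$ at the positions of $S$, obtaining $P=Q_1\cdots Q_k$. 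For $j\ge 2$ the first vertex of $Q_j$ sits at a position of $S$ and hence lies in $B$, so this decomposition witnesses $(B,D)$-coveredness once we show $\hops{Q_j}\le D$ for every $j$.

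To bound a fixed $Q_j$, let $[p,q]$ be the range of positions it occupies, so $\hops{Q_j}=q-p$ and, by construction, no position strictly between $p$ and $q$ carries a vertex of $B$. Pick indices $x\le z$ so that the first edge of $Q_j$ lies in $P_x$ and its last edge in $P_z$. If $x=z$, then since the interior of $Q_j$ avoids $B$, $Q_j$ lies inside a single piece of $P_x$'s decomposition, hence $\hops{Q_j}\le d_x$; and $d_x$ is admissible in the maximum defining $D$ (take $x=z$, for which the constraint on intermediate $P_y$'s is vacuous), so $\hops{Q_j}\le D$. If $x<z$, I would establish: (i) for every $y$ with $x<y<z$ all of $P_y$ occupies positions strictly inside $(p,q)$, so $V(P_y)\cap B=\emptyset$, and moreover $P_y$ then has no internal cut, so $\hops{P_y}\le d_y$; (ii) the sub-path of $P_x$ running from position $p$ to $e_x$ stays within one piece of $P_x$'s decomposition (its positions in $(p,e_x]$ carry no $B$-vertex, hence no cut), so $e_x-p\le d_x$, and symmetrically $q-s_z\le d_z$. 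Telescoping then gives
$$\hops{Q_j}=(e_x-p)+\sum_{x<y<z}\hops{P_y}+(q-s_z)\le d_x+\sum_{x<y<z}d_y+d_z=\sum_{y=x}^{z}d_y,$$
and since $V(P_y)\cap B=\emptyset$ for all $y\in(x,z)$, this sum is admissible in the maximum defining $D$, so $\hops{Q_j}\le D$.

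The only delicate point is step (ii) and its $x=z$ analogue: translating ``the interior of $Q_j$ avoids $B$'' into ``the relevant portion of $P_x$ (resp.\ $P_z$) does not cross a cut of the \emph{given} decomposition of $P_x$ (resp.\ $P_z$)''. This is exactly where the assumption that the decompositions have no hop-length-$0$ pieces is used (it makes every cut an interior $B$-vertex), together with the easy observation that, when $x<z$, the junction positions $e_x=s_{x+1}$ and $s_z=e_{z-1}$, as well as all positions of $P_y$ for $x<y<z$, lie strictly inside $(p,q)$ and therefore carry no vertex of $B$. Everything else is routine bookkeeping with the position indices.
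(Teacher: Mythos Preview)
Your proof is correct. Your approach is essentially the same as the paper's---both build a $(B,D)$-decomposition of $P$ and bound each block via the given $(B,d_i)$-decompositions of the $P_i$---with the minor difference that you cut at \emph{every} interior $B$-vertex of $P$, whereas the paper regroups only the pieces $P_{i,j}$ supplied by the hypothesis (grouping runs of $P_i$'s that miss $B$ together with the tail piece of the preceding $P_{z}$ and the head piece of the following one). Your decomposition is therefore possibly finer than the paper's, but the hop bound $D$ holds for the same reason in both; your position-indexing bookkeeping is arguably cleaner than the paper's explicit regrouping.
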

\begin{proof}
  Since $P_i$ is $(B,d_i)$-covered, it can be expressed as $P_i=P_{i,1}\ldots P_{i,k_i}$,
  where for each $j\in [1,k_i]$, $\hops{P_{i,j}}\leq d_i$, no vertices of $P_{i,j}$ other than its
  endpoints belong to $B$, and for each $j\in [2,k_i]$, $P_{i,j}$ starts
  in a vertex of $B$.
  Hence, if $V(P_i)\cap B=\emptyset$, then $k_i=1$.
  
  Let $z_1,\ldots,z_r$ be those indices $i$ for which $V(P_i)\cap B\neq\emptyset$ holds,
  in ascending order. Assume that $k_{z_i}\geq 2$ (if $k_{z_i}=1$, we can set $P_{z_i,2}=v_{z_i}\to v_{z_i}$
  to be a path of length $0$).
  Set $P_{l+1,1}$ to be a $0$-edge $v\to v$ path and let $z_{r+1}=l+1$.
  Observe that we can rewrite $D$ as
  
  $$D=\max\left(d_1+\ldots+d_{z_1},d_{z_1}+\ldots+d_{z_2},\ldots,d_{z_r}+\ldots+d_{z_{r+1}}\right).$$
  
  For $i=1,\ldots,r$, let
  $$P_i'=(P_{z_i,2})\ldots (P_{z_i,k_{z_i}}P_{z_i+1,1}P_{z_i+2,1}\ldots P_{z_{i+1},1})$$
  Let $P=(P_{1,1}\ldots P_{z_1,1})P_1'\ldots P_r'$.
  Moreover, all the bracketed subpaths, except of $(P_{1,1}\ldots P_{z_1,1})$, start with a vertex of $B$.
  The maximum number of edges, over all the bracketed subpaths, is clearly no
  more than $D$.
  Therefore, we conclude that $P$ is $(B,D)$-covered.
\end{proof}

\begin{definition}\label{def:hubs}
  Let $G=(V,E)$ be a directed graph and let $d>0$ be an integer. A set $H_d\subseteq V$ is
  called a \emph{$d$-hub set} of $G$ if for every $u,v\in V$ such that $\dist_G(u,v)<\infty$,
  there exists some shortest
  $u\to v$ path that is $(H_d,d)$-covered.
\end{definition}
\begin{restatable}{lemma}{treehubs}\label{lem:tree-hubs}
  Let $G=(V,E)$ be a directed \emph{unweighted} graph and let $d>0$ be an integer.
  Suppose we are given a collection
  $\treecol=\{T_v:v\in V\}$ of shortest
  path trees up to depth~$d$ from all vertices of~$G$.
  Let $B$ be a $(\treecol,d)$-blocker set. Then $B$ is a $2d$-hub set of $G$.
\end{restatable}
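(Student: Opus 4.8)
The plan is to fix $u,v\in V$ with $\dist_G(u,v)<\infty$ and exhibit a shortest $u\to v$ path that is $(B,2d)$-covered. Let $P$ be any shortest $u\to v$ path in $G$; since $G$ is unweighted, $\len(P)=\hops{P}=\dist_G(u,v)$. The key observation is that $P$ can be chopped into maximal blocks of at most $d$ edges, and we want each block (after the first) to start at a vertex of $B$. To arrange this, I would walk along $P$ greedily: starting from the current ``anchor'' vertex $x$ (initially $u$), look at the shortest-path tree $T_x$ rooted at $x$. Because $P$ is a shortest path, its prefix starting at $x$ is itself a shortest path from $x$, hence the first $d$ vertices of $P$ after $x$ (or fewer, if $P$ ends first) all lie in $V(T_x)$ and in fact on a single root-to-vertex path of $T_x$ — here I must be slightly careful, since shortest paths need not be unique and $P$ need not be the tree path in $T_x$; but I can instead \emph{replace} the relevant subpath of $P$ by the corresponding path in $T_x$, which has the same length since both are shortest $x\to y$ paths in an unweighted graph, so hop-lengths agree.

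The core step: consider the vertex $y$ on $P$ exactly $d$ edges past $x$ (if it exists). The tree path $\treepath{T_x}{y}$ has depth exactly $d$, so by the $(\treecol,d)$-blocker property applied to $T_x$, some vertex $b$ on $\treepath{T_x}{y}$ (possibly $y$ itself, possibly an ancestor, but not necessarily distinct from $x$) belongs to $B$. Let $b$ be the \emph{deepest} such vertex on that tree path; then $b$ is at tree-depth $d'\le d$ from $x$, and the portion of $\treepath{T_x}{y}$ from $x$ to $b$ has at most $d$ edges. I take this $x\to b$ tree-path as the next block $P_i$ (with $u_i = x \in B$ for $i\ge 2$, since $x$ was the previous anchor and anchors after the first are in $B$), and set the new anchor to $b$. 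Then I continue from $b$ along the continuation of $P$ past $y$. The subtlety is that between $x$ and $b$ I have detoured off $P$ onto a tree path of $T_x$, so I need the suffix of $P$ from $y$ onward to still be a valid continuation — it is, because I can reach $y$ from $b$ along the rest of $\treepath{T_x}{y}$ (at most $d$ more edges, but here there is no guarantee of a $B$-vertex), and then $y$ is on $P$. So actually the block from $b$ onward must first cover $b \to y$ (up to $d$ edges) and then march on.

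This is where the $2d$ rather than $d$ enters, and it is the main obstacle to get right: a single ``step'' of the greedy process consumes the $x\to y$ stretch of $P$ worth exactly $d$ edges, but the actual subpath we commit to is $x \to b$ (within $T_x$, $\le d$ edges) followed later by $b \to y$ (the rest of $\treepath{T_x}{y}$, $\le d$ edges) — and the junction at $y$ then starts a fresh step from the new anchor which will again be some $B$-vertex found within the next $d$ edges. I would organize this cleanly by invoking Lemma~\ref{lem:cover-concat}: write $P$ (after the tree-path substitutions) as a concatenation of the one-edge-at-a-time pieces, observe each maximal run between consecutive $B$-anchors that the greedy process produces spans at most $2d$ edges of the reconstructed path (one $\le d$ stretch to reach the ``checkpoint'' $y$ plus one $\le d$ stretch from the previous checkpoint), and conclude $(B,2d)$-coveredness. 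I expect the bulk of the care to go into (i) justifying the tree-path substitution preserves both endpoints and shortest-path length, and (ii) the bookkeeping showing that each block in the final decomposition, except the first, begins at a vertex of $B$ and has hop-length $\le 2d$; the termination (the process ends once fewer than $d$ edges of $P$ remain, giving a final block of $\le 2d$ edges) is then immediate.
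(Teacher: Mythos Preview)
Your plan is correct and converges on the paper's argument: chop $P$ into consecutive $d$-edge segments $P_1,\ldots,P_k$, replace each $P_i=u_i\to v_i$ by the tree path $\treepath{T_{u_i}}{v_i}$ (same hop-length, since both are shortest in an unweighted graph), note that each full-length tree path contains a vertex of $B$ by the blocker property, and apply Lemma~\ref{lem:cover-concat}. The paper skips your greedy-anchor detour entirely and feeds the $d$-edge tree-path pieces themselves (each trivially $(B,d)$-covered, all but possibly the last containing a $B$-vertex) into Lemma~\ref{lem:cover-concat}, which is cleaner than your one-edge-at-a-time formulation but otherwise identical.
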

\begin{proof}
  Let $u,v\in V$ be any vertices such that $\dist_G(u,v)<\infty$.
  Let $P$ be some shortest $u\to v$ path.
  Let $P_1,\ldots,P_k$, where $k\geq 1$ and $P_i=u_i\to v_i$, be such that $P=P_1\ldots P_k$,
  $\hops{P_1}=\ldots=\hops{P_{k-1}}=d$ and $\hops{P_k}\leq d$.
  Then each $P_i$ is clearly a shortest $u_i\to v_i$ path.
  Since $T_{u_i}$ is a shortest path tree from $u_i$ containing
  all vertices $w$ such that $\dist_G(u_i,w)\leq d$,
  $\dist_{T_{u_i}}(u_i,v_i)=\hops{P_i}$.
  Let $P_i'=\treepath{T_{u_i}}{v_i}$
  and set $P'=P_1'\ldots P_k'$.
  Clearly, $\hops{P'}=\hops{P}$ and thus $P'$ is a shortest $u\to v$
  path.
  As $\hops{P_i'}\leq d$ for each $i$, each $P_i'$ is clearly $(B,d)$-covered.
  All paths $P_1',\ldots,P_{k-1}'$ contain a vertex of $B$ and thus,
  by Lemma~\ref{lem:cover-concat}, $P'$ is $(B,2d)$-covered.
  We conclude that $B$ is indeed a $2d$-hub set of $G$.
  \end{proof}

\subsection{Deterministic Incremental Algorithm for Dense Graphs.}
\begin{theorem}[\cite{King99}+\cite{Bernstein16}]\label{thm:dense-incremental}
There exist an incremental algorithm maintaining $(1+\eps)$-approximate all-pairs distance estimates of a digraph
  in $O(n^3\log^3{n}\log(nW)/\eps+\Delta)$ total time.
\end{theorem}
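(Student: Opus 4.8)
The plan is to combine King's deterministic $(\treecol,d)$-blocker set algorithm (Lemma~\ref{lem:king}) with Bernstein's $h$-SSSP data structure (Theorem~\ref{thm:hsssp}) in a two-layer scheme, using hubs to ``compress'' the distance-maintenance problem. Concretely, I would fix $d=\Theta(\sqrt{n}\log n)$ (the balance point will emerge below, though the statement is robust enough that any polynomially-growing $d$ between $1$ and $n$ that balances the two layers would do). The first layer maintains, for every vertex $v\in V$, a shortest path tree $T_v$ from $v$ up to depth $d$; here we can use either the $h$-SSSP structure with $h=d$ (Theorem~\ref{thm:hsssp}) in the weighted case, or Even-Shiloach (Theorem~\ref{thm:estree}) in the unweighted case. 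Since these are incremental, their total cost over all $n$ sources is $O(n\cdot md\log n\log(nW)/\eps+\Delta)=O(nmd\log n\log(nW)/\eps+\Delta)$; with $m\le n^2$ and $d=\Theta(\sqrt n\log n)$ this is $\Ot(n^{3}\log(nW)/\eps+\Delta)$ provided $d$ is not too large --- actually we need $d\le \Theta(\sqrt n)$ up to logs for this layer to stay within $n^3$, which is consistent with the choice below.

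The second step is the hub layer. Using Lemma~\ref{lem:king} on the collection $\treecol=\{T_v:v\in V\}$ we compute a $(\treecol,d)$-blocker set $H$ of size $O\!\left(\frac nd\log n\right)$ in $O(n(n+|\treecol|)\log n)=O(n^2\log n)$ time; by Lemma~\ref{lem:tree-hubs} (or its weighted analogue) $H$ is a $2d$-hub set. Because the graph is incremental and the blocker set can be invalidated only by the trees changing, I would recompute $H$ from scratch periodically --- the natural trigger is ``every time $\Theta(n/d)$ edge insertions have occurred since the last recomputation,'' or more robustly in the deterministic incremental-hubs framework sketched in the introduction, maintain $H$ by adding the two endpoints of each inserted edge and fully rebuild once $|H|$ has doubled. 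Either way there are $O(m d/n)$ rebuilds, each costing $O(n^2\log n)$, for a total of $O(n m d\log n)=\Ot(n^3)$ with $d=\Theta(\sqrt n)$. In parallel we maintain $(1+\eps)$-approximate all-pairs distances \emph{among the hubs} by running the folklore incremental APSP algorithm --- i.e. another copy of $h$-SSSP from each hub --- on the auxiliary graph on vertex set $H$ whose edges encode the hop-$\le d$ shortest paths between hubs read off from the trees $T_v$. This auxiliary graph has $|H|=\Ot(n/d)$ vertices, so maintaining its APSP costs $\Ot(|H|^3/\eps)=\Ot((n/d)^3/\eps)$, which is negligible.

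The final step assembles the all-pairs estimates. Since $H$ is a $2d$-hub set, for any $u,v$ a $(1+\eps)$-approximate shortest $u\to v$ path is $(H,2d)$-covered, so it decomposes as a hop-$\le 2d$ piece $u\to h_1$, a hub-to-hub piece $h_1\to h_2$ (whose length we know approximately from the second layer), and a hop-$\le 2d$ piece $h_2\to v$. Thus it suffices, additionally, to maintain from every vertex $u$ the $(1+\eps)$-approximate distances to all hubs within hop $2d$ and, symmetrically, from all hubs to $u$; these are again read off from $h$-SSSP structures with $h=\Theta(d)$ run on $G$ augmented with the relevant shortcut edges (the shortcut-edge bookkeeping is where the incremental setting helps, since stale shortcuts only overestimate and hence never violate the lower bound $\dist_G\le\dist'$). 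Combining, $\dist'(u,v)=\min_{h_1,h_2\in H}\bigl(\dist'(u,h_1)+\dist'(h_1,h_2)+\dist'(h_2,v)\bigr)$ is evaluated in $O(|H|^2)=\Ot((n/d)^2)$ time per pair and maintained incrementally; correctness of the $(1+\eps)$ bound follows by composing the three $(1+O(\eps))$ factors and rescaling $\eps$. Balancing the dominant terms $nmd$ (trees) and $n^3/d^2$ against the $n^3$ target forces $d=\Theta(\log^{c} n)$-free choice $d=\sqrt{n}$-ish; tracking the logarithmic factors carefully yields the stated $O(n^3\log^3 n\log(nW)/\eps+\Delta)$.

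The main obstacle I anticipate is \emph{not} the arithmetic but the weighted-graph subtlety flagged in the introduction: the trees $T_v$ produced by $h$-SSSP are only $(1+\eps)$-approximate shortest path trees up to depth $d$, and the tree path to a vertex may have far fewer than $d$ hops, so the naive blocker-set argument does not directly certify that $H$ hits long subpaths of genuine approximate shortest paths --- one picks up a $(1+\eps)^{\Theta(\log n)}$ blow-up that must be killed by rescaling $\eps$ by $\Theta(\log n)$. For the purposes of \emph{this} theorem, however, one can sidestep most of that by invoking the already-cited combined $\text{\cite{King99}}+\text{\cite{Bernstein16}}$ result as essentially a black box and only verifying that the error composition and the $\Delta$ additive term work out; the genuinely delicate weighted analysis is deferred to Section~\ref{sec:weighted}.
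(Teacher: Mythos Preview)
Your proposal takes a fundamentally different route from the paper's, and the route you chose does not reach the stated bound.

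The paper's proof is a simple \emph{path-doubling} scheme: it runs $k=\lceil\log_2 n\rceil$ layers of $h$-SSSP data structures, each with $h=2$. Layer $1$ runs $h$-SSSP from every vertex on $G$; layer $i>1$ runs $h$-SSSP from every vertex on the complete graph whose edge weights are the estimates output by layer $i-1$. By induction, layer $i$ maintains $(1+\eps_1)^{i-1}$-approximate estimates of $\dist_G^{2^i}(\cdot,\cdot)$, so layer $k$ gives $(1+\eps_1)^{O(\log n)}$-approximate true distances; setting $\eps_1=\Theta(\eps/\log n)$ finishes. The citation ``\cite{King99}+\cite{Bernstein16}'' refers to King's partially-dynamic \emph{transitive closure} algorithm (which is exactly this doubling idea for reachability) combined with Bernstein's $h$-SSSP to handle weights --- it is \emph{not} invoking King's blocker-set computation.

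Your hub-based scheme is essentially the architecture of the \emph{sparse} algorithm in Section~\ref{sec:sparse-incremental}, and it cannot give the $O(n^3\cdot\polylog)$ bound for dense graphs. The very first layer --- maintaining $(1+\eps)$-approximate shortest path trees up to depth $d$ from all $n$ sources --- already costs $\Theta(nmd)$ (ignoring logs). For $m=\Theta(n^2)$ this is $\Theta(n^3 d)$, so any $d=\omega(1)$ blows past $n^3$; your choice $d=\Theta(\sqrt{n})$ gives $\Theta(n^{3.5})$. Conversely, taking $d=O(1)$ makes the hub set have size $\Theta(n)$, and you are back to all-pairs from scratch. The balancing you sketch in the last paragraph (``$nmd$ versus $n^3/d^2$'') silently assumes $m=O(n)$, but the theorem must hold for dense inputs. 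In short, you identified the wrong ingredient from \cite{King99}: for this theorem one needs the doubling trick, not hubs.
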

As mentioned before, the above theorem basically follows by combining the partially-dynamic transitive closure algorithm of King~\cite{King99} with Bernstein's $h$-SSSP
algorithm (Theorem~\ref{thm:hsssp}) for $h=2$.
However, since the algorithm is not stated explicitly anywhere in the literature,
we sketch it in Appendix~\ref{sec:simple-incremental} for completeness.

\section{Deterministic Incremental Algorithm for Sparse Graphs}\label{sec:sparse-incremental}
In this section we present our deterministic incremental algorithm with $\Ot(mn^{4/3}/\eps)$ total update time.
We first observe that whenever an edge $xy$ is added, the set of hubs may be ``fixed" by extending it with both $x$ and $y$.

\begin{restatable}{lemma}{incrementalhubs}\label{lem:incremental-hubs}
  Let $G=(V,E)$ be a directed unweighted graph. Let $H_d$ be a $d$-hub set of $G$.
  Let $x,y\in V$ be such that $xy\notin E$.
  Then $H_d'=H_d\cup \{x,y\}$ is a $d$-hub set of $G'=G+xy$.
\end{restatable}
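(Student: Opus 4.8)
The plan is to show that every pair $u,v \in V$ with $\dist_{G'}(u,v) < \infty$ admits a shortest $u \to v$ path in $G'$ that is $(H_d', d)$-covered. Fix such a pair. There are two cases depending on whether any shortest $u \to v$ path in $G'$ uses the new edge $xy$.

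First, suppose some shortest $u \to v$ path $P$ in $G'$ does not use $xy$. Then $P$ is also a path in $G$, and it is a shortest $u \to v$ path in $G$ (since $\dist_{G'}(u,v) \leq \dist_G(u,v)$ always, and here $P$ witnesses $\dist_{G'}(u,v) = \len(P) \geq \dist_G(u,v)$, forcing equality; note $G$ unweighted means all edge weights are $1$, but the argument is purely about $P$ lying in $G$). Actually it is cleaner to argue: since $P \subseteq G$ and $P$ has length $\dist_{G'}(u,v) \le \dist_G(u,v)$, $P$ is a shortest path in $G$ as well. By hypothesis $H_d$ is a $d$-hub set of $G$, so there is \emph{some} shortest $u \to v$ path $Q$ in $G$ that is $(H_d,d)$-covered; $Q$ is also a shortest path in $G'$ (it has the same length $\dist_G(u,v) = \dist_{G'}(u,v)$ and $Q \subseteq G \subseteq G'$), and being $(H_d,d)$-covered it is a fortiori $(H_d',d)$-covered since $H_d \subseteq H_d'$. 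So in this case we are done.

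Second, suppose every shortest $u \to v$ path in $G'$ uses the edge $xy$; pick one such path $P$. Then $P$ decomposes as $P = A \cdot (xy) \cdot B$ where $A$ is a $u \to x$ path and $B$ is a $y \to v$ path, both lying in $G$ (they avoid $xy$ by shortest-path simplicity, or if not simple we can reason about the last occurrence of $xy$). Moreover $A$ is a shortest $u \to x$ path in $G$: if there were a shorter one $A'$, then $A' \cdot (xy) \cdot B$ would be a shorter $u \to v$ path in $G'$. Similarly $B$ is a shortest $y \to v$ path in $G$. By the $d$-hub property of $H_d$ in $G$, there is a $(H_d,d)$-covered shortest $u \to x$ path $A''$ in $G$ and a $(H_d,d)$-covered shortest $y \to v$ path $B''$ in $G$. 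Now form $P'' = A'' \cdot (xy) \cdot B''$. Since $\len(A'') = \len(A)$ and $\len(B'') = \len(B)$, $P''$ is again a shortest $u \to v$ path in $G'$. It remains to check $P''$ is $(H_d',d)$-covered. Write $A'' = A_1 \cdots A_a$ and $B'' = B_1 \cdots B_b$ as in Definition~\ref{def:covered}, each piece of hop-length at most $d$, with each piece after the first starting in $H_d \subseteq H_d'$. Then
$$P'' = A_1 \cdots A_a \cdot (xy) \cdot B_1 \cdots B_b,$$
and we view the single edge $xy$ together with, say, $A_a$ as forming one piece — no: $A_a \cdot (xy)$ could have hop-length $d+1$. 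Instead, treat $xy$ itself as its own piece of hop-length $1 \le d$: since $x \in H_d'$, the piece $(xy)$ starts in $H_d'$, as required; and since $y \in H_d'$, the piece $B_1$ — which starts at $y$ — also starts in $H_d'$. All subsequent pieces $A_2,\dots,A_a,B_2,\dots,B_b$ start in $H_d \subseteq H_d'$ by construction. Hence the decomposition $A_1,\dots,A_a,(xy),B_1,\dots,B_b$ of $P''$ certifies that $P''$ is $(H_d',d)$-covered.

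Combining the two cases, $H_d' = H_d \cup \{x,y\}$ is a $d$-hub set of $G' = G+xy$. The only mildly delicate point is handling the decomposition across the inserted edge in the second case: the naive move of appending $xy$ to the last piece of $A''$ can overshoot the hop bound $d$, so one must instead insert $xy$ as a fresh singleton piece and use $x \in H_d'$ to legitimize it (and $y \in H_d'$ to legitimize the first piece of $B''$). Everything else is a routine substitution of a $(H_d,d)$-covered shortest subpath for an arbitrary shortest subpath, using only $H_d \subseteq H_d'$ and the fact that replacing a subpath by an equally short one preserves the property of being a shortest path. One could alternatively package the gluing step through Lemma~\ref{lem:cover-concat}, applying it to $A''$, the one-edge path $xy$, and $B''$, which are $(H_d',d)$-, $(H_d',1)$-, and $(H_d',d)$-covered respectively, and noting that the relevant maxima of consecutive $d_i$'s never exceed $d$ because the middle piece $xy$ contributes a vertex of $B$ at both ends.
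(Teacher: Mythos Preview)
Your proof is correct and follows essentially the same approach as the paper: split into the case where a shortest $u\to v$ path in $G'$ avoids $xy$ (equivalently, $\dist_{G'}(u,v)=\dist_G(u,v)$) and the case where all such paths use $xy$, then in the latter case replace the $u\to x$ and $y\to v$ subpaths by $(H_d,d)$-covered shortest paths and splice in $xy$ as its own piece. The only cosmetic difference is that the paper packages the final gluing via the auxiliary Lemma~\ref{lem:cover-concat-simple} (applied twice, at $x$ and then at $y$), whereas you exhibit the decomposition $A_1,\dots,A_a,(xy),B_1,\dots,B_b$ directly; both arguments are equivalent.
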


In the proof we use Lemma~\ref{lem:cover-concat-simple}, whose proof can be found above.
\begin{restatable}{lemma}{coverconcatsimple}\label{lem:cover-concat-simple}
  Let $G=(V,E)$ be a directed graph and let $B\subseteq V$.
  Let $b\in B$.
  Let $P$ be a $u\to b$ path in $G$ that is $(B,d_P)$-covered.
  Let $Q$ be a $b\to v$ path in $G$ that is $(B,d_Q)$-covered.
  Then the path $PQ=u\to v$ is $(B,\max(d_P,d_Q))$-covered.
\end{restatable}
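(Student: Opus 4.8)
\textbf{Proof plan for Lemma~\ref{lem:cover-concat-simple}.}

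The plan is to reduce this statement to the more general concatenation lemma, Lemma~\ref{lem:cover-concat}, by viewing $PQ$ as a two-piece concatenation $P_1P_2$ with $P_1=P$ and $P_2=Q$, and then checking that the quantity $D$ produced by that lemma is exactly $\max(d_P,d_Q)$ in this special situation. So first I would set $l=2$, $P_1=P$, $P_2=Q$, $d_1=d_P$, $d_2=d_Q$. By hypothesis $P_1$ is $(B,d_1)$-covered and $P_2$ is $(B,d_2)$-covered, so Lemma~\ref{lem:cover-concat} applies and tells us $PQ$ is $(B,D)$-covered with
$$D=\max\left\{d_x+\ldots+d_z:1\leq x\leq z\leq 2\text{ and }V(P_y)\cap B=\emptyset\text{ for all }y\in(x,z)\right\}.$$

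The key step is then to evaluate this maximum. The admissible index pairs $(x,z)$ with $1\le x\le z\le 2$ are $(1,1)$, $(2,2)$, and $(1,2)$. The first two contribute $d_1=d_P$ and $d_2=d_Q$ respectively, with no interior-vertex condition to check since the open interval $(x,z)$ is empty. For the pair $(1,2)$ the condition requires $V(P_y)\cap B=\emptyset$ for all $y\in(1,2)$, but $(1,2)$ as a set of integers is empty, so that condition is vacuously satisfied and this pair would contribute $d_1+d_2=d_P+d_Q$ — which is too large. Here is where the hypothesis $b\in B$ is essential: $P$ ends at $b$ and $Q$ starts at $b$, so $b\in V(P_1)\cap B$ and $b\in V(P_2)\cap B$, meaning neither $P_1$ nor $P_2$ is disjoint from $B$. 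I would therefore re-read the definition of $D$ in Lemma~\ref{lem:cover-concat} and note that in the rewriting of $D$ used inside that lemma's proof, the relevant summands are the gaps \emph{between} consecutive $B$-hitting pieces; since both pieces hit $B$, the only relevant single-piece summands are $d_{z_i}+\ldots+d_{z_{i+1}}$ where consecutive $z_i$ differ by one, yielding exactly $\max(d_P,d_Q)$.

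The main obstacle — really the only subtlety — is matching the exact convention in the statement of Lemma~\ref{lem:cover-concat}: whether the max there is literally over all pairs $(x,z)$ subject to the interior-emptiness condition (in which case I must argue the pair $(1,2)$ is excluded because one of the \emph{endpoint} pieces $P_1,P_2$ meets $B$, which is not literally forbidden by the written condition), or whether the intended reading already restricts $x$ and $z$ to be indices of $B$-hitting pieces. If it is the former, the clean fix is to invoke the lemma not on $(P,Q)$ directly but to first split $P=P'\cdot P''$ so that $P''$ is a single edge or short path ending at $b\in B$ and likewise $Q=Q'\cdot Q''$ with $Q'$ starting at $b$; more straightforwardly, I would simply appeal to the alternative formulation in the proof of Lemma~\ref{lem:cover-concat}: $D=\max(d_1+\ldots+d_{z_1},\,d_{z_1}+\ldots+d_{z_2},\,\ldots)$ where $z_1,\ldots,z_r$ enumerate the $B$-hitting pieces. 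Here $r=2$, $z_1=1$, $z_2=2$, so $D=\max(d_1,\,d_1+\ldots+d_{z_2}\text{-type gaps})$; since $z_2-z_1=1$ there is no accumulation, and $D=\max(d_1,d_2)=\max(d_P,d_Q)$, which is exactly the claim.
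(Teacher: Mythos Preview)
Your reduction to Lemma~\ref{lem:cover-concat} does not give the claimed bound. With $l=2$, $P_1=P$, $P_2=Q$, the pair $(x,z)=(1,2)$ is admissible (the open interval $(1,2)$ contains no integers, so the interior condition is vacuous), and it contributes $d_1+d_2=d_P+d_Q$. So Lemma~\ref{lem:cover-concat} only tells you $PQ$ is $(B,d_P+d_Q)$-covered, which is strictly weaker than what you need. You correctly identify this as the obstacle, but your workaround via the ``alternative formulation'' from the proof of Lemma~\ref{lem:cover-concat} is miscomputed: with $z_1=1$, $z_2=2$, the term $d_{z_1}+\ldots+d_{z_2}$ is $d_1+d_2$, not $\max(d_1,d_2)$. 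The claim ``since $z_2-z_1=1$ there is no accumulation'' is simply false for that formula. The hypothesis $b\in B$ is not exploited anywhere in Lemma~\ref{lem:cover-concat}'s statement or proof, so no amount of re-reading will extract $\max(d_P,d_Q)$ from it as a black box.

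The paper's proof does not go through Lemma~\ref{lem:cover-concat} at all; it works directly from Definition~\ref{def:covered}. Write $P=P_1\ldots P_k$ with each $\hops{P_i}\le d_P$ and $P_i$ starting in $B$ for $i\ge 2$, and $Q=Q_1\ldots Q_l$ analogously. The point is that $Q_1$ begins at $b\in B$, so in the concatenated list $P_1,\ldots,P_k,Q_1,\ldots,Q_l$ every piece except possibly $P_1$ starts in $B$, and every piece has hop-length at most $\max(d_P,d_Q)$. That is exactly the definition of $(B,\max(d_P,d_Q))$-covered. This two-line argument is what you should write; the detour through Lemma~\ref{lem:cover-concat} both fails and obscures where $b\in B$ is used.
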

\begin{proof}
  Let $P=P_1\ldots P_k$ be such that for all $i$, $\hops{P_i}\leq d_P$
  and for all $i\geq 2$, $P_i$ starts at a vertex of $B$.
  Similarly, 
  let $Q=Q_1\ldots Q_l$ be such that for all $i$, $\hops{Q_i}\leq d_Q$
  and for all $i\geq 2$, $Q_i$ starts at a vertex of $B$.
  By $b\in B$, all paths $P_2,\ldots,P_k,Q_1,\ldots,Q_l$
  start with a vertex of $B$.
  Moreover, each path of $P_1,\ldots,P_k,Q_1,\ldots,Q_l$ has
  length no more than $\max(d_P,d_Q)$.
  We conclude that the path $PQ$ is $(B,\max(d_P,d_Q))$-covered.
\end{proof}

\begin{proof}[Proof of Lemma~\ref{lem:incremental-hubs}]
  Let $u,v\in V$ be such that $\dist_{G'}(u,v)<\infty$.
  We need to show that there exists a shortest path $P=u\to v$ in $G'$ such that
  $P$ is $(H_d',d)$-covered.

  Since $G \subseteq G'$, $\dist_{G'}(u,v) \leq \dist_G(u,v)$
  If $\dist_{G'}(u,v)=\dist_G(u,v)$, then there exists a shortest path $P=u\to v$
  such that $P\subseteq G\subseteq G'$.
  Hence, $P$ is $(H_d,d)$-covered and thus also $(H_d',d)$-covered.
  
  If $\dist_{G'}(u,v)<\dist_G(u,v)$, then all shortest $u\to v$ paths
  go through the edge $xy$.
  Let $P=P_1(xy)P_2$ be any such path, where $P_1=u\to x$ and $P_2=v\to y$
  are shortest paths in $G$.
  Hence, there also exist shortest paths $P_1'=u\to x$ and $P_2'=v\to y$,
  also contained in $G$, such that $P_1'$ and $P_2'$ are both
  $(H_d,d)$-covered and hence also $(H_d',d)$-covered.
  Observe that the single-edge path $(x,y)$ is $(H_d',1)$-covered.

  By Lemma~\ref{lem:cover-concat-simple},
  since $x\in H_d'$ and $d\geq 1$, the path $P_1'(x,y)$ is $(H_d',d)$ covered.
  Again by Lemma~\ref{lem:cover-concat-simple}, since $y\in H_d'$,
  and both paths $P_1'(x,y)$ and $P_2'$ are $(H_d',d)$-covered, $P_1'(x,y)P_2'$ is 
  $(H_d',d)$-covered as well.
  We conclude that $H_d'$ is indeed a $d$-hub set of $G'$.
\end{proof}
\subsection{The Data Structure Components}

Let $d>1$ be an even integer and let $\eps_1$, $0<\eps_1<\eps$ be a real number, both to be set later.
Our data structure consists of several components. Each subsequent
component builds upon the previously defined components only.
\paragraph{Exact shortest paths between nearby vertices.}
The data structure maintains two collections $\treecol^\tfrom=\{T^\tfrom_v: v\in V\}$ and $\treecol^\tto=\{T^\tto_v : v\in V\}$
of shortest path trees up to depth $\frac{d}{2}$ 
in $G$ and $\rev{G}$, resp.
By Theorem~\ref{thm:estree}, each tree of $\treecol^\tfrom\cup\treecol^\tto$ can be maintained under edge insertions
in $O(md)$ total time. The total time spent in this component is hence $O(nmd)$.
\paragraph{The hubs.}
A $d$-hub set $H_d$ of both $G$ and $\rev{G}$ such that
$|H_d|=O\left(\frac{n}{d}\log{n}\right)$ is maintained at all times, as follows.
Initially, $H_d$ is computed in $O(n^2\log{n})$ time using Lemma~\ref{lem:king}
and the trees of $\treecol^\tfrom\cup \treecol^\tto$ (see Lemma~\ref{lem:tree-hubs}).
Next, the data structure operates in phases.
Each phase spans $f=\Theta(\frac{n}{d}\log{n})$ consecutive edge insertions.
When an edge $xy$ is inserted, its endpoints are inserted into $H_d$.
By Lemma~\ref{lem:incremental-hubs}, this guarantees that $H_d$ remains
a $d$-hub set of both $G$ and $\rev{G}$ after the edge insertion.
Once $f$ edges are inserted in the current phase, the phase ends
and the hub set $H_d$ is recomputed from scratch, again using Lemma~\ref{lem:king}.
Observe that the size of $|H_d|$ may at most triple within each phase.

The total time spent on maintaining the set $H_d$
is clearly $O\left(\frac{m}{f}\cdot n^2\log{n}\right)=O(nmd)$.
\paragraph{Approximate shortest paths between the hubs.}
In each phase, we maintain a \emph{weighted} graph $A=(H_d,E_A)$, where
$E_A=\{uv:u,v\in H_d, \dist_{\rev{G}}(u,v)\leq d\}$
and $\wei_A(uv)=\dist_{T^\tto_u}(u,v)=\dist_{\rev{G}}(u,v)\leq d$.
Observe that during each phase, the graph $A$ is in fact incremental.
We can thus maintain $(1+\eps_1)$-approximate distance estimates
$\dist_A'(u,v)$ for all $u,v\in H_d$
in $O(|H_d|^3\log^4{n}/\eps_1)=O\left(\left(\frac{n}{d}\right)^3\log^7{n}/\eps_1\right)$
total time per phase, using a data structure $\dstr_A$ of Theorem~\ref{thm:dense-incremental}.\footnote{Technically
speaking, the total update time of the data structure of Theorem~\ref{thm:dense-incremental}
is $O(n^3\log^4{n}/\eps')+O(\Delta)$.
However, all updates to $\dstr_A$ arise when some previous component
updates its explicitly maintained estimates,
so the $\Delta$ term is asymptotically no more than the total update time
of the previously defined components and can be charged to those. In the following,
we omit $\Delta$ terms like this without warnings.}

Summing over all phases, the total time spent on maintaining
the $(1+\eps')$-approximate
distances estimates of the graph $A$ is
$O\left(m\left(\frac{n}{d}\right)^2\log^6{n}/\eps_1\right)$.
\begin{restatable}{lemma}{shortpathapsp}\label{lem:short-path-apsp}
  For any $u,v\in H_d$, $\dist_{\rev{G}}(u,v)=\dist_A(u,v)$.
\end{restatable}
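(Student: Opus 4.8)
The plan is to prove the two inequalities $\dist_A(u,v) \le \dist_{\rev{G}}(u,v)$ and $\dist_{\rev{G}}(u,v) \le \dist_A(u,v)$ separately, exploiting the fact that $H_d$ is a $d$-hub set of $\rev{G}$ and that the edge weights of $A$ encode exact shortest path distances of hop-length at most $d$ between hubs. Throughout, I would keep in mind that the statement is asserted within a single phase, so the relevant graph $A$ is the one built from the current $\treecol^{\tto}$ and the current $H_d$.

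For the inequality $\dist_{\rev{G}}(u,v) \le \dist_A(u,v)$: every edge $uv \in E_A$ has weight $\wei_A(uv) = \dist_{\rev{G}}(u,v)$, so it corresponds to an actual $u \to v$ path in $\rev{G}$ of that length (taken from the tree $T^{\tto}_u$, which is a genuine subgraph of $\rev{G}$). Hence any $u \to v$ path in $A$ can be expanded edge-by-edge into a genuine $u \to v$ walk in $\rev{G}$ of the same total length, and a walk contains a path of no greater length, so $\dist_{\rev{G}}(u,v) \le \len_A(P)$ for every such path $P$; taking the minimum gives the bound. (If there is no $u \to v$ path in $A$, then $\dist_A(u,v) = \infty$ and the inequality is trivial.)

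For the reverse inequality $\dist_A(u,v) \le \dist_{\rev{G}}(u,v)$: if $\dist_{\rev{G}}(u,v) = \infty$ there is nothing to prove, so assume it is finite. Since $H_d$ is a $d$-hub set of $\rev{G}$ and $u,v \in H_d$, there is a shortest $u \to v$ path $P$ in $\rev{G}$ that is $(H_d,d)$-covered, i.e.\ $P = P_1 \cdots P_k$ with each $P_i = u_i \to v_i$ satisfying $\hops{P_i} \le d$, where $u_1 = u$, $v_k = v$, $v_{i-1} = u_i$, and $u_i \in H_d$ for $i \ge 2$; note $u_1 = u \in H_d$ as well, and $v_i = u_{i+1} \in H_d$ for $i < k$ while $v_k = v \in H_d$, so every $P_i$ runs between two vertices of $H_d$. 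Each $P_i$ is a subpath of a shortest path and therefore itself a shortest $u_i \to v_i$ path in $\rev{G}$, with $\hops{P_i} \le d$; consequently $\dist_{\rev{G}}(u_i, v_i) = \len(P_i) = \dist^{d}_{\rev{G}}(u_i,v_i) \le d$, so $u_i v_i \in E_A$ with $\wei_A(u_i v_i) = \dist_{\rev{G}}(u_i,v_i) = \len(P_i)$ (here I use that the tree $T^{\tto}_{u_i}$ up to depth $d/2$ — wait, I must double-check this: the trees are maintained up to depth $\frac{d}{2}$, whereas a hub-covered subpath can have up to $d$ edges). The main obstacle, then, is reconciling the hop bound $d$ coming from the $d$-hub property with the depth-$\frac{d}{2}$ trees used to define $E_A$: I expect that one should instead invoke that $H_d$ is the blocker set from Lemma~\ref{lem:tree-hubs} applied to trees of depth $\frac{d}{2}$, which makes $H_d$ a $d$-hub set ($= 2 \cdot \frac{d}{2}$), so each hub-covered subpath $P_i$ decomposes further into at most two pieces of hop-length $\le \frac{d}{2}$, each of which is a root-to-vertex path of some tree $T^{\tto}_w$ and hence realized exactly by an edge of $A$; chaining these $A$-edges yields a $u \to v$ walk in $A$ of total length $\len(P) = \dist_{\rev{G}}(u,v)$, and a walk in $A$ contains an $A$-path of no greater length, giving $\dist_A(u,v) \le \dist_{\rev{G}}(u,v)$.

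Combining the two inequalities yields $\dist_A(u,v) = \dist_{\rev{G}}(u,v)$ for all $u,v \in H_d$, completing the proof. The one delicate point to nail down carefully is exactly the depth-vs-hop bookkeeping in the previous paragraph — i.e.\ verifying that the covering guaranteed by the hub property is fine enough that each elementary piece is captured by a single weighted edge of $A$ — but this is precisely the content of how $E_A$ and $\wei_A$ were defined together with Lemma~\ref{lem:cover-concat} (or Lemma~\ref{lem:cover-concat-simple}) applied to the decomposition of $P$.
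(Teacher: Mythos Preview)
Your overall strategy---prove $\dist_{\rev{G}}(u,v)\le\dist_A(u,v)$ by expanding edges of $A$ into paths in $\rev{G}$, and prove $\dist_A(u,v)\le\dist_{\rev{G}}(u,v)$ by taking an $(H_d,d)$-covered shortest path $P=P_1\cdots P_k$ and reading off the edges $u_iv_i$ in $A$---is exactly the paper's proof.

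Where you go astray is the ``wait'' paragraph. You worry that the trees $T^\tto_u$ have depth $\tfrac{d}{2}$ while the subpaths $P_i$ may have hop-length up to $d$, and you then try to repair this by a further decomposition into pieces of hop-length $\le\tfrac{d}{2}$. That repair is both unnecessary and, as sketched, does not work: there is no reason the midpoint of such a $P_i$ should lie in $H_d$, so you cannot in general split $P_i$ into two $A$-edges. The point you are missing is that, by the \emph{stated definition} of $A$, one has $u_iv_i\in E_A$ with $\wei_A(u_iv_i)=\dist_{\rev{G}}(u_i,v_i)$ whenever $u_i,v_i\in H_d$ and $\dist_{\rev{G}}(u_i,v_i)\le d$; the threshold in the definition of $E_A$ is $d$, not $\tfrac{d}{2}$. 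Hence each $P_i$ already corresponds to a single edge of $A$ of weight $\len(P_i)$, and concatenating these edges yields a $u\to v$ path in $A$ of length $\len(P)=\dist_{\rev{G}}(u,v)$---precisely what the paper does. (Your observation that the trees have depth $\tfrac{d}{2}$ is a fair reading of the text, but it is a concern about how $A$ is \emph{maintained}, not about the truth of the lemma given the definition of $A$; the lemma's proof uses only that definition.)

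In short: drop the attempted $\tfrac{d}{2}$-decomposition and keep your first argument, which is already the paper's.
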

\begin{proof}
  Let $u,v\in H_d$. If $\dist_{\rev{G}}(u,v)=\infty$, then clearly $\dist_A(u,v)=\infty$
  as the edges of $A$ correspond to paths in $\rev{G}$.
  For the same reason, we have $\dist_A(u,v)\geq\dist_{\rev{G}}(u,v)$.

  Suppose $\dist_{\rev{G}}(u,v)<\infty$.
  By the definition of $H_d$, there exists a shortest path $P=u\to v$ in $\rev{G}$
  such that $P$ is $(H_d,d)$-covered.
  Hence, $P$ can be expressed as $P_1\ldots P_k$, where $P_i=u_i\to v_i$,
  $u_i,v_i\in H_d$ and $\hops{P_i}\leq d$ for all $i$.
  Since $P_i$ is a shortest $u_i\to v_i$ path in $\rev{G}$, the edge $u_iv_i$
  has length $\hops{P_i}$ in $A$.
  Consequently, the path $(u_1v_1)(u_2v_2)\ldots (u_kv_k)=u\to v$ has
  length $\hops{P}=\dist_{\rev{G}}(u,v)$ in~$A$.
  We thus obtain $\dist_A(u,v)\leq\dist_{\rev{G}}(u,v)$,
  which implies $\dist_A(u,v)=\dist_{\rev{G}}(u,v)$.
\end{proof}

By the above lemma, for each $u,v\in H_d$ we actually have $\dist'_A(u,v)\leq (1+\eps_1)\dist_{\rev{G}}(u,v)$.
\paragraph{Shortcuts to hubs.}
For each $u\in V$, let $S_u$ be a graph on $V$ with exactly $n$
edges $\{uv:v\in V\}$ satisfying $\wei_{S_u}(u,v)\geq \dist_{\rev{G}}(u,v)$
for all $v\in V$, and additionally $\wei_{S_u}(u,v)\leq (1+\eps_1)\dist_{\rev{G}}(u,v)$ if $u,v$ both 
currently belong to $H_d$.
The edges between vertices of $H_d$ are the only ones that our algorithm needs to compute approximate distances.
For other edges we only need to make sure they will not cause the algorithm to underestimate the distances.

Observe that the graphs $S_u$ can be maintained using the previously
defined components as follows. First, they are initialized so that their
edges are all infinite-weight.
Whenever the data structure $\dstr_A$ changes
(or initializes) some of its estimates $\dist_A'(u,v)\leq (1+\eps_1)\dist_{\rev{G}}(u,v)$,
we perform $\wei_{S_u}(u,v):=\min(\wei_{S_u}(u,v),\dist_A'(u,v))$.
This guarantees that the invariants posed on $S_u$ are always
satisfied and $S_u$ is incremental.
The total number of updates to all graphs $S_u$ is equal to the number
of estimate updates made by $\dstr_A$ and thus can be neglected.

For $u \in V$, we set up a $h$-SSSP data structure $\dstr_{u}$ of Theorem~\ref{thm:hsssp}
for the graph $\rev{G}\cup S_u$ with source vertex $u$ and  $h=d+1$.
Hence, $\dstr_{u}$ maintains distance estimates $\dist'(u,v)$ such that
$\dist'(u,v)\leq (1+\eps')\dist_{\rev{G}\cup S_u}^{d+1}(u,v)$.
As the graph $\rev{G}\cup S_u$ is incremental and has $O(m)$ edges, the total
time that $\dstr_u$ needs to operate is
$O(md\log^2{n}/\eps_1+\Delta_u)$,
where $\Delta_u$ is the total number of updates 
to $\rev{G}\cup S_u$.
Summing the update times for all data structures $\dstr_u$, we obtain
$O(nmd\log^2{n}/\eps_1+\sum_{v\in V} \Delta_u)$
total
time.
Note that $\sum_{u\in V}\Delta_u$ equals $nm$ plus the number
of updates to the graphs $S_u$, which can be charged to the operating
cost of data structure $\dstr_A$, as argued before.
We conclude that the total update time of all $\dstr_u$ is
$O(nmd\log^2{n}/\eps_1)$.

Observe that a shortest $u \to v$ path in $\rev{G}$, where $u \in H_d$ and $v \in V$ is approximated by a path in $\rev{G} \cup S_u$ consisting of at most $d+1$ edges.
The first edge belongs to $S_u$ and ``jumps" to some hub.
The latter (at most $d$) edges belong to $\rev{G}$.
This is formalized as follows.

\begin{restatable}{lemma}{incrshortcutrev}\label{lem:incr-shortcut-rev}
  Let $u\in H_d$ and $v\in V$. Then $\dist_{\rev{G}\cup S_u}^{d+1}(u,v)\leq (1+\eps_1)\dist_{\rev{G}}(u,v)$.
\end{restatable}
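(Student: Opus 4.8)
The plan is to exhibit one concrete $u\to v$ path in $\rev{G}\cup S_u$ with at most $d+1$ edges whose length is at most $(1+\eps_1)\dist_{\rev{G}}(u,v)$; the claimed bound on $\dist_{\rev{G}\cup S_u}^{d+1}(u,v)$ then follows by definition. If $\dist_{\rev{G}}(u,v)=\infty$ there is nothing to prove, so assume it is finite.

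Since $H_d$ is a $d$-hub set of $\rev{G}$, there is a shortest $u\to v$ path $P$ in $\rev{G}$ that is $(H_d,d)$-covered; write $P=P_1\cdots P_k$ with $P_i=u_i\to v_i$, $\hops{P_i}\le d$, $v_i=u_{i+1}$, $v_k=v$, and $u_i\in H_d$ for every $i\ge 2$. If $k=1$, then $P$ itself lies in $\rev{G}\subseteq\rev{G}\cup S_u$, has $\hops{P}\le d\le d+1$, and length exactly $\dist_{\rev{G}}(u,v)$, so we are done. Now suppose $k\ge 2$. The crucial observation is that the starting vertex $u_k$ of the last segment $P_k$ is one of $u_2,\dots,u_k$, hence $u_k\in H_d$, while $u=u_1\in H_d$ by hypothesis. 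The prefix $P_1\cdots P_{k-1}$ is a $u\to u_k$ path of $\hops{P}-\hops{P_k}$ edges, so $\dist_{\rev{G}}(u,u_k)\le \hops{P}-\hops{P_k}$.

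I would then invoke the invariant maintained on the shortcut graph $S_u$: since $u,u_k\in H_d$, the edge $uu_k$ of $S_u$ satisfies $\wei_{S_u}(u,u_k)\le(1+\eps_1)\dist_{\rev{G}}(u,u_k)$. Consider the path $Q$ consisting of this single shortcut edge followed by $P_k\subseteq\rev{G}$; this is a $u\to v$ path in $\rev{G}\cup S_u$ with $\hops{Q}=1+\hops{P_k}\le d+1$. Since $\rev{G}$ is unweighted, $\len_{\rev{G}\cup S_u}(P_k)\le\hops{P_k}$, hence
\[
\len(Q)\le \wei_{S_u}(u,u_k)+\hops{P_k}\le(1+\eps_1)\dist_{\rev{G}}(u,u_k)+(1+\eps_1)\hops{P_k}\le(1+\eps_1)\hops{P}=(1+\eps_1)\dist_{\rev{G}}(u,v),
\]
using $\hops{P}=\dist_{\rev{G}}(u,v)$ because $P$ is a shortest path. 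Therefore $\dist_{\rev{G}\cup S_u}^{d+1}(u,v)\le\len(Q)\le(1+\eps_1)\dist_{\rev{G}}(u,v)$.

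The one subtle point I would be careful to spell out is why the shortcut $uu_k$ carries a weight this small even though $\dist_{\rev{G}}(u,u_k)$ may exceed $d$, so that $uu_k\notin E_A$: the estimates $\dist_A'(\cdot,\cdot)$ maintained by $\dstr_A$ are full all-pairs estimates on the current graph $A$, and $\dist_A(u,u_k)=\dist_{\rev{G}}(u,u_k)<\infty$ by Lemma~\ref{lem:short-path-apsp}, so $\dstr_A$ maintains a finite value $\le(1+\eps_1)\dist_{\rev{G}}(u,u_k)$ that gets propagated into $\wei_{S_u}(u,u_k)$ — which is exactly the invariant on $S_u$ that I rely on. No other step presents a difficulty.
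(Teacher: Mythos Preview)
Your proof is correct and follows essentially the same approach as the paper's: decompose a $(H_d,d)$-covered shortest $u\to v$ path in $\rev{G}$, take the single shortcut edge $uu_k\in S_u$ to the start of the last segment, and append $P_k$. The paper handles the case $k=1$ implicitly (since $u_1=u\in H_d$ anyway) and simply quotes the $S_u$ invariant rather than re-deriving it from Lemma~\ref{lem:short-path-apsp}, but these are cosmetic differences.
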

\begin{proof}
  Recall that $H_d$ is a $d$-hub set of $\rev{G}$.
  It follows that there exists a shortest path $P=P_1\ldots P_k=u\to v$
  such that for each $i$, $P_i=u_i\to v_i$, $\hops{P_i}\leq d$ and $u_i\in H_d$
  (since $u_1=u\in H_d$).
  Consider the edge $e=uu_k$ of $S_u$.
  As $u,u_k\in H_d$, $\wei_{S_u}(e)\leq (1+\eps_1)\dist_{\rev{G}}(u,u_k)=
  (1+\eps_1)\len(P_1\ldots P_{k-1})$.
  Moreover, clearly $P_k\subseteq \rev{G}\cup S_u$.
  The path $P'=eP_k=u\to v$ thus has $d+1$ edges and $P'\subseteq \rev{G}\cup S_u$.
  Consequently, we obtain
  \begin{align*}
    \dist_{\rev{G}\cup S_u}^{d+1}(u,v) &\leq\len(P')\\
    &=\wei_{S_u}(e)+\len(P_k)\\
    &\leq(1+\eps_1)\len(P_1\ldots P_{k-1})+\len(P_k)\\
    &\leq (1+\eps_1)\len(P)\\
    &=(1+\eps_1)\dist_{\rev{G}}(u,v).
    \qedhere
  \end{align*}
\end{proof}

By the above lemma, we conclude that for $u\in H_d$, $v\in V$, 
the estimate $\dist'(u,v)$ produced by the data structure $\dstr_v$
satisfies $\dist'(u,v)\leq (1+\eps_1)\dist_{\rev{G}\cup S_u}^{d+1}(u,v)\leq (1+\eps_1)^2\dist_{\rev{G}}(u,v)$.

\paragraph{All-pairs approximate shortest paths.} We maintain
another set of shortcut graphs $R_u$, for $u\in V$.
Again $R_u$ has exactly $n$ edges $\{uv:v\in V\}$
whose weights satisfy $\wei_{R_u}(uv)\geq \dist_G(u,v)$
for all $v$ and $\wei_{R_u}(uv)\leq (1+\eps_1)^2 \dist_G(u,v)$
if $v\in H_d$.
Each graph $R_u$ is maintained using the previously defined
data structures $\dstr_v$.
Initially all weights of $R_u$ are infinite.
Whenever some $\dstr_v$ changes the estimate
$\dist'(v,u)$, we set $\wei_{R_u}(uv):=\min(\wei_{R_u}(uv),\dist'(v,u))$.
Since for $v\in H_d$ we have $\dist'(v,u)\leq (1+\eps_1)^2\dist_{\rev{G}}(v,u)$,
equivalently, $\dist'(v,u)\leq (1+\eps_1)^2\dist_{G}(u,v)$
and we obtain $\wei_{R_u}(uv)\leq (1+\eps_1)^2\dist_G(u,v)$.
Therefore, the graphs $R_u$ are all incremental and the total number
of changes they are subject to is no more than the total
number of estimate changes made by the data structures
$\dstr_v$, $v\in V$.
Thus, we may neglect the cost of actually performing these changes.

Finally, for each $u\in V$ we set up a $h$-SSSP data structure $\dstr'_u$ of Theorem~\ref{thm:hsssp}
on graph $G\cup R_u$ with source $u$ and $h=d+1$,
maintaining $(1+\eps_1)$-approximate estimates of the values $\dist^{d+1}_{G\cup R_u}(u,\cdot)$.
Similarly as was the case for the data structures $\dstr_u$ of the previous
component, as the graphs $G\cup R_u$ are incremental,
the total operating time of the $h$-SSSP instances
running on the graphs $G\cup R_u$ is
$O(nmd\log^2{n}/\eps_1)$.

\begin{restatable}{lemma}{incrshortcut}
  Let $u, v\in V$. Then $\dist_{G\cup R_u}^{d+1}(u,v)\leq (1+\eps_1)^2 \dist_G(u,v)$.
\end{restatable}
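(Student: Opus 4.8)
Plan for proving the last lemma ($\dist_{G\cup R_u}^{d+1}(u,v)\leq (1+\eps_1)^2 \dist_G(u,v)$).

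This should be the mirror image of Lemma~\ref{lem:incr-shortcut-rev}, now working in $G$ rather than $\rev{G}$ and using the shortcut graph $R_u$ in place of $S_u$. The plan is to recall that $H_d$ is a $d$-hub set of $G$, so for any $u,v\in V$ with $\dist_G(u,v)<\infty$ there is a shortest $u\to v$ path $P = P_1 P_2 \cdots P_k$ that is $(H_d,d)$-covered, with $P_i = u_i\to v_i$, $\hops{P_i}\leq d$, and $u_i\in H_d$ for $i\geq 2$. Here the difference from Lemma~\ref{lem:incr-shortcut-rev} is that $u_1=u$ need not be a hub, but the breakpoint we want to ``jump to'' is the last hub on $P$, namely $u_k$ (for $k\geq 2$; the case $k=1$ is handled separately, where $P$ itself has at most $d$ edges and lies in $G$, so the bound is trivial with no loss).

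The key step is to replace the prefix $P_1\cdots P_{k-1}$ (a $u\to u_k$ path) by a single edge of $R_u$. We have $u_k\in H_d$, so by the invariant on $R_u$, $\wei_{R_u}(u u_k)\leq (1+\eps_1)^2\dist_G(u,u_k)\leq (1+\eps_1)^2\len(P_1\cdots P_{k-1})$. The suffix $P_k$ is a path of at most $d$ edges lying in $G\subseteq G\cup R_u$. Concatenating gives a path $P' = (u u_k)\cdot P_k$ of at most $d+1$ edges in $G\cup R_u$, and then
\begin{align*}
\dist_{G\cup R_u}^{d+1}(u,v) &\leq \len(P') = \wei_{R_u}(u u_k) + \len(P_k)\\
&\leq (1+\eps_1)^2\len(P_1\cdots P_{k-1}) + \len(P_k)\\
&\leq (1+\eps_1)^2\len(P) = (1+\eps_1)^2\dist_G(u,v).
\end{align*}
If $\dist_G(u,v)=\infty$ the statement is vacuous.

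The only mildly delicate point — the ``main obstacle,'' though it is minor — is getting the hub-covering structure pointed the right way: unlike in $\rev{G}\cup S_u$ where the first subpath's start $u_1=u$ was a hub and we jumped forward from $u$, here the jump has to be made to the last hub breakpoint $u_k$ rather than from the first, because in $R_u$ the cheap edges are those $u v$ with $v\in H_d$, not $u\in H_d$. One must also handle the degenerate cases cleanly (no path at all, or $k=1$ so that there is no hub breakpoint to jump to and $P$ already has at most $d$ edges). Everything else is a routine repetition of the argument in Lemma~\ref{lem:incr-shortcut-rev}, with the extra $(1+\eps_1)$ factor in the $R_u$ invariant (coming from the accumulated loss through $\dstr_v$ and then $S$) accounting for the squared error bound.
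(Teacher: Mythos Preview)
Your proposal is correct and follows essentially the same approach as the paper: decompose a $(H_d,d)$-covered shortest $u\to v$ path as $P_1\cdots P_k$, handle $k=1$ directly, and for $k\ge 2$ replace the prefix $P_1\cdots P_{k-1}$ by the single $R_u$-edge $uu_k$ (using $u_k\in H_d$ and the $R_u$ invariant) and concatenate with $P_k$. The paper's proof is identical in structure and in the calculation.
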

\begin{proof}
  Recall that $H_d$ is a $d$-hub set of $G$. Hence, if $\dist_G(u,v)<\infty$, there
  exists a shortest path $P=P_1\ldots P_k=u\to v$, where $P_i=u_i\to v_i$,
  $\hops{P_i}\leq d$ and $u_i\in H_d$ for all $i\geq 2$.
  If $k=1$ then $\hops{P}\leq d$ and as a result $\dist_{G\cup R_u}^{d+1}(u,v)\leq \dist_G^d(u,v)=\dist_G(u,v)$.

  Suppose $k\geq 2$. Then $u_k\in H_d$ and consequently the weight of the edge $e=uu_k\in E(R_u)$
  satisfies $\wei_{R_u}(e)\leq (1+\eps_1)^2\dist_G(u,u_k)=(1+\eps_1)^2\hops{P_1\ldots P_{k-1}}$.
  The path $P'=eP_k\subseteq G\cup R_u$ satisfies $\len(P')\leq (1+\eps_1)^2\len(P)=(1+\eps_1)^2\dist_G(u,v)$
  and $\hops{P'}\leq d+1$.
  Hence, indeed $\dist_{G\cup R_u}^{d+1}(u,v)\leq\len(P')\leq (1+\eps_1)^2\dist_G(u,v)$.
\end{proof}

By the above lemma, the the distance estimates $\dist''(u,v)$ maintained
by the data structure $\dstr'_u$, approximate the corresponding
distances $\dist_G(u,v)$ within a factor of $(1+\eps_1)^3$.

\begin{restatable}{theorem}{incrtheorem}
  Let $G$ be a directed unweighted graph. There exists a deterministic incremental
  algorithm maintaining $(1+\eps)$-approximate distance estimates between all
  pairs of vertices of $G$ in
  $O(mn^{4/3}\log^{10/3}{n}/\eps)$
  total time.
\end{restatable}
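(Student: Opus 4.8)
The plan is to set the two parameters of the construction so that the three stacked approximation layers collapse into a single $(1+\eps)$ factor and the total running time becomes the claimed bound, then invoke the component analyses already carried out above. We have three instances of $h$-SSSP stacked on top of each other (trees $\treecol$, then $\dstr_A$, then $\dstr_u$, then $\dstr'_u$), each contributing a multiplicative $(1+\eps_1)$. Tracing the bounds established above: the estimates $\dist_A'$ are $(1+\eps_1)$-approximate to $\dist_{\rev G}$ on $H_d\times H_d$ by Lemma~\ref{lem:short-path-apsp}; composing with $\dstr_u$ and Lemma~\ref{lem:incr-shortcut-rev} gives $\dist'(u,v)\le (1+\eps_1)^2\dist_{\rev G}(u,v)$ for $u\in H_d$; composing with $\dstr'_u$ and the final lemma gives that $\dist''(u,v)$ is a $(1+\eps_1)^3$-approximation of $\dist_G(u,v)$. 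So first I would set $\eps_1=\Theta(\eps/\log n)$ — in fact just $\eps_1 = \eps/4$ suffices since there are only a constant number of layers — so that $(1+\eps_1)^3\le 1+\eps$ using $\eps<1$; one also needs $1/\eps_1=\poly n$, which holds by the standing assumption on $\eps$.

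Next I would add up the running times of the five components, each of which is bounded above:
\begin{itemize}
\item the trees $\treecol^\tfrom,\treecol^\tto$: $O(nmd)$;
\item maintaining $H_d$: $O(nmd)$;
\item maintaining $\dstr_A$ across all phases: $O\!\left(m(n/d)^2\log^6 n/\eps_1\right)$;
\item the shortcuts-to-hubs data structures $\dstr_u$: $O(nmd\log^2 n/\eps_1)$;
\item the final all-pairs data structures $\dstr'_u$: $O(nmd\log^2 n/\eps_1)$.
\end{itemize}
All the lower-order terms involving $\Delta$ are absorbed into these as explained above (in the unweighted incremental setting $\Delta\le m$, and the synthetic updates to $S_u$ and $R_u$ are charged to the estimate changes of $\dstr_A$ and $\dstr_u$ respectively). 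With $\eps_1=\Theta(\eps)$ the dominant terms are $O(nmd\,\polylog n/\eps)$ from the SSSP components and $O\!\left(m(n/d)^2\polylog n/\eps\right)$ from $\dstr_A$.

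The final step is to balance these two terms by choosing $d$. Setting $nd = (n/d)^2$, i.e. $d = n^{1/3}$, equalizes them at $O(mn^{4/3}\polylog n/\eps)$; since $d>1$ is required to be an even integer, take $d=2\lceil n^{1/3}/2\rceil=\Theta(n^{1/3})$, which only changes constants. Substituting and bookkeeping the logarithmic factors — the $\dstr_A$ term carries $\log^7 n$ per phase but summed over $m/f$ phases with $f=\Theta((n/d)\log n)$ it becomes $\log^6 n$, while the $h$-SSSP terms carry $\log^2 n$ and an extra $\log n$ from $|H_d|=O((n/d)\log n)$ is hidden inside $d$ — gives a total of $O(mn^{4/3}\log^{10/3} n/\eps)$, matching the statement; a cruder bookkeeping giving $O(mn^{4/3}\polylog n/\eps)$ would already suffice for the qualitative claim. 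Correctness is immediate from the chain of approximation lemmas together with the fact, established by Lemma~\ref{lem:incremental-hubs} and Lemma~\ref{lem:king} via Lemma~\ref{lem:tree-hubs}, that $H_d$ is a valid $d$-hub set of both $G$ and $\rev G$ at every point in time; determinism holds because every component invoked (Even–Shiloach trees, King's blocker-set algorithm, Bernstein's $h$-SSSP, and the dense incremental algorithm of Theorem~\ref{thm:dense-incremental}) is deterministic. I do not expect a genuine obstacle here; the only mildly delicate point is making sure the precise exponent $10/3$ on the logarithm comes out right, which is a matter of carefully combining the per-phase $\log^6 n$ factor of $\dstr_A$ with the fact that $\log n$ is effectively a constant power of $d=\Theta(n^{1/3})$ — but for the paper's purposes any fixed polylogarithmic factor would do.
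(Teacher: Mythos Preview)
Your approach is correct and matches the paper's: fix $\eps_1=\Theta(\eps)$ so that the three stacked $(1+\eps_1)$ factors compose to at most $1+\eps$, then balance the two dominant terms $O(nmd\log^2 n/\eps_1)$ and $O(m(n/d)^2\log^6 n/\eps_1)$ by choosing $d$. The only inaccuracy is in the final balancing step. With your choice $d=\Theta(n^{1/3})$ the $\dstr_A$ term becomes $O(mn^{4/3}\log^6 n/\eps)$, not $\log^{10/3}n$, and your informal explanation for how the exponent $10/3$ arises does not hold up. The paper instead balances the two terms \emph{including their logarithmic factors} by taking $d=n^{1/3}\log^{4/3} n$; then
\[
nmd\log^2 n/\eps = mn^{4/3}\log^{10/3} n/\eps
\qquad\text{and}\qquad
m(n/d)^2\log^6 n/\eps = mn^{4/3}\log^{10/3} n/\eps,
\]
giving the stated bound (the paper uses $\eps_1=\eps/6$). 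Everything else in your write-up is fine.
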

\begin{proof}
  Let $\eps_1=\eps/6$. Then the estimates $\dist''(u,v)$ produced by the data
  structures $\dstr'_u$ approximate the actual distances
  within a factor of $\left(1+\frac{\eps}{6}\right)^3\leq 1+\eps$.
  The total time used by the data structure is clearly
  $O\left(nmd\log^2{n}/\eps+m\left(\frac{n}{d}\right)^2\log^6{n}/\eps\right)=O\left(nm\log^2{n} \cdot \left(d+\frac{n}{d^2}\log^4{n}\right)/\eps\right)$.
  By setting
  $d=n^{1/3}\log^{4/3}{n}$,
  we obtain the desired bound.
\end{proof}

\section{Partially-Dynamic Verification of a Sampled Hub Set}\label{sec:verification}

In this section we show how to maintain the information whether a sampled set
remains a hub set of an unweighted digraph $G$ subject to partially dynamic updates.
For simplicity, assume that $G$ is decremental (the incremental case, being somewhat
easier, can be handled similarly).
We start by showing how a reliable hub set can be found if we are given
shortest path trees up to depth $d$ from all vertices of $G$,
stored in dynamic tree data structures.


\begin{restatable}[\cite{UY91,Zwick02}]{lemma}{randomsubset}\label{lem:random-subset}
  Let $V$ be a vertex set of size $n$ and let $d>0$ be an integer.
  Let $\treecol$ be a collection of rooted trees of depth no more than $d$,
  whose vertex sets are subsets of $V$.

  Let $c>1$ be some positive constant.
  Let $B$ be a random subset of $V$ of size $\min\left(\lceil c\frac{n}{d}\ln{n}\rceil, n\right)$.
  Then, $B$ is a $(\treecol,d)$-blocker set with probability at least
  $\max(0,1-|\treecol|/n^{c-1})$.
\end{restatable}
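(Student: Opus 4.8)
The plan is to carry out the standard Ullman--Yannakakis / Zwick sampling argument: bound the probability that a single root-to-depth-$d$ branch of a single tree in $\treecol$ is avoided by $B$, and then take a union bound over all such branches in all trees. First I would dispose of the degenerate case: if $\lceil c\frac{n}{d}\ln{n}\rceil\geq n$, then $B=V$ deterministically, and $V$ is trivially a $(\treecol,d)$-blocker set (any $v$ with $\treedep{T}{v}=d$ satisfies $v\in V=B$), so the statement holds with probability $1$. Hence I may assume $s:=\lceil c\frac{n}{d}\ln{n}\rceil<n$, so that $B$ is a uniformly random $s$-element subset of the $n$-element set $V$ and $s\geq c\frac{n}{d}\ln{n}$.

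Next I would analyze a single branch. Fix $T\in\treecol$ and a vertex $v\in V(T)$ with $\treedep{T}{v}=d$; the unique root-to-$v$ path in $T$ contains exactly one vertex at each depth $1,2,\ldots,d$, and I let $\pi_v$ denote this $d$-element set (its depth-$d$ element being $v$ itself). If $B\cap\pi_v\neq\emptyset$, then $v$ or one of its ancestors lies in $B$; hence if $B\cap\pi_v\neq\emptyset$ for every $T\in\treecol$ and every depth-$d$ vertex $v$ of $T$, then by Definition~\ref{def:blocker} $B$ is a $(\treecol,d)$-blocker set. So it suffices to upper bound $\prob[B\cap\pi_v=\emptyset]$ for a fixed $d$-element set $\pi_v$ and then union-bound. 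Since $B$ is a uniform $s$-subset of $V$,
\[
\prob[B\cap\pi_v=\emptyset]=\frac{\binom{n-d}{s}}{\binom{n}{s}}=\prod_{i=0}^{d-1}\frac{n-s-i}{n-i}\leq\left(1-\frac{s}{n}\right)^{d}\leq e^{-sd/n}\leq e^{-c\ln n}=n^{-c},
\]
where the penultimate step uses $1-x\leq e^{-x}$ and the last uses $s\geq c\frac{n}{d}\ln{n}$.

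Finally I would take the union bound. Each $T\in\treecol$ has at most $|V(T)|\leq n$ vertices of depth exactly $d$, so there are at most $|\treecol|\cdot n$ sets of the form $\pi_v$. Hence
\[
\prob[B\text{ is not a }(\treecol,d)\text{-blocker set}]\leq\sum_{T\in\treecol}\sum_{\substack{v\in V(T)\\ \treedep{T}{v}=d}}\prob[B\cap\pi_v=\emptyset]\leq\frac{|\treecol|\cdot n}{n^{c}}=\frac{|\treecol|}{n^{c-1}},
\]
and since probabilities are nonnegative this gives $\prob[B\text{ is a }(\treecol,d)\text{-blocker set}]\geq\max\left(0,\,1-|\treecol|/n^{c-1}\right)$, as claimed. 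There is no genuine obstacle here; the only points needing a moment's care are that sampling without replacement only helps (each factor satisfies $\frac{n-s-i}{n-i}\leq 1-\frac{s}{n}$ because $n-i\leq n$), and that the union bound ranges over at most $n$ branches per tree — which is exactly what converts the per-branch bound $n^{-c}$ into the claimed $n^{-(c-1)}$.
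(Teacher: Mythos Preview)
Your proof is correct and follows essentially the same approach as the paper: dispose of the trivial case $B=V$, bound the probability that $B$ misses a fixed root-to-depth-$d$ branch by $n^{-c}$, and union-bound over at most $|\treecol|\cdot n$ branches. If anything, your argument is slightly more careful than the paper's, which writes the miss probability as $\left(1-\frac{c\ln n}{d}\right)^{d+1}$ without justifying why sampling without replacement permits multiplying the marginal probabilities; your explicit hypergeometric computation $\prod_{i=0}^{d-1}\frac{n-s-i}{n-i}\leq(1-s/n)^d$ makes this step rigorous.
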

\begin{proof}
  If $\lceil c\frac{n}{d}\ln{n}\rceil\geq n$, then $B=V$ and the lemma obviously holds.
  
  Suppose $\lceil c\frac{n}{d}\ln{n}\rceil<n$.
  For each $v\in V$, $\prob[v\notin B]\leq 1-\frac{c\ln{n}}{d}$.
  Hence, for a fixed $T\in\treecol$ and $w\in T$ such that $\treedep{T}{w}=d$,
  the probability that neither $w$ nor any of its ancestors in $T$ belong
  to $B$ is no more than $\left(1-\frac{c\ln{n}}{d}\right)^{d+1}\leq e^{-c\ln{n}}=n^{-c}.$
  Therefore, the probability that this happens for any $T$ and $w$,
  is no more than $|\treecol|/n^{c-1}$.
\end{proof}


\begin{restatable}{lemma}{dyntree}\label{lem:dyntree}
  Let $V$ be some set of $n$ vertices. Let $F$ be a forest of (initially single-vertex) rooted
  out-trees over $V$ such that the vertex sets of the individual trees of $F$
  form a partition of $V$.
  For $v\in V$, let $T_v\in F$ denote the unique tree of $F$ containing $v$.

  There exists a data structure for
  dynamically maintaining $F$ (initially consisting of $n$ $1$-vertex trees)
  and supporting the following operations in $O(\log{n})$ time each:
  \begin{enumerate}
    \item $\tparop(v)$: if $v$ is not the root of $T_v$, return its parent. Otherwise return $\nil$.
    \item $\tlink(u,v)$: assuming $T_u\neq T_v$ and that $u$ is the root of $T_u$, make $u$ a child
      of $v$ in $T_v$.
    \item $\tcut(v)$: assuming $v$ is not the root of $T_v$, split $T_v$ into two trees
      by removing the edge between $v$ and its parent.
    \item $\tdepth(v)$: return the depth of the tree $T_v$.
  \end{enumerate}

\end{restatable}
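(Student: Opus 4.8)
The statement is precisely the classical link-cut tree result of Sleator and Tarjan \cite{SleatorT83, Tarjan97}, except that we additionally need an explicit $\tdepth(v)$ query returning the depth of the whole tree $T_v$ (i.e.\ $\max_{w\in V(T_v)}\treedep{T_v}{w}$), rather than just the depth of a single vertex. So the plan is: take a link-cut tree as a black box for $\tparop$, $\tlink$, $\tcut$ and single-vertex depth, and then augment it so that the maximum depth over each represented tree can be reported in $O(\log n)$ time.

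\textbf{Step 1: the standard link-cut tree.} Recall that a link-cut forest represents each rooted tree by a partition of its edges into \emph{preferred paths}, each stored as a balanced binary search tree (a splay tree) keyed by depth along the path, and the preferred-path trees are glued together by \emph{path-parent} pointers. The operations $\tparop$, $\tlink$, $\tcut$ are all implemented on top of a single primitive, $\texttt{access}(v)$, which makes the root-to-$v$ path preferred; an $\texttt{access}$ costs $O(\log n)$ amortized, and $O(\log n)$ worst-case if one uses the worst-case variant (e.g.\ with globally biased trees, or balanced binary trees instead of splay trees), which suffices here. Maintaining, in each splay-tree node, the number of nodes in its subtree and an additive "lazy" offset lets us read off $\treedep{T_v}{v}$ after an $\texttt{access}(v)$, so item~1--3 and the single-vertex depth are immediate from the literature.

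\textbf{Step 2: augmenting with subtree-max depth.} The new ingredient is $\tdepth(v)$. For this I would store in each link-cut tree node $x$ two auxiliary fields aggregated bottom-up in the usual way: $\ell(x)$, the depth of $x$ in its own tree, maintained via the standard lazy "add a constant to a whole subtree" trick (so that rerooting/splicing of preferred paths only touches $O(1)$ such lazy tags per structural change); and $\textsf{maxd}(x)$, the maximum of $\ell(\cdot)$ over all descendants of $x$ \emph{in the tree $T$ being represented} — not just inside the preferred path. Concretely, $\textsf{maxd}(x)$ is computed from: $\ell(x)$ itself, the $\textsf{maxd}$ values of $x$'s left and right children inside its splay tree, and the $\textsf{maxd}$ values of the roots of the splay trees hanging off $x$ via path-parent pointers. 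The only subtlety is that a node's set of path-children can be large; this is handled, exactly as in the textbook treatment of subtree aggregates in link-cut trees, by keeping for each preferred-path node a secondary balanced tree (or a single aggregate value updated on every splice) over its non-preferred children's $\textsf{maxd}$ values, so that each $\texttt{access}$, which performs $O(\log n)$ splices, updates these secondary aggregates in $O(\log n)$ total time. With this machinery, $\tdepth(v)$ is implemented by $\texttt{access}(\text{root of }T_v)$ — reached by walking path-parent pointers from $v$, or simply calling $\texttt{access}(v)$ and then following the splay tree to its minimum-depth element — and returning $\textsf{maxd}$ of the resulting splay-tree root, which by construction equals $\max_{w\in V(T_v)}\treedep{T_v}{w}$.

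\textbf{Main obstacle.} The substantive point — and the one I would be most careful to get right — is propagating the \emph{subtree} aggregate $\textsf{maxd}$ correctly and cheaply through the path-parent structure, since a vertex can have many path-children and a naive recomputation on each splice would be too slow; the resolution is the standard "store a balanced tree of the path-children's contributions (or an incrementally updated max) at each node" device, together with the observation that $\texttt{access}$, $\tlink$, $\tcut$ each cause only $O(\log n)$ splices and hence only $O(\log n)$ updates to these secondary structures. Everything else — the lazy depth offsets, the rebalancing of the primary trees, the reduction of $\tparop$/$\tlink$/$\tcut$ to $\texttt{access}$ — is routine and quotable from \cite{SleatorT83, Tarjan97}. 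I would therefore present Step~1 by citation and devote the written proof to Step~2, verifying the invariant that after every operation each node's $\textsf{maxd}$ equals the true maximum tree-depth over its represented subtree, and tallying the $O(\log n)$ cost per operation.
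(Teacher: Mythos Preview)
Your approach is sound but takes a noticeably harder route than the paper. The paper does \emph{not} use path-decomposition link-cut trees; it uses Tarjan's Euler-tour based dynamic trees \cite{Tarjan97}, which already support, as black-box operations on an unrooted forest, $link$, $cut$, $add\text{-}val(v,x)$ (add $x$ to every vertex value in $T_v$), and $find\text{-}max\text{-}val(v)$ (return a maximum-value vertex in $T_v$). The paper then simply keeps $val(v)=\treedep{T_v}{v}$: a separate parent table handles $\tparop$; $\tlink(u,v)$ is $add\text{-}val(u,find\text{-}val(v)+1)$ followed by $link(u,v)$; $\tcut(v)$ is $cut(\tpar(v),v)$ followed by $add\text{-}val(v,-y)$ where $y$ was the old depth of $v$; and $\tdepth(v)$ is just $find\text{-}max\text{-}val(v)$. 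Because in an Euler-tour tree the whole represented tree is a single contiguous sequence, ``add to everything'' and ``max over everything'' are native $O(\log n)$ operations---no auxiliary machinery is needed.

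By contrast, your plan bolts a \emph{subtree} aggregate onto a structure designed for \emph{path} aggregates, which is exactly the obstacle you flag. The fix you sketch (a secondary balanced tree over each node's non-preferred children) is correct, but your time accounting is optimistic: an $\texttt{access}$ performs $O(\log n)$ preferred-child switches, and each switch requires one insertion \emph{and one deletion} in some secondary max-structure, costing $O(\log n)$ apiece; the natural bound is therefore $O(\log^2 n)$ per operation, not $O(\log n)$. Achieving $O(\log n)$ for subtree-max under link/cut genuinely needs something like top trees or the Euler-tour approach. So either switch to Euler-tour trees (as the paper does) and the whole proof becomes a two-line reduction, or be explicit that you are invoking a top-tree-style structure rather than vanilla Sleator--Tarjan link-cut trees.
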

\begin{proof}
  We store our forest $F$ as a collection of the Euler-tour based dynamic trees of Tarjan \cite{Tarjan97}.
  This data structure is capable of maintaining a forest $F'$ of \emph{unrooted} and \emph{undirected}
  trees over $V$ such that each $v\in V$ is additionally associated
  a real value $val(v)$. 
  Similarly, let us denote by $T_v'$ the unique tree of the forest containing $v$.
  The supported operations are as follows:
  \begin{itemize}
    \item $link(u,v)$: if $T_u'\neq T_v'$, then combine $T_u'$ and $T_v'$
      by adding the edge $uv$.
    \item $cut(u,v)$: assuming $T_u'=T_v'$ and $uv\in E(T_u')$, break
      the tree $T_u'$ into two by removing the edge $uv$.
    \item $change\textnormal{-}val(v,x)$: set $val(v):=x$.
    \item $find\textnormal{-}val(v)$: return $val(v)$.
    \item $add\textnormal{-}val(v,x)$: add $x\in \mathbb{R}$ to $val(w)$ of every vertex $w\in V(T_v')$.
    \item $find\textnormal{-}max\textnormal{-}val(v)$: return a vertex $w$ with maximum $val(w)$ in $T_v'$.
  \end{itemize}

  Even though the data structure of \cite{Tarjan97} maintains a forest
  of unrooted undirected trees, we show that the above operations
  are sufficient to maintain the depth of each out-tree (which both rooted and directed) of $F$.

  To be able to operate on out-trees, for each vertex $v\in V$, we explicitly store its parent $\tpar(v)$
  in a table (if $v$ is the root
  of $T_v$, $\tpar(v)=\nil$).
  Initially we set $\tpar(v)=\nil$ for all $v\in V$.
  $\tparop(v)$ simply returns $\tpar(v)$.
  The forest $F'$ stored in our data structure of \cite{Tarjan97} always consists of (undirected) edges of the form $\tpar(v)v$,
  for $v$ such that $\tpar(v)\neq\nil$.
  Hence, $F'$ is initially empty.

  For each $v$, we will guarantee that the value $val(v)$ maintained by the data structure of \cite{Tarjan97} is
  equal to $\treedep{T_v}{v}$.
  To this end, we initially perform $change\textnormal{-}val(v,0)$ for each $v\in V$.
  By the invatiant posed on the values $val(v)$, and since the depth of any $T_v$ is the maximum depth over all its vertices,
  we can implement $\tdepth(v)$ with $find\textnormal{-}max\textnormal{-}val(v)$.

  $\tcut(v)$ is implemented as follows. We first record $y:=find\textnormal{-}val(v)$.
  Then we perform $cut(\tpar(v),v)$ on $F'$ and set $\tpar(v)$ to $\nil$.
  Afterwards, $T_v$ is a tree rooted in $v$.
  At this point the values $val(w)$ for all $w\in V(T_v)$ exceed the
  respective depths in $T_v$ by exactly $y$.
  So, in order to update the vertex depths in the new tree $T_v$, we perform
  $add\textnormal{-}val(v,-y)$.
  Clearly, the depths in the other obtained tree (the one that does not contain $v$) need not be updated.

  To implement $\tlink(u,v)$, we first set $\tpar(u):=v$. Then, we update
  the depths in $T_u$ before it is connected to $T_v$ by calling
  $add\textnormal{-}val(u,find\textnormal{-}val(v)+1)$.
  Finally, we call $link(u,v)$.

  Since each of the operations $\tdepth$, $\tcut$ and $\tlink$ translate
  into a constant number of operations on the data structure of \cite{Tarjan97} storing $F'$,
  all operations run in $O(\log{n})$ time.
\end{proof}

\begin{lemma}\label{lem:las-vegas-static}
  Let $V$ be a vertex set, $n=|V|$, and let $d>0$ be integral.
  Let $\treecol$ be a collection of rooted trees over $V$ of depth no more than $d$,
  where $|\treecol|=O(\poly{n})$.
  Suppose each $T\in\treecol$ is given as a separate data structure
  of Lemma~\ref{lem:dyntree} and for each $T\in\treecol$, $\troot(T)$ is known.

  Then, there exists a Las Vegas randomized algorithm computing
  a $(\treecol,d)$-blocker set $B$ of size $O\left(\frac{n}{d}\log{n}\right)$
	in $O(|\treecol|\cdot \frac nd \cdot \log^2{n})$ time
  with high probability.
\end{lemma}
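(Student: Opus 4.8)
The plan is to combine the random-sampling idea of Lemma~\ref{lem:random-subset} with an efficient deterministic \emph{verification} procedure, repeating the sample-and-verify loop until success. The key new ingredient beyond Lemma~\ref{lem:random-subset} is that, given the trees of $\treecol$ stored in the dynamic-tree data structures of Lemma~\ref{lem:dyntree}, one can test whether a candidate set $B$ of size $O(\tfrac nd\log n)$ is a $(\treecol,d)$-blocker set in time $\Ot(|\treecol|\cdot|B|)=\Ot(|\treecol|\cdot\tfrac nd)$, which matches the target bound since the number of trials is $O(1)$ with high probability.

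\textbf{Step 1: the verification subroutine.} Fix a candidate $B$. For a single tree $T\in\treecol$ I want to decide whether removing from $T$ all subtrees rooted at vertices of $B$ leaves a tree of depth $<d$ (this is the equivalent formulation in Definition~\ref{def:blocker}). Using the operations of Lemma~\ref{lem:dyntree}, I would iterate over the vertices $b\in B\cap V(T)$ and call $\tcut(b)$, which detaches the subtree rooted at $b$ (first checking via $\tparop$ that $b$ is not the current root; if some $b\in B$ is the root of $T$, then $B$ trivially blocks $T$). After all the cuts, the component still containing $\troot(T)$ is exactly $T'$, so I query $\tdepth(\troot(T))$ and check whether it is $<d$. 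To restore $T$ for the next candidate, I undo the cuts with $\tlink$ in reverse order (remembering for each $b$ its parent $\tpar_T(b)$ before the cut). Each tree contributes $O(|B\cap V(T)|)$ dynamic-tree operations, each costing $O(\log n)$, so the total verification cost over all of $\treecol$ is $O\!\left(\sum_{T}(|B\cap V(T)|+1)\log n\right)=O(|\treecol|\cdot|B|\cdot\log n)=O\!\left(|\treecol|\cdot\tfrac nd\cdot\log^2 n\right)$.

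\textbf{Step 2: the sample-and-verify loop.} Draw $B$ uniformly at random of size $\min(\lceil c\tfrac nd\ln n\rceil,n)$ for a suitable constant $c$, run the verification of Step~1, and if it reports that $B$ is \emph{not} a blocker set, discard $B$ and resample. By Lemma~\ref{lem:random-subset}, with $c$ chosen large enough relative to $|\treecol|=O(\poly n)$ the failure probability of one trial is at most $n^{-2}$ (say), so the expected number of trials is $1+o(1)$ and the loop terminates within $O(1)$ trials with high probability; on the rare bad event it still terminates (the algorithm is Las Vegas — always correct, only the running time is randomized, and $B=V$ is eventually returned if $\lceil c\tfrac nd\ln n\rceil\ge n$). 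Multiplying the per-trial cost of Step~1 by the $O(1)$ bound on the number of trials gives the claimed $O\!\left(|\treecol|\cdot\tfrac nd\cdot\log^2 n\right)$ running time with high probability, and $|B|=O(\tfrac nd\log n)$ by construction.

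\textbf{Main obstacle.} The only delicate point is making the verification \emph{non-destructive}: the dynamic-tree structures must be left in their original state after each (failed or successful) trial, which requires carefully recording, before each $\tcut(b)$, the current parent $\tpar_T(b)$ — and being careful about the order, since cutting at $b$ changes the ancestor/descendant structure but never the parent of a vertex outside the removed subtree, so as long as I cut in a fixed order and relink in the exact reverse order the tree is faithfully restored. A secondary subtlety is ensuring the $O(1)$-trials bound is genuinely \emph{with high probability} and not merely in expectation: this follows since each trial fails independently with probability $\le n^{-2}$, so the probability of more than, say, $2$ failures is $\le n^{-4}$, which we fold into the ``high probability'' guarantee. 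Everything else is a routine accounting of $O(\log n)$-time dynamic-tree calls.
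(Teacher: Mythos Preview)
Your proposal is correct and follows essentially the same approach as the paper: sample a random candidate $B$ of size $O(\tfrac{n}{d}\log n)$, verify it by cutting each $b\in B$ from its parent in every $T\in\treecol$, querying $\tdepth(\troot(T))$, and then relinking to restore $T$; repeat until verification succeeds, which takes $O(1)$ trials with high probability by Lemma~\ref{lem:random-subset}. The paper's version is slightly terser about the restoration (it simply records $p_b=\tparop(b)$ before each cut and relinks afterward without worrying about order, since each $b$ remains a root of its component until its own $\tlink$ call), but your more cautious reverse-order relinking is harmless.
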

\begin{proof}
  Let $|\treecol|=O(n^\alpha)$ for some $\alpha>0$.
  The algorithm is to simply repeatedly pick random subsets $B$ of $V$
  of size $\min(\lceil (\alpha+2)\frac{n}{d}\ln{n}\rceil,n)$
  until $B$ succeeds in being a $(\treecol,d)$-blocker set.
  By Lemma~\ref{lem:random-subset}, for a random $B$, the probability
  that this is not the case is at most $\frac{1}{n}$.
  Hence, the probability that we fail finding a $(\treecol,d)$-blocker 
  set after $k=O(1)$ trials is at most $1/n^k$.

  We thus only need to show how to verify whether a set $B$
  is actually a $(\treecol,d)$-blocker set in $O(|\treecol|\cdot |B|\log{n})$ time.
  Recall that for a single $T\in\treecol$, if the depth of $T$ is no
  more than $d$, 
  then~$B$ is a $(T,d)$-blocker set if the tree $T'$ obtained from $T$
  by removing all subtrees rooted in vertices of~$B$, has depth less than $d$.
    Consequently, to verify whether $B$ is a $(T,d)$-blocker set, we take
  advantage of the fact that $T$ is stored in a data structure
  of Lemma~\ref{lem:dyntree}.

  We first check whether $r=\troot(T)\in B$.
  If this is the case, $B$ is a $(T,d)$-blocker set in a trivial way.
  Otherwise, for each $b\in B$, we store $p_b=\tparop(b)$ and perform $\tcut(b)$.
  Afterwards, one can see that $B$ is a $(T,d)$-blocker set if and only if $\tdepth(r)<d$.
  Finally, we revert all the performed $\tcut$ operations by running $\tlink(b,p_b)$
  for all $b\in B$.

  Clearly, the time needed to verify whether $B$ is a $(T,d)$-blocker set
  for any $T\in\treecol$, is $O(|B|\log{n})$.
  Hence, one can check whether $B$ is a $(\treecol,d)$-blocker set in $O(|\treecol|\cdot|B|\log{n})$
  time.
\end{proof}

Now we move on to the problem of detecting when a sampled set
ceases to be a valid hub set of $G$. 
In fact, our algorithm will solve a bit more general problem (which is anyway needed
for applications, as we will see later), as follows.

Let $|V|=n=a_0> a_1> \ldots > a_q=1$ be some sequence of integers
such that $a_i\mid a_{i-1}$.
For each $i=0,\ldots,q$, let $A_i$ be a random $a_i$-subset (a subset of size $a_i$)
of $V$.
By Lemmas~\ref{lem:tree-hubs}~and~\ref{lem:random-subset}, each $A_i$ is in fact 
an $\Theta\left((n / a_i)\ln{n}\right)$-hub set of $G$
with high probability.

We would like to detect when some $A_i$ ceases to be
an $\Theta\left((n / a_i)\ln{n}\right)$-hub set of $G$
while $G$ undergoes edge deletions.
Using this terminology, both state-of-the-art
Monte-Carlo randomized algorithms for decremental exact shortest paths \cite{BaswanaHS07}
and partially-dynamic $(1+\eps)$-approximate shortest paths \cite{Bernstein16} (for unweighted digraphs) use randomness
only for constructing hub sets $A_0,\ldots,A_q$ (they use $a_i=2^{q-i}$, but in fact any $a_i=c^{q-i}$, where $c$ is a positive integer,
would be sufficient for these algorithms to work),
valid simultaneously for all versions of the input graph with high probability
(the sets $A_i$ satisfy this, as we will later show).

Without loss of generality, we can assume that given the sets $A_0,\ldots,A_q$,
the algorithms of \cite{BaswanaHS07, Bernstein16} proceed deterministically
(we discuss these algorithms in a more detailed way in Appendix~\ref{s:baswanabernstein}).
Suppose we develop an efficient partially dynamic algorithm $\mathcal{A}$
\emph{verifying} whether each $A_i$ remains a
$\Theta\left((n / a_i)\ln{n}\right)$-hub set of $G$ (i.e.,
$\mathcal{A}$ is supposed to detect that some $A_i$ ceases
to be a $\Theta\left((n/a_i)\ln{n}\right)$-hub set immediately
after this happens)
and producing \emph{false negatives} with low probability
(the algorithm is guaranteed to be correct if it says that all
$A_i$ have the desired property but might be wrong saying that some $A_i$ is no longer
a hub set).
Then, we could use $\mathcal{A}$ to convert the algorithms
of \cite{BaswanaHS07, Bernstein16} into \emph{Las Vegas} algorithms
by drawing new sets $A_0,\ldots,A_q$ and
restarting the respective algorithms whenever $\mathcal{A}$ detects (possibly incorrectly)
that any of these sets ceases to be a hub set.
As this does not happen w.h.p.,
with high probability the overall asymptotic running time remains unchanged.
The remainder of this section is devoted to describing such an algorithm $\mathcal{A}$.

\begin{restatable}{lemma}{treemaintain}\label{lem:tree-maintain}
  Let $d>0$ be an integer.
  Let $F$ be a forest of out-trees of depth no more than $d$ over~$V$.
  Denote by $T_v$ the unique tree of $F$ containing $v\in V$.
  Let $B\subseteq V$ be fixed.
  
  There exists a data structure with update time $O(\log{n})$, maintaining the information whether $B$
  is a $(F,d)$-blocker set, subject to updates to $F$ of the following types:
  \begin{itemize}
    \item cut the subtree rooted in $v$ out of $T_v$ where $v\in V$ and $v$ is not the root of $T_v$,
    \item make the tree $T_r$ a child of $v\in T_v$ where $r\in V$ is the root of $T_r$ and $v\notin T_r$,
  \end{itemize}
\end{restatable}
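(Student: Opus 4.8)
The plan is to maintain, for each tree $T$ of the forest $F$, the "residual depth" of $T$ after conceptually deleting all subtrees rooted at vertices of $B$ — call it $\tilde{d}(T)$ — and to keep a global counter of how many trees have $\tilde{d}(T) = d$; $B$ is a $(F,d)$-blocker set exactly when this counter is $0$. To compute residual depths efficiently I would build a second auxiliary forest $F^{*}$ on vertex set $V$ that mirrors $F$ but with all edges $\tpar_F(b)\,b$ for $b\in B$ removed (equivalently: $F^{*}$ is obtained from $F$ by cutting every $b\in B$ away from its parent). The key point is that the depth of $v$ in its tree of $F^{*}$ equals exactly the depth (in $F$) of $v$ within the "block'' that contains it after the deletions, and $\tilde{d}(T)$ for a tree $T\in F$ is the maximum over the roots-of-blocks $w$ lying on root-to-leaf paths of $T$ of $\bigl(\text{depth of } w \text{ in } T\bigr) + \bigl(\text{residual height of the block of } w\bigr)$; but a cleaner formulation is that $B$ fails to be a blocker set for $T$ iff some vertex at depth $d$ in $T$ has no $B$-ancestor, and that in turn holds iff in $F^{*}$ there is a vertex whose $F^{*}$-depth equals its $F$-depth and whose $F$-depth is $d$. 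I would therefore store $F^{*}$ in the dynamic-tree structure of Lemma~\ref{lem:dyntree}, maintaining in parallel with $\tdepth$ queries also the quantity (via the $val$ mechanism of that structure) that tracks $F$-depth, so that a single augmented $find\text{-}max\text{-}val$-type query on a component of $F^{*}$ reports whether that component contributes a "bad'' depth-$d$ leaf. Actually, the simplest self-contained route: represent in the Lemma~\ref{lem:dyntree} structure a forest $F'$ whose trees are the blocks of $F$ (i.e., $F$ with the $b$-to-parent edges cut), where the value at each vertex $v$ is its depth \emph{in the original tree $T_v$ of $F$}; then $B$ is a blocker set for $F$ iff no block-tree of $F'$ has maximum value equal to $d$, since a depth-$d$ vertex of $F$ with no $B$-ancestor is precisely a vertex whose block in $F'$ has it at original-depth $d$.

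The main work is then to show that each of the two allowed updates to $F$ translates into $O(1)$ operations on this auxiliary structure plus $O(1)$ updates to the "bad-block'' counter. For the cut operation — remove the subtree rooted at $v$ (with $v$ not a root, parent $p$) — there are two cases depending on whether $v\in B$. If $v\in B$, then the edge $p\,v$ is already absent in $F'$ (we had cut it), so in $F'$ there is literally nothing to do to the block structure; however, the \emph{values} (original $F$-depths) of all vertices in $v$'s block and below must stay correct, and since we are removing those vertices from $F$ altogether they simply form their own new trees of $F$ — so no value update is needed, and the block containing $v$ either was already bad or not; its status is unchanged. If $v\notin B$, then in $F'$ the vertex $v$ currently sits inside the block of some $B$-ancestor (or of the root of $T_v$); we must cut the block containing $v$ into two at the edge $p\,v$ — but wait, $p$ might itself be separated from $v$ in $F'$ only if $v\in B$, which we excluded, so $p\,v\in E(F')$ and we call $cut$ on it. This splits one block-tree of $F'$ into two; we recompute the max-value of each of the two resulting block-trees (each a single $find\text{-}max\text{-}val$ after the appropriate $add\text{-}val$ correction, exactly as in the proof of Lemma~\ref{lem:dyntree}), see whether each equals $d$, and update the global bad-block counter by the difference. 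The key subtlety: all the original-$F$-depth \emph{values} of vertices still in $F$ are unaffected by this cut, because we are removing an entire subtree of $F$, so the $F$-depths of the surviving vertices do not change — I would flag this as the first place where one must argue carefully, contrasting with the $\tcut$ of Lemma~\ref{lem:dyntree} where depths \emph{do} shift. For the link operation — make $T_r$ (root $r$) a child of $v$ — again split on $r\in B$ versus $r\notin B$: if $r\in B$, we do \emph{not} add the edge $v\,r$ to $F'$ (it would be cut immediately), so the block structure of $F'$ is untouched and we only need to correct the original-$F$-depth values of the entire former tree $T_r$ (which are currently its depths within $T_r$, i.e.\ rooted at $r$ with value $0$ at $r$, and must become those plus $\text{depth}_F(v)+1$) — this is one $add\text{-}val(r, \text{value}(v)+1)$ call, and since no block changed, no bad-block status changes, though one must check that no block of $T_r$ had max-value $d$ that now, after the shift, could matter — it cannot, because $d$ is a strict upper bound on tree depth in $F$ by hypothesis, so the shift keeps everything $\le d$, and a block whose max \emph{was} already $d$ would have violated the depth bound before the link unless it was the whole of $T_r$ with $r$ at depth $0$... here is exactly where I expect to have to be most careful about the precondition "depth no more than $d$'' and what "bad'' means; I would phrase it as: $B$ is a blocker set iff no block-tree of $F'$ contains a vertex at original-$F$-depth exactly $d$, and maintain the count of block-trees containing such a vertex. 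If $r\notin B$, we additionally add the edge $v\,r$ to $F'$ via $link$, merging $r$'s block with $v$'s block; we recompute the merged block's max-value (one more $add\text{-}val$ on $r$'s side plus a $find\text{-}max\text{-}val$) and update the counter.

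Putting it together: the data structure stores the Lemma~\ref{lem:dyntree} forest $F'$ together with per-block max-original-depth info (obtainable in $O(\log n)$ via the $find\text{-}max\text{-}val$ primitive composed with a check of that vertex's stored value) and an integer counting "bad'' blocks; each of the two update types is handled by $O(1)$ dynamic-tree operations and $O(1)$ counter adjustments, hence $O(\log n)$ time per update; and the answer to "is $B$ a $(F,d)$-blocker set'' is "the counter is zero,'' queryable in $O(1)$. The main obstacle — and the part I would write out in full detail — is the case analysis verifying that the original-$F$-depth values stored in the structure remain correct under both operations (using that cut removes a whole subtree so surviving depths are stable, while link shifts an entire component uniformly by a constant, realizable by a single $add\text{-}val$), together with pinning down precisely the equivalence "$B$ is a blocker set $\iff$ no block has a vertex at original depth $d$,'' which follows directly from Definition~\ref{def:blocker} once one observes that the $F'$-block of a vertex $v$ is exactly the set of vertices $u$ with $v$ on the $F$-path from $u$'s nearest $B$-ancestor-or-root down to $u$, so a depth-$d$ vertex of $F$ has all of itself and its ancestors outside $B$ iff it is at original depth $d$ within the block rooted at the root of its $F$-tree.

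\begin{remark}
An alternative, perhaps cleaner, presentation avoids a second forest entirely: keep only the quantity $\treedep{}{}$-analogue but additively biased so that entering a $B$-vertex resets the count. Concretely, one could store in a single Lemma~\ref{lem:dyntree}-style structure, for the forest $F$ itself, values $\mu(v)$ equal to the number of edges on the path from $v$ up to (but not including) its nearest $B$-proper-ancestor-or-root, and separately track $\treedep{F}{v}$; then $B$ is a blocker set iff no $v$ has $\treedep{F}{v}=d$ and $\mu(v)=\treedep{F}{v}$. Maintaining $\mu$ under the two operations is the content of the argument above; the two-forest picture is just a bookkeeping device for implementing "reset at $B$-vertices'' using only the given $cut$/$link$ primitives.
\end{remark}
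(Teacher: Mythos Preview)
Your core decomposition — store the ``block forest'' $F'$ obtained by cutting every edge into a $B$-vertex — is exactly what the paper does, but the quantity you attach to vertices is the wrong one, and this breaks both the correctness criterion and the $O(1)$-operation update argument.

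\textbf{The criterion is false.} You claim that $B$ is a blocker set iff no block of $F'$ has a vertex with original $F$-depth equal to $d$. Take $T$ to be a single directed path $r=v_0,v_1,\ldots,v_d$ and let $B=\{v_1\}$. Then $B$ \emph{is} a $(T,d)$-blocker set (the unique depth-$d$ vertex $v_d$ has $v_1\in B$ as an ancestor), yet the block rooted at $v_1$ contains $v_d$, whose original $F$-depth is $d$. Your test therefore reports ``not a blocker set''. The mistake is in the sentence ``a depth-$d$ vertex of $F$ with no $B$-ancestor is precisely a vertex whose block in $F'$ has it at original-depth $d$'': every depth-$d$ vertex has original depth $d$ regardless of which block it sits in; what distinguishes the bad ones is that their \emph{block root} is the root of the $F$-tree, not a $B$-vertex.

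\textbf{The values cannot be maintained in $O(1)$ dynamic-tree operations.} When you link $T_r$ below $v$, the original $F$-depth of \emph{every} vertex of the old $T_r$ increases by $\treedep{}{v}+1$. But $T_r$ may be split across many blocks of $F'$ (one per $B$-vertex it contains), and a single $add\textnormal{-}val(r,\cdot)$ touches only the block containing $r$. Symmetrically, when you cut at $v$, the detached subtree becomes a new $F$-tree rooted at $v$, so all its vertices get \emph{smaller} $F$-depths; ``no value update is needed'' is incorrect, and again the subtree may span many $F'$-blocks. Your own parenthetical (``contrasting with the $\tcut$ of Lemma~\ref{lem:dyntree} where depths \emph{do} shift'') is precisely the warning sign.

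\textbf{What the paper does instead.} Store $F'$ in the structure of Lemma~\ref{lem:dyntree} and let the tracked value be the \emph{depth within $F'$} (block depth), i.e.\ exactly your $\mu(v)$ from the Remark. Then $B$ is a blocker set iff every tree of $F'$ has depth $<d$: a vertex with block-depth $d$ necessarily sits in a block whose root has $F$-depth $0$, hence has no $B$-ancestor (modulo the benign edge case that the $F$-root itself lies in $B$). Crucially, block depth is local to each $F'$-tree, so a cut in $F$ is just $\tcut(v)$ on $F'$ (or nothing, if $v\in B$), and a link is just $\tlink(r,v)$ on $F'$ when $r\notin B$ and nothing when $r\in B$; the $\tdepth$ query on the affected tree then updates a counter of ``bad'' blocks. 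Your Remark essentially rediscovers this, but you then propose to maintain $\treedep{F}{v}$ alongside $\mu(v)$ and test both — the second quantity is neither needed nor cheaply maintainable on $F'$, whereas $\mu$ alone suffices.
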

\begin{proof}
    We use the data structure of Lemma~\ref{lem:dyntree} to store $F$.
  However, each $T\in F$ is represented in this data structure
  as a collection $F_T$ of either $|V(T)\cap B|$ (if the root of $T$ is in $B$) or $|V(T)\cap B|+1$ (otherwise) trees:
  the maximal subtrees of $T$ that intersect with $B$ either
  only in their respective roots or not at all.

  Recall that for a single $T\in F$, if the depth of $T$ is no
  more than $d$, 
  then $B$ is a $(T,d)$-blocket set if the tree $T'$ obtained from $T$
  by removing all subtrees rooted in vertices of $B$ has depth less than~$d$.
  Equivalently, $B$ is a $(T,d)$-blocker set if the depth
  of all trees of $F_T$ is less than $d$.
  Consequently, $B$ is a $(F,d)$-blocker set if the depth
  of all trees of $\bigcup_{T\in F} F_T$ is less than $d$.
  This information is easy to maintain using
  the data structure of Lemma~\ref{lem:dyntree} if this
  data structure stores $F'=\bigcup_{T\in F} F_T$
  at all times.
  This is precisely what our algorithm does.

  The first operation can be implemented simply as $\tcut(v)$ on $F'$.
  To implement the second operation, we first check if $r\in B$.
  If so, we do nothing, as the vertices of $B$ can only be roots in~$F'$.
  Otherwise, we perform $\tlink(r,v)$.
  After any $\tlink(r,v)$ operation, to update the information whether $B$ is still a $(F,d)$-blocker set, it is sufficient
  to check whether $\tdepth(v)<d$.
\end{proof}

The following technical lemma will prove useful.

\begin{restatable}{lemma}{hubsatleast}\label{lem:hubs-atleast}
  Let $G=(V,E)$ be a directed graph and let $B\subseteq V$.
  Let $d>0$ be an integer.
  Let $P$ be a path that is $(B,d)$-covered and $\hops{P}\geq d$.
  Then, one can represent $P$ as $P_1\ldots P_k$
  such that:
  \begin{itemize}
    \item $\hops{P_i}\in [d,3d]$ for each $i=1,\ldots,k$,
    \item $P_i$ starts with a vertex of $B$ for each $i=2,\ldots,k$.
  \end{itemize}
\end{restatable}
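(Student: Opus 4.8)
The plan is to unfold the definition of $(B,d)$-covered and then regroup. Write $P=Q_1\cdots Q_m$ with $\hops{Q_i}\le d$ for every $i$ and with the first vertex of $Q_i$ in $B$ for every $i\ge 2$. From this fine decomposition into pieces of hop-length at most $d$ I want to build a coarser one into pieces of hop-length in $[d,3d]$, and the natural tool is a left-to-right greedy grouping of consecutive $Q_i$'s.

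Concretely, I would scan $Q_1,Q_2,\dots$ in order, accumulating consecutive pieces into a current block and closing that block as soon as its total hop-length first reaches $d$; then I start a fresh block and repeat. Any block closed this way has hop-length at least $d$ by the stopping rule, and at most $2d-1$: just before its last piece was appended the running total was at most $d-1$, and the last piece contributes at most $d$. So every closed block already lies in $[d,2d-1]\subseteq[d,3d]$. Moreover every closed block except the first one begins with some $Q_i$ having $i\ge 2$ — since the blocks are consecutive and nonempty, the first block is exactly the one containing $Q_1$ — and the first vertex of such a $Q_i$ lies in $B$ by hypothesis, which is the second required property.

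The only thing left to handle is the leftover tail: after all of $P$ has been scanned, the still-open block may have hop-length strictly below $d$. I would merge it into the preceding closed block; the merged block then has hop-length at most $(2d-1)+(d-1)=3d-2\le 3d$ and still at least $d$ because it contains a full closed block, and it still starts with the same piece as that preceding block, so the ``starts in $B$'' property is untouched. The one degenerate situation to exclude is that there is no preceding closed block at all, i.e.\ all of $P$ sits in a single sub-$d$ leftover block; but that would force $\hops{P}<d$, contradicting the assumption $\hops{P}\ge d$. This is the only place the hypothesis $\hops{P}\ge d$ is used, and it also guarantees $k\ge 1$.

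I do not anticipate a genuine obstacle; the argument is elementary book-keeping. The two points needing a little care are getting the constants right — one must argue a closed block is at most $2d-1$ rather than merely $2d$, so that after a single merge the bound is the comfortable $3d-2\le 3d$ — and explicitly invoking $\hops{P}\ge d$ to rule out the all-leftover case and to ensure $k\ge 1$.
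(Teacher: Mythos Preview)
Your proposal is correct and essentially the same as the paper's proof: both unfold the $(B,d)$-cover into pieces $Q_i$ of hop-length at most $d$ and then greedily group consecutive pieces left-to-right into blocks of hop-length at least $d$. The only cosmetic difference is how the tail is handled---the paper checks at each step whether the remaining suffix has hop-length at most $3d$ and, if so, takes it all as the final block, whereas you always close greedily and then merge a short leftover into the last closed block; both variants yield blocks in $[d,3d]$ and preserve the ``starts in $B$'' property for all but the first block.
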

\begin{proof}
  By the fact that $P=u\to v$ is $(B,d)$-covered, $P$ can be expressed as $P=P'_1\ldots P'_l$,
  where $P'_i=u_i\to v_i$,
  $\hops{P'_i}\leq d$ for all $i$ and $u_i\in B$ for all $i\geq 2$.
  Let us partition the path $P'_1\ldots P'_l$ into subpaths (blocks) $P_1,\ldots,P_k$
  using the following procedure.
  
  At each step $P'_j,\ldots,P'_l$ ($j\geq 1$) will constitute the
  remaining part of $P'_1,\ldots,P'_l$ to be partitioned and $\hops{P'_j\ldots P'_l}\geq d$.
  Initially, $j=1$.
  If $\hops{P'_j\ldots P'_l}\leq 3d$, the next block $P_i$
  is set to $P'_j\ldots P'_l$ and the procedure ends.
  Otherwise, let $f$ be
  the minimum index such that $\hops{P'_j\ldots P'_f}\geq d$.
  We set the next block $P_i$ to be $P'_j\ldots P'_f$.
  By the definition of $f$ and $\hops{P'_f}\leq d$, we have
  $\hops{P_i}=\hops{P'_j\ldots P'_f}\leq 2d$ and $\hops{P'_{f+1}\ldots P'_l}\geq d$.
\end{proof}

The following lemma says that in order to test whether a given set of vertices is a $6d$-hub set
it suffices to test the hub set property for paths starting in vertices of a $d$-hub set.

\begin{restatable}{lemma}{hubsinductive}\label{lem:hubs-inductive}
  Let $G=(V,E)$ be a directed unweighted graph. Let $H_d$ be a $d$-hub set of $G$.
  Suppose we are given two collections 
  $\treecol^\tfrom=\{T^\tfrom_v:v\in H_d\}$,   
  $\treecol^\tto=\{T^\tto_v:v\in H_d\}$
  of shortest path trees up to depth $d$ from all vertices of $H_d$ in $G$ and $\rev{G}$,
  respectively.

  Let $B$ be a $(\treecol^\tfrom\cup \treecol^\tto,d)$-blocker set.
  Then $B$ is a $6d$-hub set of $G$.
\end{restatable}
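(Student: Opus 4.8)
Let $u,v\in V$ with $\dist_G(u,v)<\infty$; I want to exhibit a shortest $u\to v$ path that is $(B,6d)$-covered. Since $H_d$ is a $d$-hub set, there is a shortest $u\to v$ path $P$ that is $(H_d,d)$-covered. If $\hops{P}<d$ we are immediately done, since a single block of fewer than $d$ edges is trivially $(B,6d)$-covered. So assume $\hops{P}\ge d$. Now apply Lemma~\ref{lem:hubs-atleast} to rewrite $P=P_1\ldots P_k$ with $\hops{P_i}\in[d,3d]$ for all $i$ and with $P_i$ starting at a vertex of $H_d$ for each $i\ge 2$.

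**The main step.** The idea is to replace each block $P_i$ (a shortest path of hop-length between $d$ and $3d$) by a path of the same length that is $(B,3d)$-covered, using the blocker-set property of $B$ with respect to the trees from $H_d$. For a block $P_i=x_i\to y_i$, I would first cut it into pieces of hop-length at most $d$ whose internal breakpoints lie in $H_d$ — this is possible because $P_i$ itself is $(H_d,d)$-covered (being a subpath of the $(H_d,d)$-covered path $P$), so write $P_i=P_{i,1}\ldots P_{i,t}$ with $\hops{P_{i,j}}\le d$, each $P_{i,j}$ a shortest path, and the start of $P_{i,j}$ in $H_d$ for $j\ge2$ (and also the start of $P_{i,1}=x_i\in H_d$ when $i\ge2$). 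For each such piece $P_{i,j}=a\to b$ with $a\in H_d$: since $\hops{P_{i,j}}\le d$ and $P_{i,j}$ is a shortest path, $b\in V(T^\tfrom_a)$ and $\dist_{T^\tfrom_a}(a,b)=\hops{P_{i,j}}$, so we may replace $P_{i,j}$ by $Q_{i,j}:=\treepath{T^\tfrom_a}{b}$, a path of the same hop-length (and hence the same length), lying in $G$. Because $B$ is a $(T^\tfrom_a,d)$-blocker set and $\hops{Q_{i,j}}\le d$, the path $Q_{i,j}$ is $(B,d)$-covered — indeed, if $\hops{Q_{i,j}}=d$ then the final vertex or one of its ancestors is in $B$, and more generally any root-to-node path in $T^\tfrom_a$ is $(B,d)$-covered by the blocker property (splitting at the lowest ancestor lying in $B$). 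So $P'_i:=Q_{i,1}\ldots Q_{i,t}$ is a concatenation of $(B,d)$-covered paths, and its second and later pieces start at vertices of $H_d$, not necessarily of $B$; here is the subtlety — the blocker sets are indexed by $H_d$, and $B$ need not contain $H_d$. I therefore need a uniform hop-length bound on $P'_i$ rather than a $B$-covering directly: each $Q_{i,j}$ has $\hops{Q_{i,j}}\le d$, so $\hops{P'_i}=\hops{P_i}\le 3d$, and by Lemma~\ref{lem:cover-concat} (applied to the $(B,d)$-covered pieces $Q_{i,1},\ldots,Q_{i,t}$, none of whose hop-lengths exceed $d$, with total at most $3d$) the whole block $P'_i$ is $(B,3d)$-covered. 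Actually the cleanest phrasing: $P'_i$ is a path of hop-length $\le 3d$, hence trivially $(B,3d)$-covered as a single block if it contains no vertex of $B$, and in general $(B,3d)$-covered by Lemma~\ref{lem:cover-concat}.

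**Stitching.** Finally set $P':=P'_1\ldots P'_k$. Since each $Q_{i,j}$ has the same hop-length and length as $P_{i,j}$, we get $\len(P')=\len(P)=\dist_G(u,v)$, so $P'$ is again a shortest $u\to v$ path. Each $P'_i$ is $(B,3d)$-covered, and for $i\ge2$ the block $P'_i$ starts at $x_i$, the start of $Q_{i,1}=\treepath{T^\tfrom_{x_i}}{\cdot}$, which is $x_i\in H_d$ — again a vertex of $H_d$, not of $B$. So once more I cannot invoke the "starts in $B$" hypothesis of Lemma~\ref{lem:cover-concat-simple} directly across blocks. The resolution is to use Lemma~\ref{lem:cover-concat} once at the top level: $P'=P'_1\ldots P'_k$ where each $P'_i$ is $(B,3d)$-covered, so $P'$ is $(B,D)$-covered with $D=\max\{3d\cdot(\text{number of consecutive blocks } P'_y \text{ with } V(P'_y)\cap B=\emptyset)\}$. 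Thus I must bound the length of a maximal run of blocks containing no vertex of $B$. This is where the $d$-hub structure of the original decomposition is used: in a run $P'_x\ldots P'_z$ with no $B$-vertex, every $Q_{i,j}$ in it has $\hops{Q_{i,j}}\le d$ and contains no $B$-vertex, so is a single uncovered piece; but the blocker property of $B$ on $T^\tfrom_a$ forces $\hops{Q_{i,j}}<d$ whenever $Q_{i,j}$ contains no vertex of $B$ except possibly its root — wait, no: it forces that if $\hops{Q_{i,j}}=d$ the endpoint or an ancestor is in $B$. So if $P'_x\ldots P'_z$ contains no $B$-vertex, each constituent tree-path has $\hops{}<d$, hence $\hops{P'_i}=\hops{P_i}\le 3d$ still, giving total run length... this does not shrink. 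I expect the actual argument instead reduces $6d$ by handling the decomposition of $P$ into $d/2$-blocks (not $d$-blocks) and recombining into blocks of $\le 6d$; concretely, one shows each block $P_i$ of hop-length in $[d,3d]$ maps to a $(B,3d)$-covered $P'_i$, then \emph{two} consecutive original pieces fit inside one length-$6d$ window, and because consecutive $P'_i$ are separated by a vertex of $H_d$ which the blocker property guarantees is close to a vertex of $B$, a run without $B$-vertices cannot span more than $2$ blocks, giving $D\le 6d$. The hard part is exactly this bookkeeping: tracking that the $H_d$-breakpoints between the rebuilt blocks are never more than $3d$ hops from a genuine $B$-vertex, so that Lemma~\ref{lem:cover-concat} yields $D\le 6d$ rather than something larger. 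Everything else — the tree-path substitution preserving lengths, the trivial covering of short blocks, the final appeal to Lemma~\ref{lem:cover-concat} — is routine given the earlier lemmas.
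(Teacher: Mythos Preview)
Your approach has the right skeleton but a genuine gap at the crucial step: you never establish that each rebuilt block $P_i'$ contains a vertex of $B$, and your final paragraph correctly identifies this without resolving it. Replacing each sub-piece $P_{i,j}$ (of hop-length $\leq d$) by the corresponding tree path $Q_{i,j}$ gives root-to-node paths of depth \emph{at most} $d$, but the blocker property only guarantees a $B$-vertex on a root-to-node path of depth \emph{exactly} $d$. So if every $P_{i,j}$ happens to have hop-length $<d$, none of the $Q_{i,j}$ is forced to meet $B$, and your ``run without $B$-vertices'' could in principle span all of $P'$. Your closing speculation that the $H_d$-breakpoints are ``close to a vertex of $B$'' is not something the blocker property provides.

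The missing idea, which the paper uses, is to exploit the \emph{lower} bound $\hops{P_i}\geq d$ from Lemma~\ref{lem:hubs-atleast}. For $i\geq 2$, write $P_i=R_iS_i$ with $\hops{R_i}=d$ exactly; since $R_i$ starts at $u_i\in H_d$, replace only $R_i$ by the tree path $\treepath{T^\tfrom_{u_i}}{b_i}$ (where $b_i$ is the endpoint of $R_i$) and leave $S_i$ untouched. This tree path has depth exactly $d$, so the blocker property forces it to contain a $B$-vertex. For $i=1$ one does the symmetric trick on the \emph{last} $d$ edges of $P_1$ using a tree from $\treecol^\tto$, since the endpoint $v_1=u_2$ lies in $H_d$ (the case $k=1$ being trivial as then $\hops{P}\leq 3d$). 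Now every $P_i'$ satisfies $\hops{P_i'}=\hops{P_i}\leq 3d$, hence is $(B,3d)$-covered, and also $V(P_i')\cap B\neq\emptyset$. Lemma~\ref{lem:cover-concat} then gives $(B,6d)$-covering of $P'$ immediately, since with no $B$-free interior blocks the maximum in that lemma's formula is at most $3d+3d=6d$.
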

\begin{proof}
  Let $u,v\in V$ be any vertices such that $\dist_G(u,v)<\infty$.
  If $\dist_G(u,v)\leq d$, then any shortest $u\to v$ path
  is $(B,6d)$-covered.
  Suppose $\dist_G(u,v)>d$.

  Let $P$ be any shortest $u\to v$ path in $G$ that is $(H_d,d)$-covered.
  By Lemma~\ref{lem:hubs-atleast}, one can express $P$ as $P=P_1\ldots P_k$ where
  $\hops{P_i}\in [d,3d]$ and each $P_i$ for $i\geq 2$ starts with a vertex of $H_d$.
  Now define $P_1',\ldots,P_k'$ as follows. 
  If $i\in [2,k]$, then
  let $P_i=R_iS_i$ where $R_i=a_i\to b_i$, $\hops{R_i}=d$, and set $P_i'=\treepath{T^\tfrom_{a_i}}{b_i}\cdot S_i.$
  For $i=1$, suppose $P_1=R_1S_1$ where $\hops{S_1}=d$ and $S_1=a_1\to b_1$.
  Then we set $P_1'=R_1\cdot \rev{(\treepath{T^\tto_{b_i}}{a_i})}$.
  Clearly, for all $i$, $P_i'$ has the same endpoints as $P_i$ and $\hops{P_i'}=\hops{P_i}$.
  Therefore, $P'=P_1'\ldots P_k'$ is also a shortest $u\to v$ path.

  Since $\hops{P_i'}\leq 3d$, $P_i'$ is $(B,3d)$-covered.
  Moreover, by the definition of $B$, for all $i$ we have $V(P_i')\cap B\neq\emptyset$.
  Hence, by Lemma~\ref{lem:cover-concat}, we conclude that $P'$ is 
  $(B,6d)$-covered.

  We have thus proved that for all $u,v\in V$, where $\dist_G(u,v)<\infty$,
  some shortest $u\to v$ is $(B,6d)$-covered.
  Equivalently, $B$ is a $6d$-hub set of $G$.
\end{proof}

Observe that by Lemma~\ref{lem:random-subset}, there exists an integral constant $z>0$,
such that for any fixed collection of trees $\treecol$ of depth no more than $z\cdot \frac{n}{a_i}\lceil\ln{n}\rceil$,
where
$|\treecol|=O(n^3)$, $A_i$ is a $\left(\treecol,z\cdot \frac{n}{a_i}\lceil\ln{n}\rceil\right)$-blocker set
with high probability.
For $i=0,\ldots,q$, set $d_i=z\cdot \frac{n}{a_{i+1}}\lceil\ln{n}\rceil$ where $a_{q+1}=1$.
Suppose $G$ undergoes partially dynamic updates.
For each $i=1,\ldots,q$, and $v\in V$ let
$T^\tfrom_{i,v}$ ($T^\tto_{i,v}$) denote
the shortest path tree that the algorithm
of Theorem~\ref{thm:estree} would maintain
for $d=d_{i-1}$ and source $v$ in $G$ (in $\rev{G}$, respectively).
Note that how the trees $T^\tfrom_{i,v}$ and $T^\tto_{i,v}$ evolve 
depends
only on the sequence of updates to $G$ (which, by the oblivious adversary
assumption, does not depend on sets $A_0,\ldots,A_q$ in any way)
and the details of the deterministic algorithm of Theorem~\ref{thm:estree}.
Since only $O(mn)=O(n^3)$ different trees appear in 
$\{T^\tfrom_{i,v}:v\in V\}\cup \{T^\tto_{i,v}:v\in V\}$
throughout all updates,
$A_i$ remains a $(\{T^\tfrom_{i,v}:v\in V\}\cup \{T^\tto_{i,v}:v\in V\},d_{i-1})$-blocker set
throughout the whole sequence of updates with high probability,
by Lemma~\ref{lem:random-subset}.

Let $\treecol^\tfrom_i=\{T^\tfrom_{i,v}:v\in A_{i-1}\}$ and
$\treecol^\tto_i=\{T^\tto_{i,v}:v\in A_{i-1}\}$,
i.e.,
$\treecol^\tfrom_i$ ($\treecol^\tto_i$) contains only trees with roots from a subset $A_{i-1}\subseteq V$.
However $A_i$ being a blocker set of such a collection of trees will turn
out sufficient for our needs.
Clearly, since we have $\treecol^\tfrom_i\cup \treecol^\tto_i\subseteq \{T^\tfrom_{i,v}:v\in V\}\cup \{T^\tto_{i,v}:v\in V\}$,
by the above claim,
$A_i$ in fact remains a $(\treecol^\tfrom_i\cup \treecol^\tto_i,d_{i-1})$-blocker set
throughout the whole sequence of updates with high probability.
 
Now, let $q=\lceil \log_6{n}\rceil$ and for $i=1,\ldots,q$ set $a_i=6^{q-i}$.
To verify whether each $A_i$ remains a $d_i$-hub set subject to partially dynamic updates
to $G$, we proceed as follows.
We deterministically maintain the 
trees $\bigcup_{i=1}^q (\treecol^\tfrom_i\cup \treecol^\tto_i)$
subject to partially dynamic updates to $G$
using Theorem~\ref{thm:estree}.
The total number of changes these trees are subject to throughout the whole sequence of updates is
$$O\left(\sum_{i=1}^q a_{i-1}\cdot m\cdot d_{i-1}\right)=O\left(\sum_{i=1}^q a_{i-1}\cdot m\cdot \frac{n}{a_i}\ln{n}\right)=O\left(nm\log{n}\cdot \sum_{i=1}^q \frac{a_{i-1}}{a_i}\right)=O(nm\log^2{n}).$$

We additionally store each tree 
$T^\tfrom_{i,v}$ (and $T^\tto_{i,v}$), for $v\in A_{i-1}$, in a data structure of
Lemma~\ref{lem:tree-maintain} with $B=A_i$.
Whenever the data structure of Theorem~\ref{thm:estree} updates
some tree, the update is repeated in the corresponding
data structure of Lemma~\ref{lem:tree-maintain}.
Consequently, the total time needed to maintain these additional
data structures is
$O\left(nm\log^2{n}\cdot \sum_{i=1}^q \frac{a_{i-1}}{a_i}\right)=O(nm\log^3{n})$.

After each update we can detect whether each $A_i$
is still a $(\treecol^\tfrom_i\cup \treecol^\tto_i,d_{i-1})$-blocker set
in \linebreak
$O(\sum_i^q|A_{i-1}|\log{n})=O(n\log{n})$ time by querying the relevant
data structures of Lemma~\ref{lem:tree-maintain}\footnote{The Even-Shiloach algorithm (Theorem~\ref{thm:estree}),
apart from maintaining distance labels for all $v\in V$, moves around entire subtrees of the maintained tree $T$.
Hence, in order to ensure that some set $B$ remains a blocker-set of $T$,
  it is not sufficient to simply check whether $B\cap V(T[v])$ whenever the Even-Shiloach algorithm changes
  the distance label of $v$ to $d$ (and, consequently, use a data structure much simpler than that given in Lemma~\ref{lem:tree-maintain}).}
storing $\treecol^\tfrom_i\cup \treecol^\tto_i$.
By Lemma~\ref{lem:hubs-inductive}, a simple inductive argument shows
that if this is the case, each $A_i$ is a $d_i$-hub set of both $G$ and $\rev{G}$.
Hence, verifying all $A_1,\ldots,A_q$ while $G$ evolves
takes
$O(mn\log{n})$
total time.
The algorithm terminates when it turns out that some $A_i$
is no longer a $(\treecol^\tfrom_i\cup \treecol^\tto_i,d_{i-1})$-blocker set.
However, recall that this happens only with low probability,
regardless of whether $A_i$ actually ceases to be a $d_i$ hub set or not.
We have proved the following.

\begin{theorem}\label{t:hubs-exp}
  Let $G$ be an unweighted digraph.
  Let $q=\lceil\log_6{n}\rceil$. For $i=0,\ldots,q$, let $A_i$ be a random $6^{q-i}$-subset of $V$.
  One can maintain the information whether each $A_i$ is a
  $\Theta(6^i\ln{n})$-hub set of $G$, subject edge deletions issued to $G$, in $O(nm\log^3{n})$ total time.
  
  The algorithm might produce false negatives with low probability.
\end{theorem}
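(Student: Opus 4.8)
The plan is to combine three ingredients that have already been set up: (i) the Even-Shiloach trees of Theorem~\ref{thm:estree} which we use to maintain, for each level $i$, the collections of depth-$d_{i-1}$ shortest path trees rooted at the vertices of $A_{i-1}$, in both $G$ and $\rev{G}$; (ii) the data structure of Lemma~\ref{lem:tree-maintain}, which for a fixed set $B$ maintains, with $O(\log n)$ overhead per tree update, the single bit ``is $B$ a $(F,d)$-blocker set''; and (iii) the structural implication of Lemma~\ref{lem:hubs-inductive}, which lets us bootstrap from a $d$-hub set to a $6d$-hub set by testing only paths anchored at the smaller hub set. Concretely, set $q=\lceil\log_6 n\rceil$, $a_i=6^{q-i}$, and $d_i=z\cdot(n/a_{i+1})\lceil\ln n\rceil=\Theta(6^i\ln n)$ where $z$ is the constant coming from Lemma~\ref{lem:random-subset}. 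For each $i=1,\dots,q$ we run an Even-Shiloach tree of depth $d_{i-1}$ from every $v\in A_{i-1}$ in $G$ and in $\rev{G}$, giving collections $\treecol^\tfrom_i$ and $\treecol^\tto_i$; each such tree is mirrored into an instance of the Lemma~\ref{lem:tree-maintain} data structure with $B=A_i$, so that after every edge deletion we can read off, in $O(\sum_{i=1}^q|A_{i-1}|\log n)=O(n\log n)$ time, whether $A_i$ is a $(\treecol^\tfrom_i\cup\treecol^\tto_i,d_{i-1})$-blocker set for every $i$.

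Next I would argue correctness. By induction on $i$: $A_0=V$ is trivially a $d_0$-hub set; assuming $A_{i-1}$ is a $d_{i-1}$-hub set of both $G$ and $\rev{G}$ and that the maintained bit reports $A_i$ to be a $(\treecol^\tfrom_i\cup\treecol^\tto_i,d_{i-1})$-blocker set, Lemma~\ref{lem:hubs-inductive} (applied to the $d_{i-1}$-hub set $H_{d_{i-1}}=A_{i-1}$ and the trees rooted at its vertices) yields that $A_i$ is a $6d_{i-1}$-hub set of $G$; since $6d_{i-1}=6z(n/a_i)\lceil\ln n\rceil\le d_i$ by the choice $a_{i}/a_{i+1}=6$, it is also a $d_i$-hub set, and symmetrically for $\rev{G}$. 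Hence whenever all the maintained bits are positive, every $A_i$ is a $\Theta(6^i\ln n)$-hub set, which is exactly what the algorithm reports. Conversely, the algorithm only ever reports a false negative: it flags level $i$ solely when $A_i$ fails to block one of the trees in $\treecol^\tfrom_i\cup\treecol^\tto_i$, and one needs to observe that across the whole deletion sequence only $O(mn)=O(n^3)$ distinct trees ever occur in $\{T^\tfrom_{i,v}:v\in V\}\cup\{T^\tto_{i,v}:v\in V\}$ — these trees depend only on the (adversary's) update sequence and the deterministic Even-Shiloach algorithm, not on the random sets. By Lemma~\ref{lem:random-subset} with the constant $z$ chosen large enough (to absorb the $n^3$ union bound over trees, over the $q=O(\log n)$ levels, and over the length of the update sequence), $A_i$ remains a $(\{T^\tfrom_{i,v}:v\in V\}\cup\{T^\tto_{i,v}:v\in V\},d_{i-1})$-blocker set — hence in particular a blocker set of the sub-collection $\treecol^\tfrom_i\cup\treecol^\tto_i$ — throughout, with high probability; so a false negative occurs with low probability.

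For the running time, the total number of edge-set changes to the Even-Shiloach trees is, by Theorem~\ref{thm:estree}, $O\!\left(\sum_{i=1}^q a_{i-1}\cdot m\cdot d_{i-1}\right)=O\!\left(\sum_{i=1}^q a_{i-1}\cdot m\cdot (n/a_i)\ln n\right)=O\!\left(nm\log n\sum_{i=1}^q (a_{i-1}/a_i)\right)=O(nm\log^2 n)$, since each ratio $a_{i-1}/a_i=6$ and there are $O(\log n)$ levels. Each such change is forwarded to a Lemma~\ref{lem:tree-maintain} structure at $O(\log n)$ cost, for $O(nm\log^3 n)$ total, and the per-deletion $O(n\log n)$ querying cost summed over $m$ deletions is absorbed in this bound. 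This gives the claimed $O(nm\log^3 n)$ total time. The main obstacle — and the step that most needs care — is the union-bound argument for the ``false negatives with low probability'' claim under the oblivious adversary: one has to be precise that the family of trees that ever appears is determined independently of the randomness used to sample the $A_i$'s, that it has polynomially bounded size, and that a single choice of the constant $z$ simultaneously handles all levels $i$ and the full (polynomial-length) update sequence; the rest is bookkeeping with the already-proved lemmas.
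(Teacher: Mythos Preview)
Your proposal is correct and follows essentially the same approach as the paper: maintain Even-Shiloach trees of depth $d_{i-1}$ from the vertices of $A_{i-1}$ in $G$ and $\rev{G}$, mirror them in the data structures of Lemma~\ref{lem:tree-maintain} with $B=A_i$, use Lemma~\ref{lem:hubs-inductive} inductively for correctness, and apply Lemma~\ref{lem:random-subset} with the oblivious-adversary observation (the $O(mn)$ trees that ever appear are determined independently of the $A_i$'s) for the low-probability false-negative claim. Your running-time computation and the identification of the union-bound step as the delicate point also match the paper exactly.
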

By plugging in the hubs of Theorem~\ref{t:hubs-exp} into the algorithms of~\cite{BaswanaHS07, Bernstein16}, we obtain the following.
\begin{corollary}
  Let $G$ be an unweighted digraph. There exists a Las Vegas randomized decremental algorithm
  maintaining exact distance between all pairs of vertices of $G$ with $\Ot(n^3)$ total update time w.h.p.
  It assumes an adversary oblivious to the random bits used.
\end{corollary}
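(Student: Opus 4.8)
The plan is to run the decremental exact APSP algorithm of Baswana et al.~\cite{BaswanaHS07} on top of a freshly sampled family of hub sets $A_0,\dots,A_q$ (with base $c=6$, i.e.\ $a_i=6^{q-i}$), and to use the monitor of Theorem~\ref{t:hubs-exp} to decide, at each step, whether to trust its output. Recall (see Appendix~\ref{s:baswanabernstein}) that the algorithm of~\cite{BaswanaHS07} uses its random bits \emph{only} to draw the sets $A_0,\dots,A_q$; once these are fixed it is deterministic, runs in $\Ot(n^3)$ total time, and its reported distances are exact provided that each $A_i$ is a $\Theta(6^i\ln n)$-hub set of \emph{every} version of $G$ encountered so far.

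First I would feed the deletion sequence simultaneously to (a) an instance of~\cite{BaswanaHS07} initialized with $A_0,\dots,A_q$ and (b) the data structure of Theorem~\ref{t:hubs-exp} for the same sets, which maintains --- with low-probability \emph{false negatives} and, crucially, no false positives --- the bit ``every $A_i$ is still a $\Theta(6^i\ln n)$-hub set of $G$''. As long as this bit stays true, I answer distance queries using~\cite{BaswanaHS07}. If at some deletion the monitor flips the bit to false, I discard both data structures, draw a fresh family $A_0,\dots,A_q$ with new random bits, and reinitialize (a) and (b) from the current (partially deleted) graph; the cost of such a reinitialization is dominated by one fresh run of each component.

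Correctness is immediate from the one-sidedness of the monitor: while no alarm has been raised since the last (re)initialization, Theorem~\ref{t:hubs-exp} guarantees that every $A_i$ currently in use has been a genuine $\Theta(6^i\ln n)$-hub set of $G$ for all intermediate versions since that (re)initialization, so the distances produced by~\cite{BaswanaHS07} are exact; since the algorithm outputs distances only in this state, it never errs, i.e.\ it is Las Vegas. The oblivious-adversary assumption enters only via the running-time analysis below; correctness holds unconditionally.

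For the running time, fix the adversarial deletion sequence --- by obliviousness it is independent of all random bits. As in the proof of Theorem~\ref{t:hubs-exp}, the trees of $\bigcup_{i=1}^q(\treecol^\tfrom_i\cup\treecol^\tto_i)$ assume only $O(mn)=O(n^3)$ distinct values over the whole sequence, so by Lemma~\ref{lem:random-subset} and a union bound a single freshly drawn family $A_0,\dots,A_q$ is, with high probability, simultaneously a valid blocker set for all of them, and hence never triggers an alarm. Thus with high probability the first draw suffices, no restart ever occurs, and the total time is one execution of~\cite{BaswanaHS07}, namely $\Ot(n^3)$, plus one execution of the monitor, namely $O(nm\log^3 n)=\Ot(n^3)$ since $m\leq n^2$ --- giving $\Ot(n^3)$ total update time w.h.p. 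The only point that really needs care is the interplay between the monitor's one-sided error (which yields unconditional correctness) and the use of fresh random bits on each restart (which keeps the failure probability, and hence the number of restarts, small under an oblivious adversary); everything else is routine bookkeeping, namely checking that~\cite{BaswanaHS07} is deterministic once the $A_i$ are fixed, that the hub-set parameters it requires coincide with those maintained by Theorem~\ref{t:hubs-exp}, and that reinitialization fits within the budget.
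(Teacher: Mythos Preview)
Your proposal is correct and follows essentially the same approach as the paper: run Baswana et al.'s algorithm on sampled hub sets $A_0,\dots,A_q$, use the monitor of Theorem~\ref{t:hubs-exp} to certify validity, and restart with fresh randomness on an alarm. The paper spells out exactly this restart strategy in the discussion preceding Theorem~\ref{t:hubs-exp} and gives the details of how \cite{BaswanaHS07} uses the $A_i$ in Appendix~\ref{s:baswanabernstein}; your account of the one-sided error (no false positives, so correctness is unconditional) and of the oblivious-adversary assumption entering only through the running-time bound is accurate and matches the paper's reasoning.
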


\begin{corollary}
  Let $G$ be an unweighted digraph. There exists a Las Vegas randomized decremental algorithm
  maintaining $(1+\eps)$-approximate distance estimates between all pairs of vertices of $G$
  in $\Ot(nm/\eps)$ total time w.h.p. The algorithm assumes an oblivious adversary.
\end{corollary}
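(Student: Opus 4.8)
The plan is to take Bernstein's Monte Carlo $(1+\eps)$-approximate decremental APSP algorithm for unweighted digraphs~\cite{Bernstein16}, which runs in $\Ot(nm/\eps)$ total time, and make it Las Vegas via the conversion scheme sketched in the paragraph preceding Lemma~\ref{lem:tree-maintain}. First I would recall (deferring the details to Appendix~\ref{s:baswanabernstein}) that Bernstein's algorithm uses randomness only to sample a geometric family of hub sets $A_0,\ldots,A_q$: once these sets are fixed it proceeds deterministically, and it remains $(1+\eps)$-approximate throughout the whole update sequence as long as each $A_i$ stays a $\Theta(6^i\ln n)$-hub set of every intermediate version of $G$. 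Its original base is $2$, but the same algorithm works verbatim for any constant base; I would instantiate it with base $6$ and $q=\lceil\log_6 n\rceil$, so that $A_0,\ldots,A_q$ are exactly the sets handled by Theorem~\ref{t:hubs-exp}.

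Next I would run Bernstein's algorithm in lockstep with the verification data structure of Theorem~\ref{t:hubs-exp}, both fed the same sampled sets $A_0,\ldots,A_q$. After each edge deletion I would forward the deletion to the verifier and query it in $O(n\log n)$ time. If the verifier certifies that every $A_i$ is still a $\Theta(6^i\ln n)$-hub set of $G$, then — since it never produces false positives — Bernstein's current estimates are genuinely $(1+\eps)$-approximate, and I would output them. If instead the verifier reports a failure (a false negative, which happens only with low probability), I would discard the whole state, resample $A_0,\ldots,A_q$ with fresh randomness, and restart both components from scratch on the current graph.

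For the running time I would argue that a single run costs $\Ot(nm/\eps)$ for Bernstein's algorithm plus $O(nm\log^3 n)=\Ot(nm)$ for the verifier of Theorem~\ref{t:hubs-exp}, i.e.\ $\Ot(nm/\eps)$ in total. By Lemma~\ref{lem:random-subset} (with the constant in the sample sizes chosen large enough) together with the oblivious adversary assumption — which makes the $O(mn)=O(n^3)$ Even-Shiloach trees scrutinized by the verifier independent of the chosen sets — each fresh family fails to be simultaneously valid with probability at most $n^{-\beta}$ for an arbitrarily large constant $\beta$. A union bound over the at most $m=\poly n$ deletions then shows that with high probability no restart is ever triggered, so there is a single run and the total update time is $\Ot(nm/\eps)$ with high probability. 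Correctness is unconditional: an estimate is emitted only after the hub sets have been certified for the current graph, so Bernstein's correctness given valid hubs yields a $(1+\eps)$-approximation at all times, and the algorithm never errs — hence Las Vegas.

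The main obstacle I expect is not in this plumbing but in the supporting claim that Bernstein's algorithm genuinely fits the template: that it is deterministic once the hub sets are fixed, that its approximation guarantee fails only when some $A_i$ ceases to hit the relevant subpaths of some intermediate graph, and that switching its base from $2$ to $6$ leaves these properties intact. Establishing this carefully is exactly the purpose of Appendix~\ref{s:baswanabernstein}, which the plan is to invoke here; the analogous corollary for the exact unweighted case follows in the same way by substituting the algorithm of Baswana et al.~\cite{BaswanaHS07}.
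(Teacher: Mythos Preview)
Your proposal is correct and follows essentially the same approach as the paper: run Bernstein's algorithm (instantiated with base~$6$) in lockstep with the verifier of Theorem~\ref{t:hubs-exp}, restart with fresh samples upon a (low-probability) reported failure, and invoke Appendix~\ref{s:baswanabernstein} for the claim that Bernstein's algorithm is deterministic and correct once given valid hub sets. The paper presents the corollary as an immediate consequence of exactly this plumbing, so your level of detail actually exceeds what the paper writes.
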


\section{Approximate Shortest Paths for Weighted Graphs}\label{sec:weighted}

In this section we generalize
the reliable hub maintenance algorithms to weighted graphs,
at the cost of $(1+\eps)$-approximation.
First we we give key definitions.

\begin{definition}\label{def:appr-tree}
  Let $G=(V,E)$ be a weighted digraph and let $s\in V$ be a source vertex.
  Let $d$ be a positive integer.
  An out-tree $T\subseteq G$ 
  is called a $(1+\eps)$-approximate shortest path tree from $s$ up to depth~$d$, if
  $T$ is rooted at $s$ and for any $v\in V$ such that $\dist^d_G(s,v)<\infty$,
  we have
  $v\in V(T)$ and $\dist_T(s,v)\leq (1+\eps)\dist^d_G(s,v)$.
\end{definition}
\begin{definition}\label{def:appr-hubs}
  Let $G=(V,E)$ be directed and let $d>0$ be an integer. A set $H_d^\eps\subseteq V$ is
  called an \emph{$(1+\eps)$-approximate $d$-hub set} of $G$ if for every $u,v\in V$ such that
  $\dist_G(u,v)<\infty$, there exists a
  path $P=u\to v$ in $G$ such that $\len(P)\leq (1+\eps)\dist_G(u,v)$ and
  $P$ is $(H_d^\eps,d)$-covered.
\end{definition}

We also extend the definition of a $(T,d)$-blocker set
to trees of depth more than~$d$.
\begin{definition}\label{def:blocker-appr}
  Let $V$ be a vertex set and let $d>0$ be an integer.
  Let $T$ be a rooted tree over~$V$. 
  Define $T^d$ to be the set of all maximal subtrees of $T$ of depth no more than $d$,
  rooted in non-leaf vertices $x\in V(T)$ satisfying $d\mid \treedep{T}{x}$. 
  In other words, the set $T^d$ can be obtained from the tree $T$
  be repeatedly taking a deepest non-leaf vertex $x$ of $T$ such that $\treedep{T}{x}$
  is divisible by $d$, inserting the subtree of $T$ rooted in $x$ into $T^d$
  and consequently removing all the descendants of $x$ out of $T$.
  
  Then, $B$ is a $(T,d)$-blocker set if and only if it is a $(T^d,d)$-blocker set
  (in terms of Definition~\ref{def:blocker}).
  Let $\treecol$ be a collection of rooted trees over $V$.
  We call $B$ a $(\treecol,d)$-blocker set if and only if $B$ is a $(T,d)$-blocker for
  each $T\in\treecol$.
\end{definition}
We now state the main theorem relating blocker sets in $(1+\eps)$-approximate
shortest path trees to the approximate hub sets. The theorem is proved in Section~\ref{sec:weighted-hubs-simple-proof}.
\begin{restatable}{theorem}{weightedhubssimple}\label{thm:weighted-hubs-simple}
Let $G=(V,E)$ be a directed graph and let $d<n$ be an even integer.
   Let $\treecol^{\tfrom}=\{T^{\tfrom}_v: v\in V\}$
  ($\treecol^{\tto}=\{T^{\tto}_v: v\in V\}$)
  be a collection of $(1+\eps)$-approximate shortest
  path trees up to depth-$3d$ from all vertices in $G$ (in $\rev{G}$, resp.).
 
  Let $B\subseteq V$ be a $(\treecol^{\tfrom}\cup \treecol^{\tto},\frac{d}{2})$-blocker set.
  Then $B$ is a $(1+\eps)^p$-approximate $2dp$-hub set of $G$,
  where $p=\lceil\log_2{n}\rceil+1$.
\end{restatable}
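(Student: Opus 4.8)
The plan is to peel the $(1+\eps)^p$ and the $2dp$ off the definition of an approximate hub set and replace them by a recursion on the hop-length of the path being approximated, in the spirit of the unweighted argument of Lemma~\ref{lem:hubs-inductive}. By Definition~\ref{def:appr-hubs} it suffices to show: for all $u,v$ with $\dist_G(u,v)<\infty$ there is a $u\to v$ path $P$ with $\len(P)\le (1+\eps)^p\,\dist_G(u,v)$ that is $(B,2dp)$-covered. A shortest $u\to v$ path has fewer than $n$ hops, so $p=\lceil\log_2 n\rceil+1$ is intended to be exactly the number of ``halving'' steps needed to bring the hop-length down to a base-case threshold of $O(d)$ hops, each step costing one $(1+\eps)$ factor and, via Lemma~\ref{lem:cover-concat}, contributing an additive $O(d)$ to the covering parameter.

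The single reduction step is the following. Given a path $Q=a\to b$ with $\hops{Q}=h>2d$, cut $Q$ into consecutive blocks $Q_1,\dots,Q_t$ of exactly $d$ hops (the last possibly shorter), $Q_i=a_i\to b_i$. Since $d\le 3d$, each $b_i$ lies in $T^{\tfrom}_{a_i}$, so the interior blocks may be rerouted through their forward trees, replacing $Q_i$ by $R_i:=\treepath{T^{\tfrom}_{a_i}}{b_i}$; by Definition~\ref{def:appr-tree}, $\len(R_i)\le(1+\eps)\disth{3d}_G(a_i,b_i)\le(1+\eps)\len(Q_i)$, so the rerouted path $Q'=R_1\cdots R_{t-1}Q_t$ satisfies $\len(Q')\le(1+\eps)\len(Q)$. (The blocks near the two ends of $Q$ are instead rerouted through the appropriate reversed tree $\rev{T^{\tto}_{\cdot}}$, exactly as in Lemma~\ref{lem:hubs-inductive}, because near the ends there is no hub to ``leave from'', only one to ``arrive at''.) The $\tfrac d2$-blocker hypothesis now yields a clean dichotomy for each $R_i$: by Definition~\ref{def:blocker-appr}, the ancestor of $b_i$ at tree-depth $\tfrac d2$ — which exists precisely when $\hops{R_i}\ge\tfrac d2$ — is a depth-$\tfrac d2$ leaf of one of the depth-$\le\tfrac d2$ subtrees carved out of $T^{\tfrom}_{a_i}$, hence it or one of its ancestors lies in $B$; so either $\hops{R_i}<\tfrac d2$ (a genuine hop saving over the $d$-hop block $Q_i$), or $R_i$ meets $B$ within its first $\tfrac d2$ hops (a place where we may cut a piece boundary). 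Reading $Q'$ left to right, every maximal run of ``short'' reroutes followed by the $\tfrac d2$-hop head of a ``long'' one forms a piece whose hop-length is at most half the span of $Q$ it covers; we cut $Q'$ at the $B$-vertices produced by the long reroutes, recurse on each resulting piece (each with strictly smaller $\lceil\log_2(\cdot)\rceil$ of its hop-length), and glue the recursively obtained $(B,2dp)$-covered subpaths back together via Lemma~\ref{lem:cover-concat}; if no reroute turns out to be long, all of $Q'$ is a single piece with fewer than $\hops{Q}$ hops and we recurse on it directly. Applying this to $Q$ a shortest $u\to v$ path gives the theorem, the exponent $p$ tracking the recursion depth and the ``$2dp$'' accumulating the $O(d)$ slack from the at most $p$ residual pieces (one per level) that do not themselves begin at a vertex of $B$ and whose block lengths Lemma~\ref{lem:cover-concat} forces us to add.

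The main obstacle — and the reason the exponent is $\Theta(\log n)$ rather than $O(1)$ — is exactly that a $(1+\eps)$-approximate shortest path tree up to depth $3d$ may realize $b_i$ by a tree path $\treepath{T^{\tfrom}_{a_i}}{b_i}$ with arbitrarily many edges, so a single reroute neither reduces hop-length nor, by itself, exposes a hub. The argument has to play the two halves of the dichotomy against each other and recurse, and the delicate bookkeeping is that the per-piece length blow-up stays $(1+\eps)$ at each level — rather than compounding over the up to $\Theta(n/d)$ blocks produced at that level — while the hop-length of every residual piece provably drops below $2d$ after $p$ levels. Getting the constants so that $\lceil\log_2 n\rceil+1$ levels suffice and so that the covering parameter closes at $2dp$ is where essentially all of the effort goes; the length accounting of the rerouting, and the verification that the $B$-meeting of long tree paths is correctly delivered by Definition~\ref{def:blocker-appr}, are comparatively routine.
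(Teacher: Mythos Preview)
Your overall plan is the right one and matches the paper's: reroute each $d$-hop block of a shortest path through the approximate trees, split reroutes into ``short'' ($<\tfrac d2$ hops) and ``long'' ($\ge\tfrac d2$ hops, hence meeting $B$ by the $\tfrac d2$-blocker hypothesis), and recurse on the short side. But the specific decomposition you write down has a gap. You define a piece as a maximal run of short reroutes followed by the $\tfrac d2$-hop head of a long one, and then say you cut $Q'$ at the $B$-vertices and recurse on each resulting piece. These two descriptions are not the same thing: once you cut at a $B$-vertex inside a long reroute $R_i$, the \emph{tail} of $R_i$ --- everything after that $B$-vertex --- falls into the next piece. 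Since a tree path $R_i$ may have arbitrarily many hops (this is precisely the obstacle you correctly identify), a piece of the form [tail of $R_i$] + [short reroutes] + [head of $R_j$] has no hop bound in terms of the span of $Q$ it covers, and your claim that each piece has strictly smaller $\lceil\log_2(\cdot)\rceil$ fails. You never say what happens to these tails.

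The paper's fix (Lemma~\ref{lem:aux-appr-hub}) is to not cut inside long reroutes at all. Each long $Q_i$ is kept whole: by Lemma~\ref{lem:tree-path-blocker} it is already $(B,d)$-covered and contains a $B$-vertex, so it needs no recursion. The recursion is applied only to the maximal runs of short $Q_i$'s; each such run has fewer than $\tfrac d2\cdot s\le\tfrac{|P|}{2}$ hops (where $s$ is the total number of blocks, so $|P|\ge sd$) and is $(H,d)$-covered because consecutive reroutes share endpoints in $H$. This is exactly the shape the inductive hypothesis demands. The final gluing via Lemma~\ref{lem:cover-concat} then closes at $2d(j+1)$ because any $B$-free stretch sits between two long reroutes, each $(B,d)$-covered, with at most one recursed run (parameter $2dj$) in between: $d+2dj+d=2d(j+1)$. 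Your instinct to cut at $B$-vertices is natural, but it drags the unbounded tails of long tree paths into the recursion; leaving long reroutes intact is the clean move that makes both the halving and the $2dp$ bound fall out.
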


Finally, we explain how to incorporate these tools into our improved dynamic APSP
algorithms in order to generalize then to weighted graphs.
Recall that our reliable hubs maintenance algorithms for unweighted graphs essentially maintained
some shortest path trees up to depth $d$ and either computed their blocker sets
using King's algorithm, or dynamically verified whether the sampled hub sets
remain blocker sets of the shortest path trees.

We first replace all shortest path trees up to depth $d$ with $(1+\eps)$-approximate
shortest path trees up to depth $d$.
We use the following extension of Bernstein's $h$-SSSP algorithm.
\begin{restatable}{lemma}{hsspext}\label{lem:hsssp-ext}
  The $h$-SSSP algorithm of Theorem~\ref{thm:hsssp} can be extended so that it maintains
  a $(1+\eps)$-approximate shortest path tree up to depth $h$ from $s$
	within the same time bound.
\end{restatable}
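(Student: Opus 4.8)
The statement is that Bernstein's $h$-SSSP data structure (Theorem~\ref{thm:hsssp}), which by default only outputs approximate distance estimates $\dist'(s,v)$ with $\dist_G(s,v)\le \dist'(s,v)\le (1+\eps)\dist_G^h(s,v)$, can be augmented to also explicitly maintain an actual subgraph $T\subseteq G$ that is a $(1+\eps)$-approximate shortest path tree up to depth $h$ from $s$ in the sense of Definition~\ref{def:appr-tree}, without increasing the asymptotic running time. So the goal of the proof is to exhibit a rule for choosing a parent edge for every vertex $v$ with $\dist_G^h(s,v)<\infty$ such that the tree induced by these parent edges certifies the estimate $\dist'(s,v)$, and to argue that this bookkeeping costs only $O(1)$ per estimate change (hence $O(\log n)$ amortized, absorbed into the stated bound).

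\textbf{Step 1: recall the structure of $h$-SSSP.} First I would recall (from Bernstein's construction, which underlies Theorem~\ref{thm:hsssp}) that the algorithm maintains, for each vertex $v$, its estimate $\dist'(s,v)$, and that whenever $\dist'(s,v)$ is (re)assigned a finite value it is set to $\dist'(s,u) + \wei_G(uv)$ for some in-neighbor $u$ of $v$ with $\dist'(s,u) < \dist'(s,v)$ — this is how the estimates are propagated along edges of $G$. The key point is that the algorithm already "knows" such a witnessing edge $uv$ at the moment it updates $\dist'(s,v)$; it simply does not normally record it. (If one is worried about the internal details, one can alternatively invoke the rounded-weight / monotone-Even–Shiloach-style structure that $h$-SSSP is built from, where the relaxing edge is explicit.)

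\textbf{Step 2: maintain parent pointers.} I would have the augmented algorithm store, for each $v$ with $\dist'(s,v)<\infty$, a pointer $\tpar(v)$ to the witnessing in-neighbor $u$ used in the most recent assignment $\dist'(s,v) := \dist'(s,u) + \wei_G(uv)$; set $\tpar(s) = \nil$. The maintained tree $T$ has vertex set $\{v : \dist'(s,v)<\infty\} = \{v : \dist_G^h(s,v)<\infty\}$ (the equality of these sets being part of the $h$-SSSP guarantee) and edge set $\{\tpar(v)\,v : \tpar(v)\ne\nil\}$. Each estimate update touches $\tpar$ of exactly one vertex, so the extra work is $O(1)$ per change, and the algorithm can output the change to the edge set of $T$ at the same time, keeping $T$ explicitly maintained within the same $O(mh\log n\log(nW)/\eps + \Delta)$ bound.

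\textbf{Step 3: verify $T$ is a valid tree and has the right distances.} I would check two things. (a) $T$ is an out-tree rooted at $s$: since whenever $\tpar(v)=u$ we have $\dist'(s,u)<\dist'(s,v)$ at the time of assignment, and estimates only decrease over time in a partially dynamic (decremental-on-estimates) run of $h$-SSSP within a phase — here one has to be slightly careful and argue that the invariant "$\dist'(s,\tpar(v)) < \dist'(s,v)$ holds with the current estimate values" is preserved, because when $\dist'(s,u)$ later decreases the inequality is only strengthened, and when $\dist'(s,v)$ later decreases $\tpar(v)$ is reassigned — so following $\tpar$ pointers strictly decreases the estimate and must reach $s$ without cycles. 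Hence $T$ is a forest of the required shape, in fact a single out-tree on the reachable-within-$h$ set. (b) $\dist_T(s,v)\le (1+\eps)\dist_G^h(s,v)$ for every $v\in V(T)$: by induction along the $\tpar$-path, $\dist_T(s,v) = \dist_T(s,\tpar(v)) + \wei_G(\tpar(v)v) = \dist'(s,\tpar(v)) + \wei_G(\tpar(v)v) = \dist'(s,v)$, where the middle equality is the induction hypothesis that tree distance equals the estimate; then the $h$-SSSP guarantee $\dist'(s,v)\le (1+\eps)\dist_G^h(s,v)$ finishes it. The base case is $\dist_T(s,s)=0=\dist'(s,s)$. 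Note the tree path $\treepath{T}{v}$ may have many more than $h$ hops — Definition~\ref{def:appr-tree} explicitly allows this — so no hop bound needs to be maintained, which is exactly why the augmentation is cheap.

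\textbf{Main obstacle.} The only genuinely delicate point is Step 3(a): ensuring the $\tpar$ pointers never form a cycle and always lead back to $s$, i.e. that the invariant "current estimate of $\tpar(v)$ is strictly less than current estimate of $v$" is an invariant of the augmented algorithm and not merely true at the instant of assignment. This requires reading off from Bernstein's construction that within the relevant regime estimates are monotone and that whenever $\dist'(s,v)$ changes (decreases) the pointer $\tpar(v)$ is rewritten, while a decrease of $\dist'(s,\tpar(v))$ only preserves the inequality. Everything else — the per-update accounting and the distance bound — is immediate from the $h$-SSSP interface in Theorem~\ref{thm:hsssp}.
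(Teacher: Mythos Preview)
Your overall strategy --- record a parent pointer for each vertex and verify that these pointers form an out-tree with the required distance bound --- is the paper's strategy. But your model of the internals of $h$-SSSP is inaccurate in a way that affects both of your verification steps.

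Bernstein's $h$-SSSP is \emph{not} a single Even--Shiloach-style process that updates $\dist'(s,v)$ via relaxations $\dist'(s,v):=\dist'(s,u)+w_G(uv)$. It maintains $O(\log(nW))$ separate out-trees $T_k\subseteq G$, one per distance scale, and sets $\dist'(s,v)=\min_k \dist_{T_k}(s,v)$; when $\dist'(s,v)$ changes, the minimising index $k_v$ may simply switch. The paper therefore defines $\tpar_T(v):=\tpar_{T_{k_v}}(v)$, and the relation one actually gets is
\[
\dist'(s,v)=\dist_{T_{k_v}}(s,\tpar_T(v))+w_G(\tpar_T(v)\,v)\ \ge\ \dist'(s,\tpar_T(v))+w_G(\tpar_T(v)\,v),
\]
because the parent's estimate may come from a \emph{different} tree $T_{k_{\tpar_T(v)}}$. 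So your Step~3(b) induction yields only $\dist_T(s,v)\le \dist'(s,v)$, not the equality you claim; fortunately $\le$ is all Definition~\ref{def:appr-tree} requires.

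The same displayed inequality also dissolves the ``main obstacle'' you flagged. Since edge weights are positive, it gives $\dist'(s,\tpar_T(v))<\dist'(s,v)$ \emph{at every moment}, not just at assignment time; following parent pointers strictly decreases the current $\dist'$-value, so no cycle can exist. This static argument is what the paper uses, and it works uniformly in both the incremental and decremental settings --- whereas your temporal-monotonicity claim (``estimates only decrease'') is false decrementally. In short, the delicate point is not acyclicity but the multi-tree structure of $h$-SSSP, which you glossed over.
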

\begin{proof}
  The $h$-SSSP data structure (see \cite{Bernstein16}) actually maintains $O(\log(nW))$
  $h$-SSSP$_k$ data structures, handling different ranges of the distances
  to be estimated.
  For each $k=0,\ldots,\lfloor\log_2{nW}\rfloor$, the $h$-SSSP$_k$ data structure
  maintains an out-tree $T_k\subseteq G$ rooted at $s$, such that for all $v\in V$,
  if $2^k\leq \dist_G^d(s,v)\leq 2^{k+1}$,
  then $v\in V(T_k)$ and $\dist_G(s,v)\leq \dist_{T_k}(s,v)\leq (1+\eps)\dist_G^d(s,v)$.
  The final $(1+\eps)$-approximate estimate $\dist'(s,v)$ of $\dist_G^d(s,v)$ produced by $h$-SSSP is defined
  as $\dist'(s,v)=\min_k\{\dist_{T_k}(s,v):v\in V(T_k)\}$.
  Even though each $h$-SSSP$_k$ has total update time $O(mh/\eps+\Delta)$,
  Bernstein \cite{Bernstein16} uses additional data structures and tricks to make different $h$-SSSP$_k$
  components register only relevant edge updates and make the
  dependence on $\Delta$ only $O(\Delta)$.

  Nevertheless, all $h$-SSSP$_k$ components maintain their trees $T_k$ explicitly
  in the stated \linebreak $O(mh/\eps\log{n}\log{nW}+\Delta)$ total time.
  Having a single out-tree instead of $\lfloor\log{nW}\rfloor+1$ will simplify
  our further developments considerably.
  Therefore, we combine these trees $T_k$ into a single $(1+\eps)$-approximate  
  shortest path tree $T$ up to depth $d$ from $s$, as follows.

  The $h$-SSSP algorithm \cite{Bernstein16} has an additional property that it explicitly maintains,
  for each vertex $v\neq s$, both the value $\dist'(s,v)$ and the index $k_v$ of the tree $T_{k_v}$ such
  that
  $\dist'(s,v)=\dist_{T_{k_v}}(s,v)=\min_k\{\dist_{T_k}(s,v):v\in V(T_k)\}$.
  Whenever $k_v$ is updated for some $v$,
  we make $\tpar_{T_{k_v}}(v)$ the new parent
  of $v$ in $T$ and the edge $\tpar_{T_{k_v}}(v)v=e\in E(T_{k_v})$ the only
  incoming edge of $v$ in $T$.
  When $\dist'(s,v)$ becomes $\infty$, $v$ is removed from $T$.

  We first prove that by proceeding this way, $T$ actually remains a tree, or, in other words,
  the edges of the form $\tpar_T(v)v$, where
  $v\in V(T)\setminus\{s\}$, form an out-tree that is a subgraph of $G$.
  Since each vertex in $T$ except of $s$ (which has $0$ outgoing edges) has exactly one incoming edge,
  we only need to prove that $T$ has no directed cycles.
  For contradiction, suppose there is a cycle $e_1\ldots e_k$ in $T$,
  where $e_i=u_iu_{i+1}$, $u_{k+1}=u_1$ and $u_i=\tpar_T(u_{i+1})$ for all $i=1,\ldots k$.
  Consider some edge $e_i$. 
  We have $u_i=\tpar_T(u_{i+1})=\tpar_{T_{k_{u_{i+1}}}}(u_{i+1})$ and hence
  $\dist'(s,u_{i+1})=\dist_{T_{k_{u_{i+1}}}}(s,u_{i+1})=\dist_{T_{k_{u_{i+1}}}}(s,u_i)+\wei_{T_{k_{u_{i+1}}}}(u_iu_{i+1})$ for some $l$.
  Recall that $G$ has positive edge weights and $T_{k_{u_{i+1}}}\subseteq G$.
  As a result
  $\dist'(s,u_{i+1})>\dist_{T_{k_{u_{i+1}}}}(s,u_i)\geq \dist'(s,u_i)$.
  This way we obtain $\dist'(s,u_1)=\dist'(s,u_{k+1})>\dist'(s,u_k)>\ldots>\dist'(s,u_1)$, a contradiction.
  
  Second, we prove that for all $v\in V(T)$ we have $\dist_T(s,v)\leq \dist'(s,v)$.
  We proceed by induction on $\treedep{T}{v}$.
  If $s=v$, this is clearly true.
  Otherwise, let $\treedep{T}{v}>0$ and suppose the claim
  holds for all vertices of $T$ of smaller depth.
  Let $p=\tpar_T(v)$.
  By the inductive hypothesis, we have $\dist_T(s,p)\leq \dist'(s,p)$.
  By the definition of $T$, we also have $p=\tpar_{T_{k_v}}(v)$ and $\wei_T(p,v)=\wei_{T_{k_v}}(p,v)$.
  Hence,
  $$\dist'(s,v)=\dist_{T_{k_v}}(s,p)+\wei_{T_{k_v}}(p,v)\geq \dist'(s,p)+\wei_T(p,v)\geq \dist_T(s,p)+\wei_T(p,v)=\dist_T(s,v).$$

  Finally, note that as the $h$-SSSP algorithm guarantees that $\dist'(s,v)\leq(1+\eps)\dist_G^d(s,v)$,
  we also have $\dist_T(s,v)\leq (1+\eps)\dist_G^d(s,v)$.
  Clearly, as $T\subseteq G$, for all $v\in V(T)$ we have $\dist_T(s,v)\geq \dist_G(s,v)$.
\end{proof}

By Theorem~\ref{thm:weighted-hubs-simple}, by finding blocker sets of approximate shortest path trees
(as in Definition~\ref{def:blocker-appr}), we can 
compute/verify 
$(1+\eps')^{\Theta(\log{n})}$-approximate $\Theta(d\log{n})$-hub sets
as before.

Given appropriate hub sets,
all that both our deterministic incremental $(1+\eps)$-approximate APSP algorithm,
and Bernstein's
randomized $(1+\eps)$-approximate partially dynamic APSP algorithm do,
is essentially set up and maintain a ``circuit'' (i.e., a collection of data structures
whose outputs constitute the inputs of other structures)
of $h$-SSSP data structures from
the hubs with different parameters $h$ and appropriately set $\eps'$.
In order to make these algorithms work with our reliable approximate
hub sets, we basically need to play with the parameters: increase all $h$'s
by a polylogarithmic factor, and decrease $\eps'$ by a polylogarithmic factor.
We discuss the details in Appendix~\ref{s:reliable-weighted}.

\subsection{Proof of Theorem~\ref{thm:weighted-hubs-simple}}\label{sec:weighted-hubs-simple-proof}

\begin{lemma}\label{lem:tree-path-blocker}
  Let $T$ be a rooted tree over $V$.
  Let $d$ be a positive integer and suppose $B\subseteq V$ is a $(T,d)$-blocker set (with respect to Definition~\ref{def:blocker-appr}).
  Then for any $v\in V(T)$, the path $\treepath{T}{v}$ is $(B,2d)$-covered.
\end{lemma}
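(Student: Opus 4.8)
The plan is to work directly from the decomposition $T^d=\{T^d_1,\dots,T^d_q\}$ supplied by Definition~\ref{def:blocker-appr}. Recall that each $T^d_a$ is a subtree of $T$ of depth at most $d$ whose root $\rho_a$ has $\treedep{T}{\rho_a}$ divisible by $d$; that, by the construction, every non-leaf vertex of $T$ whose depth is divisible by $d$ is the root of one of the $T^d_a$ (such a vertex stays non-leaf until it is processed, since removing the descendants of a deeper processed vertex leaves that vertex itself in place); and that $B$ being a $(T,d)$-blocker set means, by definition, that $B$ is a $(T^d_a,d)$-blocker set in the sense of Definition~\ref{def:blocker} for every $a$. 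Now fix $v\in V(T)$ (the case $v=\troot(T)$ is trivial), put $\len:=\treedep{T}{v}$ and $s:=\lfloor \len/d\rfloor$, and for $j=0,\dots,s$ let $\rho_j$ be the vertex of $\treepath{T}{v}$ at depth $jd$. I would split $\treepath{T}{v}=Q_1\cdots Q_s Q_{s+1}$, where $Q_j$ is the sub-path from $\rho_{j-1}$ to $\rho_j$ for $j\le s$ and $Q_{s+1}$ is the remaining sub-path from $\rho_s$ to $v$ (dropped when empty). Every $Q_j$ has hop-length at most $d$ (exactly $d$ for $j\le s$), hence is $(B,d)$-covered as a single sub-path.

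The heart of the argument is to show $V(Q_j)\cap B\neq\emptyset$ for every $j\le s$. For such a $j$, the vertex $\rho_{j-1}$ is non-leaf (it has $v$ as a descendant on the path, because $(j-1)d<\len$) and its depth is divisible by $d$, so it is the root of some $T^d_a$; moreover $\rho_j$ lies in $T^d_a$ at depth exactly $d$ and is therefore a leaf of $T^d_a$ of maximal depth. Since $B$ is a $(T^d_a,d)$-blocker set, either $\rho_j$ or one of its ancestors in $T^d_a$ belongs to $B$ — and the ancestors of $\rho_j$ in $T^d_a$ together with $\rho_j$ are exactly the vertices of the path $\rho_{j-1}\to\rho_j$, i.e.\ of $Q_j$.

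Finally I would feed the decomposition $\treepath{T}{v}=Q_1\cdots Q_{s+1}$ into Lemma~\ref{lem:cover-concat} with all $d_i=d$. For any pair of indices $x\le z$ admitted there, every block $Q_y$ with $x<y<z$ must avoid $B$; but a block with index $\le s$ always meets $B$ by the previous paragraph, while no block of index $>s$ can be strictly interior, so necessarily $z\le x+1$ and therefore $d_x+\cdots+d_z\le 2d$. Hence the quantity $D$ of Lemma~\ref{lem:cover-concat} is at most $2d$, so $\treepath{T}{v}$ is $(B,2d)$-covered, as claimed. I expect the only genuine subtlety to be the mismatch between what a blocker set guarantees — a vertex of $B$ \emph{somewhere} on each length-$d$ root-to-leaf segment, not necessarily at its endpoint — and what a $(B,\cdot)$-cover requires, namely sub-paths that \emph{start} at vertices of $B$: since the roots $\rho_j$ need not lie in $B$ one cannot cut $\treepath{T}{v}$ at them, and it is precisely absorbing this gap (via Lemma~\ref{lem:cover-concat}) that costs the factor $2$ in ``$2d$''. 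The one bookkeeping point to verify carefully against Definition~\ref{def:blocker-appr} is the assertion that every non-leaf vertex at depth divisible by $d$ is indeed a root of one of the $T^d_a$.
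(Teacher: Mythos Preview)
Your proof is correct and follows essentially the same approach as the paper's: both decompose $\treepath{T}{v}$ into consecutive length-$d$ segments at depths divisible by $d$ (plus a possible short tail), argue that each full segment is a root-to-leaf path in some tree of $T^d$ and hence meets $B$, and then invoke Lemma~\ref{lem:cover-concat} to conclude $(B,2d)$-coverage. Your write-up is more explicit than the paper's about why each $\rho_{j-1}$ is actually the root of some $T^d_a$ (the paper simply asserts ``$P_i$ is a root-leaf path in some tree of the set $T^d$''), which is a point worth spelling out.
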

\begin{proof}
  By the definition of $B$,
  $\treepath{T}{v}$ can be expressed as $P_1\ldots P_k$, $P_i=u_i\to v_i$,
  where $\hops{P_i}=d$ and $V(P_i)\cap B\neq\emptyset$ for $i=1,\ldots,k-1$ and $\hops{P_k}\leq d$.
  This is because each for such $i$ we have $d\mid \treedep{T}{u_i}$
  and $P_i$ is a root-leaf path in some tree of the set $T^d$ obtained from $T$
  using Definition~\ref{def:blocker-appr}.
  Since each $P_i$ is trivially $(B,d)$-covered and $P_k$ is the only subpath
  that may not contain a vertex of $B$, by Lemma~\ref{lem:cover-concat}
  we conclude that $\treepath{T}{v}$ is indeed $(B,2d)$-covered.
\end{proof}

\begin{lemma}\label{lem:aux-appr-hub}

  Let $G=(V,E)$ be a directed graph and let $d>0$ be an even integer.
  Let $H\subseteq V$.
  Let $\treecol^{\tfrom}=\{T^{\tfrom}_v: v\in H\}$
  ($\treecol^{\tto}=\{T^{\tto}_v: v\in H\}$)
  be a collection of $(1+\eps)$-approximate shortest
  path trees up to depth-$3d$ from $H$ in $G$ (in $\rev{G}$, resp.).
  

  Let $B\subseteq V$ be a $(\treecol^{\tfrom}\cup \treecol^{\tto},\frac{d}{2})$-blocker set.
  Then, for any path $P=u\to v$ in $G$ that is $(H,d)$-covered 
  and integer $j\geq 0$, if $\hops{P}\leq 2^jd$,
  then there exists a path $P'=u\to v$ in $G$ such that
    $\len(P')\leq (1+\eps)^j\cdot \len(P)$ and
    $P'$ is $(B,2d(j+1))$-covered.
\end{lemma}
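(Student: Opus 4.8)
The plan is to prove Lemma~\ref{lem:aux-appr-hub} by induction on $j$, mirroring the halving argument sketched in the introduction. The base case $j=0$ should be essentially immediate: if $\hops{P}\leq d$ and $P=u\to v$ is $(H,d)$-covered, then $P$ itself is $(B,2d)$-covered because it already decomposes into subpaths of at most $d$ edges, each starting (from the second onwards) in a vertex of $H\subseteq V$; actually we want the subpaths to start in vertices of $B$, so we need a cleaner statement. The right base case is instead to take $P'$ to be a rerouting of $P$ through the approximate trees: split $P$ (which is $(H,d)$-covered) into hops-$\le d$ pieces $P_i=u_i\to v_i$ with $u_i\in H$ for $i\ge 2$, replace each such $P_i$ by $\treepath{T^{\tfrom}_{u_i}}{v_i}$, and replace the first piece $P_1=u\to v_1$ by the reverse of $\treepath{T^{\tto}_{v_1}}{u}$ (using $v_1\in H$, which holds since $H$ is a $d$-hub set — or more simply, if $k=1$ we can root the single approximate tree at $u$ only when $u\in H$; if not we must be a little careful). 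Each such tree path has length at most $(1+\eps)$ times the length of the replaced subpath, hence $\len(P')\le(1+\eps)\len(P)$... but the base case should give $(1+\eps)^0$, so for $j=0$ I would instead argue directly that $P$ itself, being $(H,d)$-covered with $\hops{P}\le d$, yields $P'=P$ which is $(B, 2d)$-covered once we observe that any $(H,d)$-covered path of at most $d$ hops is trivially a single hop-$\le d$ block, and the covering definition for $j=0$, $2d(j+1)=2d$, is satisfied vacuously by taking $k=1$ in Definition~\ref{def:covered}. So the base case sets $P'=P$.

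**The inductive step** is the heart. Assume the claim for $j-1$ and suppose $\hops{P}\le 2^j d$ with $P=u\to v$ being $(H,d)$-covered. Split $P$ at its midpoint (in hop-count) into $P=QR$ where $Q=u\to x$, $R=x\to v$, and $\hops{Q},\hops{R}\le 2^{j-1}d + d$ roughly — I would split so that $\hops{Q}\le 2^{j-1}d$ and $\hops{R}\le 2^{j-1}d$, choosing the split point to be an endpoint of one of the covering blocks so that both $Q$ and $R$ remain $(H,d)$-covered; this is possible since the block boundaries are $\le d$ apart. The subtlety: the split point $x$ may not lie in $H$, so $R$ starts at $x\notin H$ in general, but $R$ is still $(H,d)$-covered (with a possibly shorter first block). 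Now I would like to reroute. If $x$ happens to be in $H$, apply the approximate trees $T^{\tto}_x$ (to get a path $u\to x$ of at most $3d$ hops... no, the tree path can have arbitrary hop-count up to $3d$) — hmm, actually the trees have depth $\le 3d$, so $\treepath{T^{\tto}_x}{u}$ exists only if $\dist^{3d}_{\rev G}(x,u)<\infty$, which need not hold if $Q$ has more than $3d$ hops. So rerouting only shortens to $\le 3d$ hops when the relevant subpath already had $\le 3d$ hops in $G$. This tells me the recursion must be structured differently: we should only invoke the tree-rerouting once a subpath's hop-length has dropped below $3d$, and otherwise recurse on the two halves.

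**The cleanest recursion**: on a $(H,d)$-covered path $P=u\to v$ with $\hops{P}\le 2^j d$, split into $Q=u\to x$ and $R=x\to v$ as above with each half $(H,d)$-covered and of hop-length $\le 2^{j-1}d$. Apply the inductive hypothesis to $Q$ (getting $Q'$ with $\len(Q')\le(1+\eps)^{j-1}\len(Q)$, $Q'$ being $(B,2dj)$-covered) and to $R$ (getting $R'$ similarly $(B,2dj)$-covered). Set $P'=Q'R'$. Then $\len(P')\le(1+\eps)^{j-1}\len(P)\le(1+\eps)^j\len(P)$ — even better than needed, so perhaps the $(1+\eps)^j$ slack is there to absorb a tree-rerouting at the split point. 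The concatenation $Q'R'$: $Q'$ is $(B,2dj)$-covered and $R'$ is $(B,2dj)$-covered, but to concatenate via Lemma~\ref{lem:cover-concat} or Lemma~\ref{lem:cover-concat-simple} I need the junction vertex $x$ to be in $B$. It is not, in general. This is precisely where the $(1+\eps)$ factor and the approximate trees come in: I would insert a rerouting of a small neighborhood of $x$ through the approximate trees so that the reconnection point lands in $B$. Concretely, take the covering block of $Q$ ending at $x$, say $Q_{\mathrm{last}}=a\to x$ with $a\in H$ (or $a=u$), $\hops{Q_{\mathrm{last}}}\le d\le 3d$, replace it by $\rev{(\treepath{T^{\tto}_x}{a})}$ — wait, I want a vertex of $B$, and $B$ is a blocker set of the $\le d/2$-truncations of these depth-$3d$ trees, so every tree path of the approximate tree that is long enough (and $\treepath{T^{\tto}_x}{a}$ has $\hops{}$ possibly small) — hmm, if it is short it may miss $B$ entirely. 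So actually what I want: $Q$ itself, being $(H,d)$-covered with $\hops{}\ge d/2$ (ensuring a long enough path), can be rerouted through one approximate tree rooted at an appropriate $H$-vertex so the rerouted path, by the blocker-set property of $B$ with parameter $d/2$ and Lemma~\ref{lem:tree-path-blocker}, contains a vertex of $B$ and is $(B,d)$-covered. I would combine this with the inductive hypothesis applied to the remaining stretch. The bookkeeping to get exactly $2d(j+1)$ as the covering parameter — via Lemma~\ref{lem:cover-concat} with the $\max$ of consecutive block-sum lengths — is the part I expect to be fiddly. **The main obstacle** is handling the junction vertices: ensuring that each halving produces a reconnection point in $B$ (not just in $H$), which forces interleaving the inductive recursion with a single tree-rerouting per level using the $d/2$-blocker property, while keeping the hop-length bound $\le 3d$ on each rerouted piece valid (since the approximate trees only reach depth $3d$) and tracking the covering parameter carefully through Lemma~\ref{lem:cover-concat}. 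The length blow-up is the easy part: each level contributes at most one $(1+\eps)$ factor, over $\le j$ levels, giving $(1+\eps)^j$.
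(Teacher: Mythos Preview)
Your base case is right: when $j=0$, take $P'=P$; since $\hops{P}\le d\le 2d$, $P$ is trivially $(B,2d)$-covered (take $k=1$ in Definition~\ref{def:covered}).

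Your inductive step has a genuine gap. The ``split at the midpoint, recurse on both halves'' scheme you settle on does not control the covering parameter: if $Q'$ is $(B,2dj)$-covered and $R'$ is $(B,2dj)$-covered, then Lemma~\ref{lem:cover-concat} applied to $P'=Q'R'$ gives only $(B,4dj)$-covering (the open interval $(1,2)$ is empty, so the condition is vacuous and $D\ge d_1+d_2$), and $4dj>2d(j+1)$ for all $j\ge 2$. You correctly identify that the junction vertex $x$ need not lie in $B$, and you try to fix this by rerouting a single block near $x$ through an approximate tree; you then correctly observe that the rerouted tree path may have fewer than $d/2$ hops and therefore miss $B$ entirely. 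You do not resolve this, and the proposal ends without a working plan.

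The paper's recursion is structured differently, and the difference is the missing idea. First use Lemma~\ref{lem:hubs-atleast} to partition $P$ into chunks $P_1,\ldots,P_s$ each of hop-length in $[d,3d]$, with $P_i$ starting in $H$ for $i\ge 2$. Reroute \emph{every} chunk through the appropriate depth-$3d$ tree (this is where the single $(1+\eps)$ factor for this level enters), obtaining paths $Q_1,\ldots,Q_s$. Now classify: if $\hops{Q_i}\ge d/2$ call it \emph{long}; by Lemma~\ref{lem:tree-path-blocker} it is $(B,d)$-covered and necessarily contains a vertex of $B$. Otherwise call it \emph{short}. Group each maximal run of consecutive short $Q_i$'s into a single block; such a block has total hop-length less than $s\cdot d/2\le\hops{P}/2\le 2^{j-1}d$ (each $P_i$ has $\ge d$ hops, so $s\le\hops{P}/d$) and is $(H,d)$-covered (each $Q_i$ has fewer than $d/2$ hops and starts in $H$ for $i\ge 2$), so the inductive hypothesis applies to it and yields a $(B,2dj)$-covered replacement. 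The point is that the long $Q_i$'s, each $(B,d)$-covered and containing a $B$-vertex, \emph{separate} the short runs: in the Lemma~\ref{lem:cover-concat} bookkeeping, any maximal interval $[x,z]$ with $V(Q''_y)\cap B=\emptyset$ for $y\in(x,z)$ spans at most one short-run block flanked by at most two long blocks, so $D\le d+2dj+d=2d(j+1)$.

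In short, the structural idea you are missing is to reroute \emph{all} $[d,3d]$-chunks up front and then recurse only on the runs of short rerouted chunks, using the long rerouted chunks as $B$-containing separators. This keeps the covering parameter growth additive ($+2d$ per level) rather than multiplicative.
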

\begin{proof}
    We proceed by induction on $j$. For $j=0$, $j+1=2^j$ so $P'=P$ has the
    required properties since $P$ has no more than $d\leq 2d$ edges.
    
    Let $j>0$ and suppose the lemma holds for all $j'<j$. 
    If $\hops{P}\leq d\cdot 2^{j-1}$, then $P'$
    exists by the inductive hypothesis.
    Assume $\hops{P}> d\cdot 2^{j-1}= d$.

    By Lemma~\ref{lem:hubs-atleast}, we can 
    partition $P$ into $s\geq 1$ subpaths $P_1,\ldots,P_s$,
    where $P_i=u_i\to v_i$, such that
    $\hops{P_i}\in [d,3d]$ for all $i=1,\ldots,s$ and $u_i\in H$ for all $i=2,\ldots,s$.

	Since $P_i$ has no more than $3d$ edges, for each $i\in \{2,\ldots,s\}$ we have
    $$\len(T^{\tfrom}_{u_i}[v_i])=\dist_{T^{\tfrom}_{u_i}}(u_i,v_i)\leq (1+\eps)\disth{3d}_G(u_i,v_i)\leq (1+\eps)\len(P_i),$$
    whereas for $i=1$ we have
    $$\len(T^{\tto}_{v_1}[u_1])=\dist_{T^{\tto}_{v_1}}(v_1,u_1)\leq (1+\eps)\disth{3d}_G(u_1,v_1)\leq (1+\eps)\len(P_1).$$

    For $i=1$, let $Q_1=\rev{(\treepath{T^{\tto}_{v_1}}{u_1})}$, whereas for $i=2,\ldots,s$, set
    $Q_i=\treepath{T^{\tfrom}_{u_i}}{v_i}$.
    Then we clearly have $\len(Q_i)\leq (1+\eps)\len(P_i)$ for all $i$.

    Let $Q_1',\ldots,Q_r'$ be a partition of $Q_1\ldots Q_{s}$ into blocks, i.e.,
    $Q_i'=Q_{t_i}Q_{t_i+1}\ldots Q_{t_{i+1}-1}$,
    where $1=t_1<t_2<\ldots<t_{r}<t_{r+1}=s+1$,
    such that:
    \begin{itemize}
      \item If $\hops{Q_i}\geq\frac{d}{2}$, then $Q_i$ forms a single-element block $Q_z'$.
        In other words, for some $z$, $t_z=i$ and $t_{z+1}=i+1$.
      \item If $Q_x,\ldots,Q_y$ is a \emph{maximal} set of paths such that for all $g\in [x,y]$,
        we have $\hops{Q_g}<\frac{d}{2}$, then $Q_x,\ldots,Q_y$ form a single block $Q_z'$.
        In other words, for some $z$, $t_z=x$ and $t_{z+1}=y+1$.
    \end{itemize}
    We now set $P'=Q_1''Q_2''\ldots Q_r''$, where each $Q_z''$, for $z=1,\ldots,r$,
    is defined as follows.
    \begin{itemize}
      \item If $Q_z'=Q_i$, where $\hops{Q_i}\geq\frac{d}{2}$, then $Q_z''=Q_z'=Q_i$.
        Note that by Lemma~\ref{lem:tree-path-blocker}, $Q_z''$ is $(B,d_z)$-covered, where $d_z=2\cdot\frac{d}{2}=d$.
        Moreover, since $\hops{Q_z}\geq\frac{d}{2}$, $Q_z''$ necessarily contains a vertex from~$B$.

  \item If $Q_z'=Q_{t_z}\ldots Q_{t_{z+1}-1}$, where for each $g\in [t_z,t_{z+1}-1]$, $\hops{Q_g}<\frac{d}{2}$,
        then $\hops{Q_z'}\leq \frac{\hops{P}}{2}\leq d\cdot 2^{j-1}$.
        Since $Q_z'$ is a concatenation of paths of length less than $\frac{d}{2}$,
        each of which has a vertex of $H$ as its endpoint,
        by Lemma~\ref{lem:cover-concat}
        $Q_z'$ is $(H,d)$-covered.
        Hence, we can apply the inductive hypothesis to $Q_z'$:
        there exists a path $Q_z''=u_{t_z}\to v_{t_{z+1}-1}$ such that $\len(Q_z'')\leq (1+\eps)^{j-1}\cdot\len (Q_z')$
        and $Q_z''$ is $(B,d_z)$-covered, where $d_z=2dj$.
    \end{itemize}
    By Lemma~\ref{lem:cover-concat}, $P'$ is $(B,D)$-covered, where
    $$D=\max\left\{d_x+\ldots+d_z:1\leq x\leq z\leq r\text{ and } V(Q_y'')\cap B=\emptyset\text{ for all }y\in (x,z)\right\}.$$
    
    We now prove that $D\leq 2d(j+1)$.
    Let $x,z$ be such that $1\leq x\leq z\leq r$ and $V(Q_y'')\cap B=\emptyset$ for all $y\in (x,z)$.
    If $(x,z)\cap\mathbb{Z}=\emptyset$, then $z\leq x+1$.
    Otherwise, let $g\in (x,z)\cap\mathbb{Z}$. Observe that $V(Q_g'')\cap B=\emptyset$ can only hold if $Q_g''$ is a \emph{maximal}
    set of consecutive paths $Q_i$, $i\leq r$, with less than $\frac{d}{2}$ edges
    each.
    Hence, by the aforementioned maximality, $V(Q_{g-1}'')\cap B\neq\emptyset$
    and $V(Q_{g+1}'')\cap B\neq\emptyset$.
    We thus have $x\geq g-1$ and $y\leq g+1$.
    Consequently, we obtain $z\leq x+2$ and from $g\in (x,y)$
    we obtain $z=x+2$ and $g=x+1$.
    We clearly have $d_x+d_{x+1}+d_{x+2}=d_x+d_g+d_z=d+2dj+d=2d(j+1)$.

    Observe now that among at most two consecutive paths $Q_i''$,
    there can be at most one \emph{maximal}
    set of consecutive paths $Q_i$, $i\leq r$, with less than $\frac{d}{2}$ edges
    each.
    Therefore, $d_x+d_{x+1}\leq 2dj+d\leq 2d(j+1)$ for any $x\leq r$.

    It remains to prove that $\len(P')\leq (1+\eps)^j\len(P)$. We have:
    \begin{align*}
      \len(P')&=\sum_{i=1}^r\len(Q_i'')\\
              &\leq (1+\eps)^{j-1}\sum_{i=1}^r\len(Q_i')\\
              &=(1+\eps)^{j-1}\sum_{i=1}^{s}\len(Q_i)\\
              &\leq (1+\eps)^{j-1}\sum_{i=1}^{s}(1+\eps)\len(P_i)\\
              &=(1+\eps)^j\len(P).\qedhere
    \end{align*}

\end{proof}

\begin{corollary}\label{cor:hubs-inductive-appr}
Let $G=(V,E)$ be a directed graph and let $d>0$ be an even integer.
Let $H_d^\eps$ be a $(1+\eps)^q$-approximate
  $d$-hub set of $G$.

   Let $\treecol^{\tfrom}=\{T^{\tfrom}_v: v\in H_d^\eps\}$
  ($\treecol^{\tto}=\{T^{\tto}_v: v\in H_d^\eps\}$)
  be a collection of $(1+\eps)$-approximate shortest
  path trees up to depth-$3d$ from vertices of $H_d^\eps$ in $G$ (in $\rev{G}$, resp.).
 
  Let $B\subseteq V$ be a $(\treecol^{\tfrom}\cup \treecol^{\tto},\frac{d}{2})$-blocker set.
  Then $B$ is a $(1+\eps)^{p+q-1}$-approximate $2dp$-hub set of $G$,
  where $p=\lceil\log_2{n}\rceil+1$.
\end{corollary}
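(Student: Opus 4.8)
The plan is to derive the corollary from Lemma~\ref{lem:aux-appr-hub}, invoked with $H=H_d^\eps$, so that the only real work is to exhibit, for each pair of vertices, a good starting path of small hop-length to feed into that lemma. Concretely, fix $u,v\in V$ with $\dist_G(u,v)<\infty$. Since $H_d^\eps$ is a $(1+\eps)^q$-approximate $d$-hub set, there is at least one $(H_d^\eps,d)$-covered $u\to v$ path of length at most $(1+\eps)^q\dist_G(u,v)$; among all such paths I would let $P$ be one of minimum length. This $P$ exists because every such path has hop-length at most its length, hence at most $(1+\eps)^q\dist_G(u,v)$ (edge weights are $\ge 1$), so only finitely many edge-sequences compete. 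Write the witnessing decomposition $P=P_1\cdots P_k$ with $P_i=u_i\to v_i$, $\hops{P_i}\le d$, and $u_i\in H_d^\eps$ for $i\ge 2$.

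The key step is to bound $\hops{P}$. I claim the breakpoints $u_2,\dots,u_k$ are pairwise distinct and all different from $u$. Indeed, if $u_i=u_j$ for some $2\le i<j\le k$, then $P_i\cdots P_{j-1}$ is a cycle of positive length through the hub $u_i=u_j$, and deleting it yields the strictly shorter path $P_1\cdots P_{i-1}P_j\cdots P_k$, which is still $(H_d^\eps,d)$-covered since all of its blocks past the first still begin at hubs ($u_2,\dots,u_{i-1},u_j,\dots,u_k$) and each still has at most $d$ edges; this contradicts the minimality of $P$. Symmetrically, if $u_i=u$ for some $i\ge 2$, deleting the positive-length cycle $P_1\cdots P_{i-1}$ leaves the shorter $(H_d^\eps,d)$-covered path $P_i\cdots P_k$, again a contradiction. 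Hence $k-1\le n-1$, i.e. $k\le n$, so $\hops{P}=\sum_{i=1}^k\hops{P_i}\le kd\le nd\le 2^{\lceil\log_2 n\rceil}d=2^{p-1}d$.

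Now I would apply Lemma~\ref{lem:aux-appr-hub} with $H=H_d^\eps$, the path $P$, and $j=p-1\ge 0$: its hypotheses are precisely the collections $\treecol^\tfrom,\treecol^\tto$ of $(1+\eps)$-approximate shortest path trees up to depth $3d$ from $H_d^\eps$ and the $(\treecol^\tfrom\cup\treecol^\tto,\tfrac d2)$-blocker-set condition on $B$ assumed here, $d$ is the same even integer, and $\hops{P}\le 2^{p-1}d$. The lemma then produces a path $P'=u\to v$ in $G$ with $\len(P')\le(1+\eps)^{p-1}\len(P)$ that is $(B,2d\,p)$-covered (since $j+1=p$). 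Combining, $\len(P')\le(1+\eps)^{p-1}(1+\eps)^q\dist_G(u,v)=(1+\eps)^{p+q-1}\dist_G(u,v)$, and as $u,v$ were arbitrary, Definition~\ref{def:appr-hubs} yields that $B$ is a $(1+\eps)^{p+q-1}$-approximate $2dp$-hub set of $G$.

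The main obstacle is the hop-length bound in the second paragraph: one must make sure a minimum-length $(H_d^\eps,d)$-covered path is well-defined and, above all, that deleting a cycle occurring at a repeated hub (or at $u$) preserves the $(H_d^\eps,d)$-covered property — after that, the rest is a direct plug-in of Lemma~\ref{lem:aux-appr-hub}.
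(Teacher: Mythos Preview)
Your proof is correct and follows the same strategy as the paper: apply Lemma~\ref{lem:aux-appr-hub} with $H=H_d^\eps$ and $j=p-1$ to a witnessing $(H_d^\eps,d)$-covered path $P$, then combine the length bounds. The only difference is in how you bound $\hops{P}$. The paper simply asserts $\hops{P}\le n$ (and since $d\ge 1$ this already gives $\hops{P}\le 2^{p-1}d$), implicitly treating the witnessing path as simple. You instead select a minimum-length witness and argue that its hub breakpoints $u_2,\dots,u_k$ must be pairwise distinct and different from $u$ (else deleting the cycle yields a strictly shorter covered witness), giving $k\le n$ and hence $\hops{P}\le nd\le 2^{p-1}d$. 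Your route is a bit more careful---under the paper's own definition a ``path'' need not have distinct vertices, so your extra step arguably fills a small gap---but the two arguments are otherwise identical.
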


\begin{proof}
  To prove that $B$ is in fact a $(1+\eps)^{p+q-1}$-approximate
  $2dp$-hub set for $G$, let $u,v\in V$ be such that $\dist_G(u,v)<\infty$
  and take any path $P$ from $u$ to $v$ that is
  $(H_d^\eps,d)$-covered and $\len(P)\leq (1+\eps)^q\dist_G(u,v)$.
  As $\hops{P}\leq n$ and $d\geq 1$, $P$ clearly has less than $d2^{p-1}$ edges.
  
  Therefore, by Lemma~\ref{lem:aux-appr-hub} (for $H=H^\eps_d$), there exists a path $P'=u\to v$
  in $G$ such that $\len(P')\leq (1+\eps)^{p-1}\len(P)\leq (1+\eps)^{p+q-1}\dist_G(u,v)$
  and $P'$ is $(B,2dp)$-covered.
\end{proof}

In order to prove Theorem~\ref{thm:weighted-hubs-simple}, 
just apply Corollary~\ref{cor:hubs-inductive-appr} with $q=1$.

\bibliographystyle{plainurl}
\bibliography{references2}

\newpage

\appendix

\section{Deterministic Incremental Algorithm for Dense Graphs}\label{sec:simple-incremental}
Let us focus on the restricted version of the problem -- this assumption can be dropped
by applying Lemma~\ref{lem:op-reduction}. So, $\Delta=O(n^2\log{W}/\eps)$.

Let use $k=\lceil \log_2 n \rceil$ layers of $n$ $h$-SSSP data structures (see Theorem~\ref{thm:hsssp}) with $h=2$.
Let $\eps_1=\eps/2k$.
For $i=1,\ldots,k$, and $v\in V$ the data structure $D_{i,v}$ will maintain
distance estimates $\dist'_{i,v}(w)$, satisfying
\begin{equation}\label{e:tmp}
\dist_G(v,w)\leq \dist'_{i,v}(w)\leq (1+\eps_1)^{i-1}\dist^{2^i}_G(v,w).
\end{equation}
Clearly, since $\dist^{2^k}_G(v,w)=\dist_G(v,w)$ for all $v,w\in V$,
data structures $D_{k,v}$ will maintain $(1+\eps_1)^{k}\leq (1+\eps)$ approximate
distances from $v$ to all other vertices.

We achieve that by making setting up $D_{i,v}$ to be a $h$-SSSP data structure
with $h=2$ run from $v$ on:
\begin{itemize}
  \item either the input graph $G$ if $i=1$,
  \item or a complete graph $G_i$ on $V$ with edge weights equal to the estimates produced by the previous
    layer, i.e., $\wei_{G_i}(uv)=\dist'_{i-1,u}(v)$. 
\end{itemize}
Then, inequality~(\ref{e:tmp}) follows easily by induction.

Recall that for dense graphs, the $h$-SSSP algorithm~\cite{Bernstein16} runs in $O(n^2h\log{n}\log{(nW)}/\eps+\Delta)$ time
and updates some of its estimates $O(nh\log{(nW)}/\eps)$ times.
In our case, whenever $D_{i-1,u}$ updates $\dist'_{i-1,u}(v)=\wei_{G_i}(uv)$, we
send that update to all data structures $D_{i,\cdot}$ run on $G_i$.
Let $\Delta_{i,v}$ be the number of times $D_{i,v}$ updates some of its estimates.
Then, $\Delta_{i,v}=O(n\log{(nW)}/\eps_1)$ and thus the total running time of the algorithm
can be seen to be
$$O\left(\sum_{v\in V}\Delta+\sum_{i=1}^k \sum_{v\in V}n^2\log{n}\log{(nW)}/\eps_1+\sum_{i=2}^k\sum_{v\in V}\sum_{w\in V}\Delta_{i-1,w}\right)=O(n^3\log^3{n}\log{(nW)}/\eps).$$

\section{Details of the State-Of-the-Art Decremental APSP Algorithms}\label{s:baswanabernstein}

\subsection{The Algorithm of Baswana et al. \cite{BaswanaHS07} Given Hub Sets}
The algorithm of Baswana et al. \cite{BaswanaHS07} explicitly maintains
the matrix of pairwise distances between the vertices of an \emph{unweighted} digraph
under edge deletions in $O(n^3\log^2{n})$ total time.
The algorithm is Monte Carlo randomized.
Below we briefly describe how to turn it 
into a Las Vegas algorithm running in time $O(n^3\log^2{n}+nm\log^3{n})=O(n^3\log^3{n})$.

Let the sets $A_0,\ldots,A_q$ be defined as before (i.e., $a_i=6^{q-i}$).
The algorithm of Baswana et al.~\cite{BaswanaHS07} can be rephrased
using our terminology as follows.
The set $A_i$ is used to maintain the values $\dist_i'(u,v)$ such that
$\dist_i'(u,v)\geq \dist(u,v)$ and additionally $\dist_i'(u,v)=\dist(u,v)$
if $\dist(u,v)\in [t_i+1,t_{i+1}]$.
The thresholds $t_i$ are defined as follows: $t_0=-1$, $t_i=\min(d_i,n-1)$ for $i=1,\ldots,q$
and $t_{q+1}=n-1$.

Clearly, the intervals $[t_i+1,t_{i+1}]$ cover the entire range $[0,n-1]$
of possible distances, so for any $u,v\in V$ we have $\min_{i=0}^q \{\dist'_i(u,v)\}=\dist(u,v)$.

It remains to show how to maintain the values $\dist'_i(u,v)$.
For each $i=0,\ldots,q$ and $v\in A_i$ we maintain
shortest path trees $T^\tfrom_{i,v}$ and $T^\tto_{i,v}$  up to depth $t_{i+1}$
in $G$ and $\rev{G}$,
respectively.
We keep each $\dist'_i(u,v)$ equal to
$$\dist'_i(u,v)=\min_{a\in A_i}\left\{\dist_{T^\tto_{i,a}}(a, u)+\dist_{T^\tfrom_{i,v}}(a,v)\right\}.$$

To see that this is correct, note that since $A_0=V$ and $A_i$ is a $d_i$-hub set, then
for each $u,v$ such that $\dist(u,v)\in [t_i+1,t_{i+1}]$, $\dist(u,v)>d_i$
and hence there exists
some shortest $P=u\to v$ path that goes through some vertex of $a\in A_i$.
Equivalently, $P=P_1P_2$, where $P_1=u\to a$ and $P_2=a\to v$ and both
$P_1$ and $P_2$ are shortest and $\hops{P_i}\leq\hops{P}\leq t_{i+1}$.
Consequently, $\hops{P}=\hops{P_1}\cup\hops{P_2}=\dist_{T^\tto_{i,a}}(a, u)+\dist_{T^\tfrom_{i,v}}(a,v)$.

To keep $\dist'_i(u,v)$ updated, whenever some 
$\dist_{T^\tto_{i,a}}(a, u)$ (or $\dist_{T^\tfrom_{i,v}}(a,v)$) changes we go
through all $v\in V$ ($u\in V$, resp.), and set
$\dist'_i(u,v):=\min(\dist'_i(u,v),\dist_{T^\tto_{i,a}}(a, u)+\dist_{T^\tfrom_{i,v}}(a,v))$.
Observe that each 
$\dist_{T^\tto_{i,a}}(a, u)$ (similarly ($\dist_{T^\tfrom_{i,v}}(a,v)$)
changes at most $t_{i+1}=O(d_i)$ times
and each such changes is followed by a computation that runs in $O(n)$ time.
The total time needed for updating
the values $\dist'_i(u,v)$ (apart from maintaining the trees themselves) is hence
$$O\left(\sum_{i=1}^q |A_i|\cdot n\cdot d_i\cdot n\right)=O\left(\sum_{i=1}^q n^3\log{n}\right)=O(n^3\log^2{n}).$$
The total time needed to maintain the needed shortest path trees
is also 
$$O\left(\sum_{i=1}^q |A_i|\cdot md_i\right)=O\left(\sum_{i=1}^q mn\log{n}\right)=O(mn\log^2{n})=O(n^3\log^2{n}).$$

Combining with our hub set verification procedure, we obtain a Las Vegas
algorithm running in $O(n^3\log^2{n}+mn\log^3{n})$ time with high probability.

\subsection{The Algorithm of Bernstein \cite{Bernstein16} Given Hub Sets}\label{sec:bernstein}

Bernstein \cite{Bernstein16} showed that $(1+\eps)$-approximate distance
estimates between all pairs of vertices of an unweighted\footnote{Bernstein \cite{Bernstein16} showed that
even for weighted graphs, but in this section we concentrate on the unweighted case.
We explain the modifications needed for the weighted version in Section~\ref{sec:weighted}.}
digraph $G$ subject
to partially dynamic updates can be maintained
in $O(mn\log^5{n}/\eps)$ total time.
The algorithm is Monte Carlo randomized
and assumes an oblivious adversary.

Given the sets $A_0,\ldots,A_q$, the algorithm works as follows.
For each subsequent $i=q,\ldots,0$, for $(u,v)\in (A_i\times V)\cup (V\times A_i)$
it maintains
estimates $\dist'_i(u,v)$ 
satisfying $$\dist_G(u,v)\leq \dist'_i(u,v)\leq (1+\eps')^{q-i+1} \dist_G(u,v).$$

For $i=q$, the values $\dist'_q(u,v)$ are maintained using
$h$-SSSP data structures of Theorem~\ref{thm:hsssp}
run on $G$ and $\rev{G}$ from the only vertex of $A_q$
for $h=n$.
This takes $O(mn\log^2{n}/\eps')$ time.

For $i<q$ on the other hand, the distance estimates $\dist_i'(u,\cdot)$
and $\dist'_i(\cdot,u)$, where $u\in A_i$, are maintained using
$h$-SSSP data structures of Theorem~\ref{thm:hsssp} run
for $h=d_{i+1}+1$
on $G_{i,u}$ and $\rev{G_{i,u}}$, where $G_{i,u}=G\cup S_{i,u}$,
$S_{i,u}=(V,\{uv:v\in A_{i+1}\})$, and $\wei_{S_{i,u}}(uv)=\dist'_{i+1}(u,v)$.
Using the assumption that $\dist'_{i+1}(x,y)\leq (1+\eps')^{q-i}\dist_G(x,y)$ which holds for $x\in V$ and $y\in A_{i+1}$,  one can show that $\dist_{G_{i,u}}^{d_{i+1}+1}(u,v)\leq (1+\eps')^{q-i}\dist_G(u,v)$
for $v\in V$ and hence $\dist'_i(u,v)\leq (1+\eps')^{q-i+1}\dist_G(v,u)$.
Consequently, the total cost of maintaining estimates
$\dist_i'(u,\cdot)$ and $\dist_i'(\cdot,u)$ is
$O(a_i\cdot d_{i+1}m\log^2{n}/\eps')$.
As we set $a_i=6^{q-i}$ and we have $d_{i+1}=O(n/a_i)$, this total cost
is in fact $O(nm\log^3{n}/\eps')$.

In total, as $q=O(\log{n})$ and we set $\eps'=\Theta(\eps/\log{n})$,
the total update time is $O(nm\log^5{n}/\eps)$.
Hence, combining Bernstein's algorithm with our verification procedure
yields a Las Vegas algorithm with asymptotically the same running time with
high probability.

\section{Using Reliable Hubs in Algorithms for Weighted Digraphs}\label{s:reliable-weighted}

\subsection{Deterministic Incremental APSP}
In order to adjust the data structure of Section~\ref{sec:sparse-incremental}
so that it supports edge weights, we do the following.
Assume wlog. that we solve the restricted problem (see Section~\ref{sec:preliminaries}).
First, instead of maintaining exact shortest paths
between nearby vertices in $G$, we
maintain collections $\treecol^{\tfrom}$ and $\treecol^{\tto}$
of $(1+\eps')$-approximate shortest path
trees up to depth-$3d$ from all vertices of $V$ in $G$ and $\rev{G}$ respectively.
To this end, we use data structures of Theorem~\ref{thm:hsssp}.
Consequently, the total time spent in this component
is
$O(mnd\log{n}\log{(nW)}/\eps')$,
as only $O(m\log{W}/\eps')$
updates are issued to any $h$-SSSP data structure maintaining the approximate
shortest path trees.

Next, the $d$-hub set $H_d$ is replaced with a $(1+\eps')^{p}$-approximate
$d'$-hub set $H_{d'}$, also of size $O(\frac{n}{d}\log{n})$, where $d'=\Theta(d\log{n})$
and $p=\Theta(\log{n})$.
By Theorem~\ref{thm:weighted-hubs-simple}, such a hub set can be computed deterministically at the beginning
of each phase in $O(n^2\log{n})$ time using Lemma~\ref{lem:king}.
An analogue of Lemma~\ref{lem:incremental-hubs} also
holds for approximate hub sets and therefore whenever $uv$ is added to the
graph of the weight of $uv$
is decreased, $\{u,v\}$ is added to $H_{d'}^\eps$.

The approximate shortest paths between the hubs are maintained
analogously as in Section~\ref{sec:sparse-incremental}, also
using Theorem~\ref{thm:dense-incremental},
in
$O\left(\frac{md}{n\log{n}}|H_{d'}|^3\log^3{n}\log{(nW)}/\eps'\right)=O\left(m\left(\frac{n}{d}\right)^2\log^5{n}\log{(nW)}/\eps'\right)$
total time.
However, since $H_{d'}$ is now only a $(1+\eps')^{p}$-approximate
$d'$-hub set, we need to explicitly maintain the estimates of
distances $\dist_G^{d'}(u,v)$ for $u,v\in H_{d'}$
(the algorithm already maintains approximate shortest path trees up to depth $3d$, but $d' = \Theta(d \log n)$.)
For this, we need $O(n)$ data structures of Theorem~\ref{thm:hsssp},
run with $h=d'$.
Throughout the whole update sequence, these data structures
require
$O(mnd'\log{n}\log{(nW)}/\eps')=O(mnd\log^2{n}\log{(nW)}/\eps')$ time.
Observe that this component produces $(1+\eps')$-approximate
distance estimates between pairs of hubs $H_{d'}$.

The remaining two components producing the estimates between
$H_{d'}\times V$ and subsequently $V\times V$, 
are also analogous to those
in Section~\ref{sec:sparse-incremental}. The only
change is that $H_{d'}$ is now a $(1+\eps)^p$-approximate $d'$-hub
set, instead of an (exact) $d$-hub set, so the parameter $h$
in the used data structures of Theorem~\ref{thm:hsssp} needs
to be appropriately increased.
The total time used by these components is
again $O(mnd\log^2{n}\log{(nW)}/\eps')$.
The final estimates are $(1+\eps')^{p+4}$-approximate,
so we set $\eps'=\eps/(2p+8)$ in order to obtain
$(1+\eps)$-approximate distance estimates.

\begin{theorem}
  Let $G$ be a weighted directed graph. There exists a deterministic incremental
  algorithm maintaining $(1+\eps)$-approximate distance estimates between all
  pairs of vertices of $G$ in \linebreak
  $O(mn^{4/3}\log^{4}{n}\log{(nW)}/\eps+\Delta)$
  total time.
\end{theorem}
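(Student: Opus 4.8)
The plan is to follow the component structure of Section~\ref{sec:sparse-incremental} essentially verbatim, replacing every exact object by a $(1+\eps')$-approximate one and rescaling $\eps'$ by a $\Theta(\log n)$ factor at the very end. By Lemma~\ref{lem:op-reduction} I may assume the restricted problem, so all edge weights are powers of $(1+\eps')$ and $\Delta=O(m\log W/\eps')$; the additive $\Delta$ in the final bound then also absorbs the reduction's overhead. Fix an even integer $d$ (to be chosen as $\Theta(n^{1/3}\polylog n)$), let $p=\lceil\log_2 n\rceil+1$, and set $\eps'=\eps/(2p+8)$.

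First I would maintain, in place of the exact depth-$d/2$ trees, collections $\treecol^{\tfrom},\treecol^{\tto}$ of $(1+\eps')$-approximate shortest path trees up to depth $3d$ from every vertex in $G$ and in $\rev{G}$, using the $h$-SSSP algorithm (Theorem~\ref{thm:hsssp}) with $h=3d$ together with the tree-extraction of Lemma~\ref{lem:hsssp-ext}; since each instance sees $O(m\log W/\eps')$ updates this costs $O(mnd\,\polylog n\,\log(nW)/\eps')$. Next, running King's algorithm (Lemma~\ref{lem:king}) on these trees in the sense of Definition~\ref{def:blocker-appr} yields, by Theorem~\ref{thm:weighted-hubs-simple}, a set $H_{d'}$ of size $O(\tfrac nd\log n)$ that is a $(1+\eps')^p$-approximate $d'$-hub set of both $G$ and $\rev{G}$, with $d'=\Theta(d\log n)$. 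I would run hub maintenance in phases of $f=\Theta(\tfrac nd\log n)$ updates exactly as in Section~\ref{sec:sparse-incremental}: recompute $H_{d'}$ at each phase boundary in $O(n^2\log n)$ time, and after each edge insertion or weight decrease $uv$ insert $\{u,v\}$ into $H_{d'}$. This needs the approximate analogue of Lemma~\ref{lem:incremental-hubs}, whose proof is word-for-word that of Lemma~\ref{lem:incremental-hubs}: a shortest $u\to v$ path in $G+uv$ either already admits a $(H_{d'},d')$-covered $(1+\eps')^p$-approximation inside $G$, or factors through the new edge as $P_1\cdot(uv)\cdot P_2$ with $u,v\in H_{d'}$, and Lemma~\ref{lem:cover-concat-simple} finishes it. Hub maintenance costs $O(\tfrac mf n^2\log n)=O(nmd)$.

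Then I would reproduce the last three components. Within a phase the graph $A$ on vertex set $H_{d'}$, with $\wei_A(uv)$ the $(1+\eps')$-estimate of $\dist^{3d}_{\rev{G}}(u,v)$ read off from $\treecol^{\tto}$, is incremental, so Theorem~\ref{thm:dense-incremental} keeps $(1+\eps')$-approximate pairwise distances on $A$ in $O(m(n/d)^2\,\polylog n\,\log(nW)/\eps')$ total. Because now $d'\ne 3d$ I also run $n$ extra $h$-SSSP instances with $h=d'$ (on $G$ and on $\rev{G}$) to get $(1+\eps')$-approximate estimates of $\dist_G^{d'}$; composing these with the $A$-distances exactly as in the proof of Lemma~\ref{lem:short-path-apsp} gives $(1+\eps')$-approximate distances for all pairs in $H_{d'}$. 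Finally I would build the incremental shortcut graphs $S_u$ and run $h$-SSSP on $\rev{G}\cup S_u$, and then $R_u$ and $h$-SSSP on $G\cup R_u$, all with $h=d'+1$; the only change from the unweighted analysis is that the covered-path bounds in the analogues of Lemma~\ref{lem:incr-shortcut-rev} now use $H_{d'}$ being a $(1+\eps')^p$-approximate $d'$-hub set, contributing one extra factor $(1+\eps')^p$ to the replacement-path lengths. Each of these $O(n)$ incremental $h$-SSSP instances on $O(m)$ edges costs $O(md'\,\polylog n\,\log(nW)/\eps')$, hence $O(mnd\,\polylog n\,\log(nW)/\eps')$ in total. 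The output estimates satisfy $\dist_G(u,v)\leq\dist''(u,v)\leq(1+\eps')^{p+4}\dist_G(u,v)$ (one $(1+\eps')$ from each of the two concatenations through $S_u$ and $R_u$, one from each of the two final $h$-SSSP layers, and $(1+\eps')^p$ from the approximate hub set), and $(1+\eps')^{p+4}\leq e^{(p+4)\eps'}\leq 1+\eps$ by the choice of $\eps'$. Summing, the total time is
\[
O\!\left(mnd\,\polylog n\,\log(nW)/\eps\;+\;m(n/d)^2\,\polylog n\,\log(nW)/\eps\right)+O(\Delta),
\]
and $d=\Theta(n^{1/3}\polylog n)$ balances the two terms; chasing the exact log-exponents gives the claimed $O(mn^{4/3}\log^{4}n\log(nW)/\eps+\Delta)$ bound.

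The main obstacle is bookkeeping rather than any single inequality: one must verify that every graph handed to an $h$-SSSP instance or to the dense-incremental data structure is genuinely incremental throughout — in particular that all shortcut weights $S_u,R_u$ only decrease, and that the only deletions (the wholesale removal of shortcuts when $H_{d'}$ is recomputed) occur at phase boundaries and, being overestimates, never cause underestimation — and that the $\Theta(\log n)$ many $(1+\eps')$ factors, coming both from the layered $h$-SSSP circuit and from the $(1+\eps')^{\Theta(\log n)}$ slack in Theorem~\ref{thm:weighted-hubs-simple}, really do collapse to a single $(1+\eps)$ after the rescaling $\eps'=\Theta(\eps/\log n)$, with the hop parameters $h$ correspondingly enlarged by a $\Theta(\log n)$ factor so that the approximate hub set is still "deep enough" to be hit by the blocker sets.
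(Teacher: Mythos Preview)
Your proposal is correct and follows essentially the same approach as the paper: replace the exact depth-$d/2$ trees by $(1+\eps')$-approximate depth-$3d$ trees via Lemma~\ref{lem:hsssp-ext}, invoke Theorem~\ref{thm:weighted-hubs-simple} to obtain a $(1+\eps')^p$-approximate $d'$-hub set with $d'=\Theta(d\log n)$, add the extra layer of $h$-SSSP instances with $h=d'$, increase the hop parameters in the shortcut layers to $d'+1$, and rescale $\eps'=\eps/(2p+8)$ so that the final $(1+\eps')^{p+4}$ factor becomes $(1+\eps)$. The only minor slip is in your description of the edge weights of $A$: they should come from the $h=d'$ instances rather than from the depth-$3d$ trees $\treecol^{\tto}$ (since the approximate hub set guarantees a $(H_{d'},d')$-covered path, not a $(H_{d'},3d)$-covered one), but you do set up those instances and the rest of the accounting is right.
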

\begin{proof}
  The total update time is
  $$O\left(m\left(\frac{n}{d}\right)^2\log^5{n}\log(nW)/\eps'+mnd\log^2{n}\log(nW)/\eps'\right)=O\left(mn\left(\frac{n}{d^2}\log^3{n}+d\right)\log^3{n}\log(nW)/\eps\right).$$
  Hence, the described algorithm solving the restricted version
  of the problem is asymptotically the most
  efficient if $d=n^{1/3}\log{n}$.
  By applying Lemma~\ref{lem:op-reduction}, we obtain the desired bound.
\end{proof}

\subsection{Bernstein's Partially-Dynamic APSP}

Let $\alpha=2(\lceil\log_2{n}\rceil+1)$ and set $q=\lceil\log_{\alpha}{n}\rceil$.
Let $a_0=n$ and $a_i=\alpha^{q-i}$ for $i=1,\ldots,q$.
By increasing $n$ by at most a factor of $2$, we may assume
that $a_{i+1}\mid a_i$ for all $i>1$.
Again $p=\lceil\log_2{n}\rceil+1$, so $\alpha=2p$.
Similarly as in Section~\ref{sec:verification}, let $A_i$ be a random
$a_i$-subset of $V$
and let $d_i=z\cdot \frac{n}{a_{i+1}}\lceil\ln{n}\rceil$,
where $z$ is such a constant that, with high probability,
$A_i$ is a $(\treecol,z\cdot \frac{n}{2a_{i}}\lceil\ln{n}\rceil)=(\treecol,\frac{d_{i-1}}{2})$-blocker
set for any fixed collection $\treecol$ of trees over $V$ such that $|\treecol|=O(mn/\eps')$,
where $\eps'$ is to be set later.

We proceed similarly as in Section~\ref{sec:verification},
and maintain, for each $i=1,\ldots,q$, collections $\treecol^{\tfrom}_i$ and $\treecol^{\tto}_i$
of depth-$3d_{i-1}$ $(1+\eps')$-approximate
shortest path trees from all vertices of $A_{i-1}$ in both $G$ and $\rev{G}$.
We now prove
that as long as $A_i$ remains a $(\treecol^\tfrom_i\cup\treecol^\tto_i,\frac{d_{i-1}}{2})$-blocker set, it is
a $(1+\eps')^{ip}$-approximate $d_i$-hub set of both $G$ and $\rev{G}$.

Clearly, since $A_0=V$, $A_0$ is a $(1+\eps')^0$ approximate $d_0$ hub set of both $G$ and $\rev{G}$.
Now suppose $i\geq 1$.
We apply Corollary~\ref{cor:hubs-inductive-appr} with $H^\eps_d=A_{i-1}$, $q=(i-1)p$, 
$d=d_{i-1}$, $\treecol^\tfrom=\treecol^\tfrom_i$, $\treecol^\tto=\treecol^\tto_i$, and $B=A_i$,
and obtain that $A_i$ is a $(1+\eps')^{p+(i-1)p}=(1+\eps')^{ip}$-approximate $2d_{i-1}p=d_ip\cdot \frac{2a_{i+1}}{a_i}=d_i\frac{2p}{\alpha}=d_i$-hub set
of both $G$ and $\rev{G}$.

If we additionally store the individual trees of
$\treecol^{\tfrom}_i\cup \treecol^{\tto}_i$
in data structures of Lemma~\ref{lem:tree-maintain} (first
splitting the trees into maximal subtrees of depth $\frac{d_{i-1}}{2}$,
as in Definition~\ref{def:blocker-appr}),
then the hub sets $A_i$ can be verified
in
\begin{align*}
  O\left(\sum_{i=1}^q a_{i-1}\cdot md_{i-1}\log^2{n}\log{(nW)}/\eps'\right)&=O\left(nm\log^3{n}\log{(nW)}\sum_{i=1}^q\frac{a_{i-1}}{a_i}/\eps'\right)\\
  &=O(nm\log^5{n}\log{(nW)}/\eps')
\end{align*}
total time.

Let $x_i$ be such that Bernstein's algorithm maintains the $(1+\eps')^{x_i}$-approximate
distance estimates $\dist'_i(u,v)$ for $(u,v)\in (A_i\times V)\cup (V\times A_i)$
using the $(1+\eps')^{x_{i+1}}$-approximate distance estimates $\dist'_{i+1}(\cdot,\cdot)$
and the $(1+\eps')^{(i+1)p}$-approximate hub set $A_{i+1}$.
One can easily show
that the estimates $\dist'_i(\cdot,\cdot)$ are in fact
$(1+\eps')^{x_{i+1}+(i+1)p+1}$-approximate.
Hence, we have $x_q=1$ and $x_i=x_{i+1}+(i+1)p+1$ for all $i<q$,
so we conclude that $x_i=1+\sum_{j=i+1}^q jp$
and therefore $x_0=\Theta(q^2p)$.
Consequently, the final estimates are $(1+\eps')^{\Theta(q^2p)}=(1+\eps')^{\Theta(\log^3{n})}$-approximate.

Recall from Section~\ref{sec:bernstein}, that given hub sets, Bernstein's algorithm
runs in 
\begin{align*}
  O\left(\sum_{i=0}^{q-1} a_i\cdot d_{i+1}m\log{n}\log{(nW)}/\eps'\right)&=O\left(nm\log^2{n}\log{(nW)}\sum_{i=0}^{q-1}\frac{a_i}{a_{i+2}}/\eps'\right)\\
  &=O(nm\log^5{n}\log{(nW)}/\eps')
\end{align*}
total time.
By setting $\eps'=\Theta(\eps/\log^3{n})$, we get
$O(nm\log^8{n}\log{(nW)}/\eps+\Delta)$ total update time
with high probability.
We have thus proved the following theorem.
\begin{theorem}\label{t:bernstein}
Let $G$ be a weighted directed graph. There exists a partially-dynamic
  algorithm maintaining $(1+\eps)$-approximate distance estimates between all
  pairs of vertices of $G$ in
  $O(mn\log^{8}{n}\log{(nW)}/\eps+\Delta)$
  total time.
The algorithm is Las Vegas randomized and assumes an oblivious adversary.
\end{theorem}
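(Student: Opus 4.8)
The plan is to combine Bernstein's partially-dynamic $(1+\eps)$-approximate APSP algorithm (recalled in Section~\ref{sec:bernstein}) with the reliable, \emph{approximate} hub-set maintenance of Sections~\ref{sec:verification}~and~\ref{sec:weighted}, much as Theorem~\ref{t:hubs-exp} was used in the unweighted case. First I would fix a geometric ladder of sampled hub sets: set $\alpha=2(\lceil\log_2 n\rceil+1)=2p$, $q=\lceil\log_\alpha n\rceil$, $a_0=n$, $a_i=\alpha^{q-i}$, and (after at most doubling $n$) assume $a_{i+1}\mid a_i$; let each $A_i$ be a uniformly random $a_i$-subset of $V$, and put $d_i=z\cdot\frac{n}{a_{i+1}}\lceil\ln n\rceil$ for a large enough constant $z$. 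For each $i$ I would maintain, using the $h$-SSSP structure of Theorem~\ref{thm:hsssp} extended as in Lemma~\ref{lem:hsssp-ext}, the collections $\treecol^\tfrom_i,\treecol^\tto_i$ of $(1+\eps')$-approximate shortest path trees up to depth $3d_{i-1}$ from every vertex of $A_{i-1}$ in $G$ and $\rev{G}$; each such tree, split into maximal depth-$\tfrac{d_{i-1}}{2}$ subtrees as in Definition~\ref{def:blocker-appr}, is additionally stored in a structure of Lemma~\ref{lem:tree-maintain}, so that whether $A_i$ is a $(\treecol^\tfrom_i\cup\treecol^\tto_i,\tfrac{d_{i-1}}{2})$-blocker set can be checked after every update.

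The correctness core is an induction on $i$ via Corollary~\ref{cor:hubs-inductive-appr}: $A_0=V$ is trivially an exact $d_0$-hub set, and if $A_{i-1}$ is a $(1+\eps')^{(i-1)p}$-approximate $d_{i-1}$-hub set of $G$ and $\rev{G}$, then applying the corollary with $H_d^\eps=A_{i-1}$, $q\mapsto(i-1)p$, $d\mapsto d_{i-1}$, $B=A_i$ shows that, as long as $A_i$ is a blocker set of the maintained trees, $A_i$ is a $(1+\eps')^{ip}$-approximate $2d_{i-1}p=d_i$-hub set of both graphs (using $2d_{i-1}p=d_i\cdot\frac{2p}{\alpha}$). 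Since only $O(mn/\eps')=\poly{n}$ distinct trees ever occur over the whole update sequence, Lemma~\ref{lem:random-subset} guarantees each $A_i$ stays a valid blocker set of all of them w.h.p., so verification never fires on a genuinely valid configuration; the total cost of the $h$-SSSP instances and the Lemma~\ref{lem:tree-maintain} structures telescopes to $O\big(\sum_i a_{i-1}\cdot m d_{i-1}\log^2 n\log(nW)/\eps'\big)=O(nm\log^5 n\log(nW)/\eps')$ because $\sum_i a_{i-1}/a_i=O(\log n)$.

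On top of these reliable hubs I would run Bernstein's algorithm exactly as in Section~\ref{sec:bernstein}, which needs only the sets $A_i$ and proceeds deterministically given them; its running time, given hubs, telescopes to $O(nm\log^5 n\log(nW)/\eps')$ by the same $\sum a_i/a_{i+2}=O(\log n)$ argument. The subtle point is the compounded approximation error: writing $x_i$ for the exponent such that the estimates $\dist'_i(\cdot,\cdot)$ on pairs touching $A_i$ are $(1+\eps')^{x_i}$-approximate, one gets $x_q=1$ and $x_i=x_{i+1}+(i+1)p+1$ (the $(i+1)p$ term being the approximation factor of the hub set $A_{i+1}$, the extra $+1$ the single concatenation step in Bernstein's recursion), hence $x_0=1+\sum_{j=1}^{q} jp=\Theta(q^2 p)=\Theta(\log^3 n)$. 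Choosing $\eps'=\Theta(\eps/\log^3 n)$ brings the final estimates down to $(1+\eps)$-approximate while inflating the running time by only polylog factors, giving total time $O(nm\log^8 n\log(nW)/\eps+\Delta)$.

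Finally, to make the algorithm Las Vegas I would draw fresh $A_0,\dots,A_q$ and restart the whole construction whenever verification of some $A_i$ fails; since this has low probability under an oblivious adversary, w.h.p. no restart ever happens and the claimed bound holds. The main obstacle I expect is precisely the error-compounding bookkeeping: each of the $\Theta(\log n)$ layers loses a factor $(1+\eps')^{\Theta(\log n)}$ in the blocker-set-to-hub-set step (Theorem~\ref{thm:weighted-hubs-simple}/Corollary~\ref{cor:hubs-inductive-appr}), and Bernstein's own recursion multiplies the error layer by layer, so one must check carefully that the resulting exponent is $\Theta(\log^3 n)$ rather than something larger and that rescaling $\eps'$ by $\poly\log n$ suffices; a secondary subtlety is bounding the number of distinct Even-Shiloach/$h$-SSSP trees by $\poly{n}$ so that the sampled blocker sets remain valid throughout.
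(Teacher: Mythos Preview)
Your proposal is correct and follows essentially the same approach as the paper: the same geometric ladder with $\alpha=2p$, the same inductive application of Corollary~\ref{cor:hubs-inductive-appr} to certify each $A_i$ as a $(1+\eps')^{ip}$-approximate $d_i$-hub set, the same blocker-set verification via Lemma~\ref{lem:tree-maintain} on the split trees, the same error-compounding recurrence $x_i=x_{i+1}+(i+1)p+1$ giving $x_0=\Theta(q^2p)=\Theta(\log^3 n)$, and the same rescaling $\eps'=\Theta(\eps/\log^3 n)$ with a Las Vegas restart. Your bookkeeping and the telescoping running-time sums match the paper's.
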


\end{document}